\newcommand\dblint[2]{\int_{#1}\int_Q #2 \ \mathrm{d}y\mathrm{d}x}
\newcommand\dblintn[2]{\int_{#1}\int_{NQ} #2 \mathrm{d}y\mathrm{d}x}
\newcommand\mv[1]{\langle #1 \rangle}
\newcommand\twoscale{\overset{2}{\rightharpoonup}}
\newcommand\ep{\varepsilon}
\newcommand\pd[2]{\tfrac{\partial #1}{\partial #2}}
\newcommand\integral[2]{\int_{#1}#2}
\renewenvironment{thebibliography}[1]{\begin{list}{\arabic{enumi}.}
{\usecounter{enumi}\setlength{\parsep}{0pt}}}{\end{list}}
\newtheorem{thm}{Theorem}[section]
\newtheorem{prop}{Proposition}[section]
\newtheorem{lem}{Lemma}[section]
\newtheorem{defn}{Definition}[section]
\newtheorem{rem}{Remark}[section]
\newtheorem{cor}{Corollary}[section]
\numberwithin{equation}{section}
\newcommand{\ZZ}{{\mathbb Z}}
\newcommand{\D}{{\mathcal D}}
\newcommand{\RR}{{\mathbb R}}
\newcommand{\CC}{{\mathbb C}}
\newcommand{\B}{{\mathcal B}}
\newcommand{\NN}{{\mathbb{N} }}
\def\H{{\mathcal H}}
\newcommand{\e}{\varepsilon}
\newcommand{\norm}[1]{\Vert#1\Vert}
\def\G{{\cal G}}
\def\A{{\cal A}}
\def\B{{\cal B}}
\def\D{{\cal D}}
\def\G{{\cal G}}
\def\H{{\cal H}}
\def\CC{\mathbb{C}}
\def\NN{\mathbb{N}}
\def\RR{\mathbb{R}}
\def\ZZ{\mathbb{Z}}
\def\RR{\mathbb{R}}
\def\a{\alpha}
\def\b{\beta}
\def\div{{\rm div }\hspace{2pt}}
\title{On band gaps in photonic crystal fibers}
\author{Shane Cooper, Ilia Kamotski and Valery Smyshlyaev}
\begin{document}
\maketitle
\begin{abstract} 
We consider the Maxwell's system for a periodic array of dielectric `fibers' embedded into a `matrix', with respective electric permittivities $\epsilon_0$ and $\epsilon_1$, which serves as a model for cladding in photonic crystal fibers (PCF). The interest is in describing admissible and forbidden (gap) pairs $(\omega,k)$ of frequencies $\omega$ and propagation constants $k$ along the fibers, for a Bloch wave solution on the cross-section. We show that, for ``pre-critical'' values of $k(\omega)$ i.e. those just below $\omega (\min\{\epsilon_0,\epsilon_1\}\mu)^{1/2}$ (where $\mu$ is the magnetic permeability assumed constant for simplicity), the coupling specific to the Maxwell's systems leads to a particular partially degenerating PDE system for the axial components of the electromagnetic field. Its asymptotic analysis allows to derive the limit spectral problem where the fields are constrained in one of the phases by Cauchy-Riemann type relations. We prove related spectral convergence. We finally give some examples, in particular of small size ``arrow'' fibers ($\epsilon_0>\epsilon_1$) where the existence of the gaps near appropriate ``micro-resonances'' is demonstrated by a further asymptotic analysis. 
 
\end{abstract}


\section{Problem formulation and main result}
\label{secPr}

We consider the Maxwell's system 
\begin{equation}
\label{pceq:pf1}
\begin{aligned}
\nabla \times \hat{H} &= - \, \epsilon\, \pd{\hat{E}}{t} \\
\nabla \times \hat{E} &=  \, \mu\, \pd{\hat{H}}{t}, 
\end{aligned}
\end{equation}
where the electric permittivity  $\epsilon$ and magnetic permeability $\mu$ adopt two different sets of constant values in the fibers  along the $x_3$ direction which are positioned periodically in the cross-sectional $(x_1,x_2)$-plane, and in the surrounding matrix. 

More precisely, let $Q : = [0,1)^2$ be the reference periodic cell, $Q_0$ be an open bounded  subset of $Q$, 
$\overline{Q_0}\subset Q$, with sufficiently smooth boundary $\Gamma$, and $Q_1 : = Q \backslash \overline{Q_0}$. 
Let $\chi_i$ denote the characteristic function of the $Q$-periodically extended $Q_i$, $i=0,1$. Then 
\begin{align*}
\epsilon(\hat y) & = \epsilon_0 \chi_0(y_1,y_2) + \epsilon_1 \chi_1(y_1,y_2), & \mu(\hat y) & = \mu_0 \chi_0(y_1,y_2) +\mu_1 \chi_1(y_1,y_2), & \hat y= (y_1,y_2,y_3) \in \RR^3, 
\end{align*} 
where $\epsilon_i$, $\mu_i$, $i=0,1$, are positive constants. 

A Bloch wave type solution to \eqref{pceq:pf1} is sought in the form 
\begin{equation} 
\hat{E}=e^{{\rm i}(\omega t + ky_3)}E(y_1,y_2), \ \ \ \  \hat{H}=e^{{\rm i}(\omega t + ky_3)}H(y_1,y_2).  
\label{ehy1y2}
\end{equation} 
The interest is in describing the pairs $(\omega, k)$ ($\omega\geq 0$ is the frequency and $k\geq 0$ is the 
``propagation constant'') for which there exists a non-trivial solution of the form \eqref{ehy1y2} with $E(y_1,y_2)$ and $H(y_1,y_2)$ quasi-periodic 
in $y=(y_1,y_2)$. 

Upon substituting \eqref{ehy1y2} into \eqref{pceq:pf1}, we find that $E=(E_1,E_2,E_3)$, $H=(H_1,H_2,H_3)$ necessarily satisfy  the following system of equations 
\begin{eqnarray}
H_{3,2} - {\rm i}kH_2 &=& -{\rm i} \omega \epsilon E_1 \label{3} \\
{\rm i}kH_1 - H_{3,1} &=& -{\rm i} \omega {\epsilon} E_2 \label{4} \\
H_{2,1} - H_{1,2} &=& -{\rm i} \omega {\epsilon} E_3  \label{5}
\end{eqnarray}
\begin{eqnarray}
E_{3,2} - {\rm i}kE_2 &=& {\rm i} \omega {\mu} H_1 \label{6} \\
{\rm i}kE_1 - E_{3,1} &=& {\rm i} \omega {\mu} H_2 \label{7} \\
E_{2,1} - E_{1,2} &=& {\rm i}\omega {\mu} H_3. \label{8} 
\end{eqnarray}
Resolving then \eqref{4} and \eqref{6} for $H_1$ and $E_2$, and \eqref{3} and \eqref{7} for $H_2$ and $E_1$,  
gives the following representations of the cross-sectional components $(E_1, E_2)$ and $(H_1, H_2)$ in terms 
of the ``axial'' components $E_3$ and $H_3$ 
\begin{eqnarray}
H_1 &=&  a^{-1} \big( {\rm i}k H_{3,1} - {\rm i} \omega {\epsilon} E_{3,2} \big), \label{9} \\ 
H_2 &=& {a}^{-1} \big( {\rm i}k H_{3,2} + {\rm i} \omega {\epsilon} E_{3,1}\big), \label{10} \\
E_1 &=& {a}^{-1} \big( {\rm i}k E_{3,1} + {\rm i} \omega {\mu} H_{3,2}\big), \label{11} \\
E_2 &=& {a}^{-1} \big( {\rm i}k E_{3,2} - {\rm i} \omega {\mu} H_{3,1}\big), \label{12}
\end{eqnarray}
assuming 
\begin{equation}
a(y) \,\, : = \,\, \omega^{2}\epsilon(y)\mu(y) \,\,- \,\, k^2 
\label{ay}
\end{equation} 
is not zero. Substituting then \eqref{9}--\eqref{12} into \eqref{5} and \eqref{8} reduces \eqref{3}--\eqref{8} to the following system for $E_3$ and $H_3$ only 
\begin{align}
\partial_{1}\left(\frac{{\rm i}k}{{a}} H_{3,2} \right) - \partial_{2}\left(\frac{{\rm i}k}{{a}} H_{3,1} \right) + \partial_{1}\left(\frac{{\rm i}\omega {\epsilon}}{{a}} E_{3,1} \right) + \partial_{2}\left(\frac{{\rm i}\omega {\epsilon}}{{a}} E_{3,2} \right) & = - {\rm i} \omega {\epsilon} E_3  \label{13} \\
\partial_{1}\left(\frac{{\rm i}k}{{a}} E_{3,2} \right) - \partial_{2}\left(\frac{{\rm i}k}{{a}} E_{3,1} \right) - \partial_{1}\left(\frac{{\rm i}\omega {\mu}}{{a}} H_{3,1} \right) - \partial_{2}\left(\frac{{\rm i}\omega {\mu}}{{a}} H_{3,2} \right) & =  {\rm i} \omega {\mu} H_3. \label{14}
\end{align}

Setting $u:=(E_3,H_3)=(u_1,u_2)$, multiplying \eqref{13} and \eqref{14} by smooth test functions ${\rm i}\omega\phi_1$ and $(-{\rm i}\omega\phi_2)$ respectively, adding up  and integrating over $y=(y_1,y_2)$ gives upon integration by parts the following equivalent weak formulation: 
\begin{equation}
\label{waveguideform}
\b(u,\phi) = \omega^2 \integral{\mathbb{R}^2}{ \rho u \cdot \overline{\phi}}, \,\,  \quad \forall \phi=(\phi_1,\phi_2) \in [C^{\infty}_{0}(\mathbb{R}^2)]^2.
\end{equation}
Here the bilinear form $\b$ 
is given by
\begin{multline}
\label{quadform}
\b(u,\phi) : = \int_{\mathbb{R}^2}\frac{\omega^2}{a}\Big( \epsilon \nabla  u_1 \cdot \overline{\nabla \phi_1} + \mu \nabla u_2 \cdot \overline{\nabla \phi_2} \Big) + \frac{k\omega}{a}\Big(  \left\{ u_1,\overline{\phi_2} \right\} - \left\{ u_2,\overline{\phi_1} \right\}  \Big),
\end{multline}
where $\left\{ v,w \right\} : = v_{,1}w_{,2} - w_{,1}v_{,2}$, 
$a = \omega^2\epsilon\mu - k^2$ as above,
\begin{align}
\label{electromagprop}
\epsilon & := \epsilon(y) = \epsilon_0 \chi_0(y) + \epsilon_1 \chi_1(y), & \mu & := \mu(y)= \mu_0 \chi_0(y) +\mu_1 \chi_1(y), & y \in \RR^2, 
\end{align}
and  
$$
\rho(y)\, :=\, \left(\begin{matrix} \epsilon(y) & 0 \\ 0 & \mu(y) \end{matrix}\right) \,=\,\chi_0(y) \left( \begin{matrix} \epsilon_0 & 0 \\ 0 & \mu_0 \end{matrix} \right) + \chi_1(y) \left( \begin{matrix} \epsilon_1 & 0 \\ 0 & \mu_1 \end{matrix} \right).
$$
Notice that the form \eqref{quadform} is symmetric, i.e. $\beta(\phi,u)=\overline{\beta(u,\phi)}$, as follows by direct inspection. 
Noticing further that, for any $\gamma>0$, 
\[
\left\vert\left\{ u_1,\overline{u_2} \right\} - \left\{ u_2,\overline{u_1} \right\}\right\vert\,\leq 
\gamma\vert\nabla u_1\vert^2+\gamma^{-1}\vert \nabla u_2\vert^2, 
\]
it is readily seen that if  $\omega,k$ satisfy 
\begin{equation}
\label{positivitycond}
k^2 < \omega^2\text{min}\{\epsilon(y)\mu(y)\}\,=\,\text{min} 
\{\epsilon_0\mu_0, \epsilon_1\mu_1\},
\end{equation}
then $\b$ is coercive on $[H^1(\RR^2)]^2$; namely, there exists a constant $\nu >0$ such that, 
\begin{equation}
\label{positivitycond2}
\b(u,u) \ge \nu \norm{u}_{[H^1(\mathbb{R}^2)]^2}^2, \quad \forall u \in [H^1(\mathbb{R}^2)]^2.
\end{equation}

Henceforth,  we shall consider a `non-magnetic' photonic crystal fiber, i.e. we set the magnetic permeability $\mu_0 =\mu_1=\mu$ a constant, and we assume that $\epsilon_0 > \epsilon_1>0$ in \eqref{electromagprop}. We shall consider the set of pairs $(\omega^2, k^2)$ for which the problem \eqref{waveguideform} admits a non-trivial solution when $k$ approaches from below the critical line  
$k = \omega \left(\epsilon_1\mu\right)^{1/2}$. (This line corresponds to the dispersion relation in the matrix - the wavenumber for a plane wave - which appears exactly where $\b$ loses its coercivity.) 
More precisely, for a fixed small parameter $\ep >0$,  we can say that \eqref{waveguideform} admits a non-trivial solution for the pair $(\omega^2, \omega^2\mu(\epsilon_1 - \ep^2))$   if $\lambda:=\omega^2$ belongs to the spectrum $\sigma_\ep$ of the self-adjoint operator $\B^\ep$ 
generated by the bilinear form $\b^\ep : [H^1(\RR^2)]^2 \times [H^1(\RR^2)]^2 \rightarrow \CC$ given by \eqref{quadform}. This form can be conveniently represented as 
\begin{equation}
\label{19}
\b^\ep(u,v) = \integral{\mathbb{R}^2}{ A^\ep \nabla u \cdot \overline{\nabla v}}  = \integral{\mathbb{R}^2}{ A^\ep_{ijpq}  u_{p,q}\overline{v}_{i,j} }.
\end{equation}
Here $A^\ep(y) = \ep^{-2}\chi_1(y) A^\ep_{1} + \left(\epsilon_0 - \epsilon_1 + \ep^2\right)^{-1} \chi_0(y)A^\ep_{0}$ is easily found by rearranging \eqref{quadform} upon substituting $k^2 = \omega^2 \mu ( \epsilon_1 - \ep^2)$; see \eqref{tensor2} below. 
Then the domain and the range of the self-adjoint operator $\B^\ep$ consist of all $u\in [H^1(\RR^2)]^2$ and $f\in [H^1(\RR^2)]^2$ respectively, 
such that ($\B^\ep u = f$) 
$$
\b^\ep(u,\phi) = \int_{\RR^2} \rho f \cdot \overline{\phi}, \quad \forall \phi \in [C^\infty_0(\RR^2)]^2.
$$

Notice that, for a fixed $\varepsilon>0$, the operator $\B^\ep$ and hence its spectrum $\sigma_\ep$ depend on the spectral parameter 
$\lambda=\omega^2$, cf. \eqref{quadform}, \eqref{ay}, hence one generally deals with an operator pencil. We will see however that, 
as $\ep \to 0$, it asymptotically bahaves as a conventional spectral problem (although for a ``partially degenerating'' operator). 
 
So, for $\ep$ converging to zero, we aim to characterise the (Floquet-Bloch) spectrum $\sigma_\ep$. This brings us to our main result
\begin{thm}
\label{mainthm1}
Let $\sigma_\ep$ be the spectrum of the operator $\B^\ep$ generated by the bilinear form $\b^\ep$, see \eqref{19}. Then the set $\sigma_\ep$ converges to the set
$$
\sigma_0 =  \bigcup_{\theta \in [0,1)^2} \sigma (\theta)
$$
as $\ep \rightarrow 0$ in the sense of Hausdorff. 
Here $\sigma(\theta)$ is the set of $\lambda_k(\theta)$, $k \in \NN$, such that there exists a non-zero solution $u \in V(\theta)$ to 
\begin{multline}
\int_{Q}   \nabla u_1 \cdot \overline{\nabla \phi_1} + \gamma^{-1} \Big( \div{ u} \cdot \overline{ \div{\phi}} + \div{ u^\perp} \cdot \overline{ \div{ \phi^\perp}} \Big)
= \lambda_k(\theta) \int_{Q} \epsilon_1 \left( u \cdot \overline{\phi}  + \gamma \chi_0 u_1 \overline{ \phi_1} \right), \\ \qquad \forall \phi \in V(\theta),  \label{122} 
\end{multline}
where  $\gamma : = {\epsilon_0}/{\epsilon_1} - 1>0$ and the unitary mapping ${}^\perp: \RR^2 \rightarrow \RR^2$ is defined by $a^\perp = (-a_2, a_1)$; 
\begin{equation}
\label{emspaceVn}
V(\theta) = \left\{ v \in [H^1_{\theta}(Q)]^2 : \text{ $\div{v} = 0$ and  $\div{v^{\perp}} = 0$ in  $Q_1$}\right\}, 
\end{equation}
where{\footnote{To be precise, $H^1_{\theta}(Q)$ is taken to be the closurer of smooth $\theta$-quasiperiodic  functions in the standard $H^1(Q)$ norm.}} $H^1_{\theta}(Q)$ is the space of  $H^1(Q)$ functions  that are $\theta$-quasiperiodic:  $u(y)=e^{i 2 \pi  \theta \cdot y} v(y)$ for some $v \in H^1_{\#}(Q)$.
\end{thm}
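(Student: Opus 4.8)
The plan is to establish the Hausdorff convergence $\sigma_\ep \to \sigma_0$ via the standard two-sided scheme for spectral convergence of variational problems, adapted to the present partially degenerating, $\ep$-dependent (pencil) setting, combined with a Floquet--Bloch decomposition and two-scale convergence. The structural input is the explicit form of the coefficient tensor: after substituting $k^2 = \omega^2\mu(\epsilon_1-\ep^2)$ one has $a = \omega^2\mu\epsilon_1\gamma$ in $Q_0$ (uniformly nondegenerate, of order $1$) and $a = \omega^2\mu\ep^2$ in $Q_1$, so that $A^\ep$ blows up like $\ep^{-2}$ on the matrix phase $Q_1$. This is precisely the ``double-porosity''/partially-degenerate type scaling: bounded energy $\b^\ep(u^\ep,u^\ep)\le C$ forces $\nabla u^\ep \to 0$ strongly in $Q_1$, i.e. in the limit $u$ restricted to the (connected) matrix is ``frozen'' up to the divergence-free constraints $\div v = \div v^\perp = 0$ that encode the Cauchy--Riemann relations; on the fibre phase $Q_0$ the energy stays genuinely elliptic and yields the $\int_Q \nabla u_1\cdot\overline{\nabla\phi_1}$ term plus the $\gamma^{-1}(\div u\,\overline{\div\phi}+\div u^\perp\,\overline{\div\phi^\perp})$ terms. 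I would first make this rearrangement precise (stating the formula for $A^\ep_0, A^\ep_1$, i.e. \eqref{tensor2}) and record the compactness/a priori bounds: any sequence with $\b^\ep(u^\ep,u^\ep)+\|u^\ep\|^2\le C$ has, along a subsequence, $u^\ep \twoscale u_0(y,z)$ with a two-scale / Floquet limit, with the $Q_1$-gradient bound giving the constraint $u_0(\cdot,z)\in V(\theta)$ for a.e. Bloch parameter.

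Next I would prove the two halves of Hausdorff convergence. \textbf{Lower semicontinuity of the limit spectrum} ($\sigma_0 \subseteq \liminf \sigma_\ep$): given $\lambda\in\sigma(\theta)$ with eigenfunction $u\in V(\theta)$, I would construct an approximate eigenfunction (recovery sequence) $u^\ep$ for $\B^\ep$ with $\|(\B^\ep-\lambda)u^\ep\| \to 0$ and $\|u^\ep\|$ bounded below, then invoke the characterisation of the spectrum of a self-adjoint operator via approximate eigenfunctions. The construction uses the $\theta$-quasiperiodic ansatz $u^\ep(y)=e^{i2\pi\theta\cdot y/\ep}\big(v(y/\ep)+\ep\, w(y/\ep,\dots)\big)$ with a first-order corrector $w$ chosen to absorb the $\ep^{-2}$ term on $Q_1$ — here the constraint $v\in V(\theta)$ is exactly what makes the leading $\ep^{-2}$ contribution vanish, and the corrector equation is solvable on $Q_1$ precisely because of the Cauchy--Riemann/divergence-free structure. \textbf{Upper semicontinuity} ($\limsup\sigma_\ep \subseteq \sigma_0$): take $\lambda_\ep\in\sigma_\ep$, $\lambda_\ep\to\lambda$; using singular-sequence / Weyl-sequence characterisation pick $u^\ep$ with $\|u^\ep\|=1$, $\|(\B^\ep-\lambda_\ep)u^\ep\|\to 0$; after Floquet decomposition concentrate (up to subsequence) at a Bloch parameter $\theta$, pass to the two-scale limit $u_0\in V(\theta)$, and show $u_0$ satisfies \eqref{122} with eigenvalue $\lambda$, hence $\lambda\in\sigma(\theta)\subseteq\sigma_0$; one must also rule out $u_0\equiv 0$, which is where the extra ``mass'' term $\gamma\chi_0 u_1\overline{\phi_1}$ on the right side of \eqref{122} (coming from the $\ep^{-2}$-weighted part of $\rho$ through the rescaled spectral parameter in $Q_1$ versus $Q_0$, i.e. from $a$) enters and guarantees the limit norm is not lost.

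For the passage to the limit in the bilinear form I would test \eqref{19} against oscillating test functions $\phi^\ep(y)=e^{i2\pi\theta\cdot y/\ep}\big(\phi(y/\ep)+\ep\psi(y/\ep)\big)$ with $\phi\in V(\theta)$, split the integral over $Q_0$ and $Q_1$: on $Q_0$ the coefficients converge to constants and classical two-scale convergence gives the $\int_Q \nabla u_1\cdot\overline{\nabla\phi_1} + \gamma^{-1}(\cdots)$ terms (the block structure of $A^\ep_0$ producing exactly the $\div$ and $\div^\perp$ couplings between $u_1=E_3$ and $u_2=H_3$); on $Q_1$ the $\ep^{-2}$ blow-up is controlled because both $u^\ep$ and $\phi^\ep$ are (asymptotically) in the kernel of the degenerate quadratic form, so the $Q_1$ contribution passes to a finite limit — this cancellation is the technical heart of the argument. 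I expect the \textbf{main obstacle} to be precisely this last point: identifying the correct limit of the $\ep^{-2}$-singular term on the connected matrix phase $Q_1$, i.e. showing that the ``constrained'' space $V(\theta)$ (with the two scalar Cauchy--Riemann constraints $\div v = \div v^\perp = 0$, equivalently $v_1+iv_2$ holomorphic-type) is exactly the limit of the energy sublevel sets, that the corrector problem on $Q_1$ is solvable, and that no spurious spectrum is created or lost in the gap — the non-compactness of $\RR^2$ (handled by Floquet decomposition and uniformity in $\theta$) and the operator-pencil dependence on $\lambda$ (handled by noting $a$ depends on $\lambda$ smoothly and, after rescaling, the $\lambda$-dependence of $A^\ep$ and $\rho$ is regular as $\ep\to 0$) are the supporting difficulties.
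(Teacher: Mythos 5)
Your overall architecture (two-sided Hausdorff convergence, Floquet--Bloch reduction, two-scale convergence, the high-contrast $\ep^{-2}$ scaling on $Q_1$ forcing the Cauchy--Riemann constraints, the constrained space $V(\theta)$ as the limit form domain) matches the paper's, and your identification of the degenerate structure is accurate. But your route for the lower bound differs, and there are concrete gaps. For $\sigma_0\subseteq\liminf\sigma_\ep$ the paper does \emph{not} build quasimodes with correctors: it rescales to a high-contrast homogenisation problem, proves strong two-scale resolvent convergence using the machinery for partially degenerating systems (a priori bounds, a Generalised Weyl decomposition, and crucially Proposition \ref{prop:emgenflux} showing the two-scale flux operator $T$ vanishes identically, which is why the limit problem \eqref{122} contains no macroscopic derivatives at all -- a point your ``cancellation on $Q_1$'' remark does not explain); this only yields $\sigma(0)$, so the paper repeats the argument on multi-cells $NQ$ to get $\sigma(\A_N)$ for every $N\in\NN^2$, unfolds $\A_N$ into Bloch components $\sigma(j/N)$, and uses density of rationals together with continuity of $\lambda_k(\theta)$ in $\theta$ to recover all of $\bigcup_\theta\sigma(\theta)$. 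Your direct quasimode construction could in principle reach each $\sigma(\theta)$ without the $NQ$ detour, but it hinges on solvability of a corrector problem on $Q_1$ that is itself degenerate (the $\ep^{-2}$ part controls only $\div$ and $\div^\perp$, not the full gradient), and you give no argument for it; this is precisely the hard step your plan defers.

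Two further gaps. First, in the completeness direction you attribute the non-triviality of the limit ($u_0\not\equiv0$) to the mass term $\gamma\chi_0 u_1\overline{\phi_1}$; that is not the mechanism and would not work as stated. The correct argument (Theorem \ref{thm:spcom1}) is that after Floquet reduction the normalised eigenfunction lives on the \emph{compact} cell $Q$, is uniformly bounded in $[H^1(Q)]^2$ by the coercive part of the form, and hence converges \emph{strongly} in $[L^2(Q)]^2$ by Rellich, so the limit keeps norm one. Second, both halves of the proof require knowing that the constraint spaces $V(\theta)$ depend continuously on $\theta$: in the completeness step you must produce test functions $\phi^\ep\in V(\theta^\ep)$ converging strongly in $[H^1(Q)]^2$ to a given $\phi\in V(\theta)$ (your ansatz $e^{i2\pi\theta\cdot y/\ep}\phi(y/\ep)$ has the wrong quasi-periodicity and does not satisfy the constraints at level $\ep$), and the continuity of $\lambda_k(\theta)$ underpinning the band description of $\sigma_0$ rests on the same fact. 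This is non-trivial here because, unlike standard Bloch theory, the form \emph{domain} varies with $\theta$; the paper devotes Lemmas \ref{lem:spcom1}--\ref{lem:spcom3.1} to it, via the representation of elements of $V(\theta)$ as $\nabla a+\nabla^\perp b$ with $\Delta a,\Delta b$ supported in $\overline{Q_0}$ (plus constants when $\theta=0$, which is the delicate case). Your proposal does not identify this ingredient, and without it both the limit passage in the eigenvalue equation and the closure argument for $\sigma_0$ are incomplete.
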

\begin{rem}
We say that sets $\sigma_\ep$ converge to a set $\sigma_0$ as $\ep \rightarrow 0$ in the sense of Hausdorff when the following two conditions hold:
\begin{enumerate}[(i)]
\item For every $\lambda \in \sigma_0$ there exists $\lambda_\ep \in \sigma_\ep$ such that  $\lambda_\ep \rightarrow \lambda$ as $\ep \rightarrow 0$.
\item Suppose $\lambda_\ep \in \sigma_\ep$ such that $\lambda_\ep$ converges to some $\lambda$ as $\ep \rightarrow 0$, then  $\lambda \in \sigma_0$.
\end{enumerate}
Symbolically, we represent this convergence as follows
$$
\lim_{\ep \rightarrow 0} \sigma_\ep = \sigma_0.
$$
\end{rem}
\begin{rem}
\label{rem1}
For fixed $\theta \in [0, 1)^2$, the eigenvalues $\lambda_k(\theta)$ are ordered, and  repeated according to their multiplicity, as follows:
$$
0 \le \lambda_1 (\theta) \le \lambda_2 (\theta) \le \lambda_3 (\theta) \le \ldots, \qquad \lambda_k(\theta) \rightarrow \infty \ \text{as} \ k \rightarrow \infty.
$$
Moreover, for any fixed $k$ the eigenvalues $\lambda_k(\theta)$ are continuous with respect to $\theta$ (see Section \ref{secblochlim}, Lemma \ref{apb.1}), and one has
$$
\sigma_0 = \bigcup_{i=1}^{\infty} [\min_{\theta} \lambda_i (\theta),\max_{\theta} \lambda_i (\theta)].
$$
\end{rem}
\begin{cor}
\label{equivformoflimspec}
The set $\sigma_0$ also admits the following equivalent representation 
$$
\sigma_0 = \bigcup_{N \in \NN^2} \sigma\left( \A_N \right),
$$
where $\sigma\left( \A_N \right)$
 is the set of $\lambda_k^N(\theta)$, $k \in \NN$, such that there exists a non-zero solution $u \in V_N$ to 
\begin{multline}
\int_{NQ}   \nabla u_1 \cdot \overline{\nabla \phi_1} + \gamma^{-1} \Big( \div{ u} \cdot \overline{ \div{\phi}} + \div{ u^\perp} \cdot \overline{ \div{ \phi^\perp}} \Big)
= \lambda_k^N \int_{NQ} \epsilon_1 \left( u \cdot \overline{\phi}  + \gamma \chi_0 u_1 \overline{ \phi_1} \right), \\ \qquad \forall \phi \in V_N,\end{multline}
where
$$ 
\quad V_N : = \left\{ v \in [H^1_{\#}(NQ)]^2 : \ \text{ $\div{v} = 0$ and  $\div {v^{\perp}}$ = 0 in  $F_1 \cap NQ$}\right\}.
$$
Here $NQ : = [0, N_1) \times [0, N_2)$ for a given  multi-index $N \in \NN^2$ is a ``multi-cell'' of size 
$N_1\times N_2$, and $F_1$ is the $Q$-periodic extension of the set $Q_1$ to the whole space $\RR^2$.
\end{cor}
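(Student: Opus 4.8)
The plan is to recognise \eqref{122} as the $\theta$-fibre of the periodic problem on the multi-cell $NQ$ and to run the Floquet--Bloch (Gelfand) decomposition in the finite-translation direction. Write $G_N:=\ZZ^2/(N_1\ZZ\oplus N_2\ZZ)$ for the finite abelian group acting on the flat torus $NQ$ by integer translations; its character group is parametrised by $\Theta_N:=\{(j_1/N_1,\,j_2/N_2):\,0\le j_1<N_1,\ 0\le j_2<N_2\}\subset[0,1)^2$. The Gelfand transform provides a unitary identification $[L^2_{\#}(NQ)]^2\cong\bigoplus_{\theta\in\Theta_N}[L^2_\theta(Q)]^2$, and, at the level of the form domains, $[H^1_{\#}(NQ)]^2\cong\bigoplus_{\theta\in\Theta_N}[H^1_\theta(Q)]^2$. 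I would show that this identification block-diagonalises $\A_N$ into the direct sum over $\theta\in\Theta_N$ of the operators $\A(\theta)$ associated with \eqref{122}, so that $\sigma(\A_N)=\bigcup_{\theta\in\Theta_N}\sigma(\theta)$. Since every rational number in $[0,1)$ is of the form $j/m$, one has $\bigcup_{N\in\NN^2}\Theta_N=(\QQ\cap[0,1))^2$, whence $\bigcup_{N\in\NN^2}\sigma(\A_N)=\bigcup_{\theta\in\QQ^2\cap[0,1)^2}\sigma(\theta)$, and it remains only to pass from rational $\theta$ to arbitrary $\theta\in[0,1)^2$.

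For the block-diagonalisation there are three compatibilities to verify. First, the set $F_1$ is $\ZZ^2$-periodic, hence $G_N$-invariant, so on $NQ$ the coefficient $\chi_0$ and the pointwise side conditions ``$\div v=0$ and $\div v^\perp=0$ in $F_1\cap NQ$'' are preserved by the $G_N$-action; being local and first order, they are diagonalised componentwise by the Gelfand transform, which gives $V_N=\bigoplus_{\theta\in\Theta_N}V(\theta)$. Second, the two sesquilinear forms in \eqref{122} have integrands assembled from the $G_N$-invariant coefficients $1,\gamma^{-1},\epsilon_1,\gamma\chi_0$ and from the first-order local operations $u\mapsto\nabla u_1$, $u\mapsto\div u$, $u\mapsto\div u^\perp$; by the orthogonality of distinct characters under translation-averaging, both forms are block-diagonal for the decomposition, with $\theta$-block exactly the left- and right-hand forms of \eqref{122}. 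Third, on $V(\theta)$ the left-hand form is coercive on $[H^1_\theta(Q)]^2$ modulo constants --- it controls $\nabla u_1$ and, via the torus identity $\int_Q|\nabla u|^2=\int_Q|\div u|^2+\int_Q|\div u^\perp|^2$, also $\nabla u$ --- while the right-hand form is an $L^2$-type inner product and hence relatively compact. Consequently each $\A(\theta)$, and each $\A_N$, has purely discrete spectrum accumulating only at $+\infty$; this legitimises splitting the resolvent, and hence the spectrum, along the decomposition, and gives $\sigma(\A_N)=\bigcup_{\theta\in\Theta_N}\sigma(\theta)$.

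Finally, the inclusion $\bigcup_{N\in\NN^2}\sigma(\A_N)=\bigcup_{\theta\in\QQ^2\cap[0,1)^2}\sigma(\theta)\subseteq\sigma_0$ is immediate. For the reverse inclusion I would invoke Remark \ref{rem1}: for each fixed $k$, the band function $\theta\mapsto\lambda_k(\theta)$ is continuous on the compact, connected torus $[0,1)^2$ (Lemma \ref{apb.1}), so its range is the interval $[\min_\theta\lambda_k(\theta),\max_\theta\lambda_k(\theta)]$, and by density of $\QQ^2$ the values $\lambda_k(\theta)$ over rational $\theta$ are dense in that interval. Hence $\bigcup_{\theta\in\QQ^2\cap[0,1)^2}\sigma(\theta)$ is dense in $\sigma_0=\bigcup_{k\ge1}[\min_\theta\lambda_k(\theta),\max_\theta\lambda_k(\theta)]$. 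Since $\lambda_k(\theta)\to\infty$ as $k\to\infty$ uniformly in $\theta$, the bands are locally finite and $\sigma_0$ is closed; it is therefore the closure of the countable set $\bigcup_{N\in\NN^2}\sigma(\A_N)$, which is the asserted representation (the union on the right being read, as is customary, together with its limit points). The only point that demands genuine care is the bookkeeping of the second paragraph: that a single Gelfand transform simultaneously decouples the constraint defining $V_N$ and both of the partially degenerate quadratic forms. Once this is in place, the rest is the soft continuity-and-density argument sketched above.
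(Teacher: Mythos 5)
Your proposal is correct, but it takes a genuinely different route from the paper. The paper's own proof is a two-line sandwich: it combines Theorem \ref{mainthm1} with the two inclusions already established in the body of the paper, namely $\lim_{\ep\to0}\sigma_\ep\supset\overline{\bigcup_{N}\sigma(\A_N)}$ (which comes from strong two-scale resolvent convergence on the multi-cell $NQ$, \eqref{limspec01}) and $\overline{\bigcup_{N}\sigma(\A_N)}\supset\bigcup_{\theta}\sigma(\theta)$ (Lemma \ref{homlimitbloch1}, proved by explicitly extending a $j/N$-quasiperiodic eigenfunction to $NQ$ and testing against averaged test functions), so that $\overline{\bigcup_N\sigma(\A_N)}$ is squeezed between $\sigma_0$ and $\sigma_0$. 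In particular the paper never proves, and never needs, the reverse fibration $\sigma(\A_N)\subset\bigcup_{j}\sigma(j/N)$. You instead establish the exact block-diagonalisation $\A_N\cong\bigoplus_{\theta\in\Theta_N}\A(\theta)$ via the discrete Gelfand transform, which gives the full identity $\sigma(\A_N)=\bigcup_{\theta\in\Theta_N}\sigma(\theta)$, and then you conclude by density of rational $\theta$ together with the continuity of the band functions (Lemma \ref{apb.1}, Remark \ref{rem1}) and the closedness of $\sigma_0$. Your argument is self-contained on the ``limit'' side — it makes no use of the homogenisation results at all — at the cost of having to verify that the single Gelfand transform simultaneously decouples the constraint space $V_N$ and both partially degenerate forms; this verification is routine because $F_1$ and all coefficients are $\ZZ^2$-periodic and the constraints are local and first order, and you flag it appropriately. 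The paper's argument is shorter but parasitic on the full spectral-convergence machinery. One point in your favour: both proofs really yield $\sigma_0=\overline{\bigcup_N\sigma(\A_N)}$ rather than the literal unclosured union in the statement (the left-hand side is a union of nondegenerate bands, the right-hand side a countable set), and you make this closure explicit, whereas the paper's displayed chain silently carries the closure bar that is absent from the corollary's statement.
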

\begin{proof}
By Theorem \ref{mainthm1}, inclusions \eqref{limspec01} and \eqref{limspec02}, one has
$$
\bigcup_{\theta \in [0,1)^2} \sigma(\theta) = \lim_{\ep \rightarrow 0} \sigma_\ep \supset \overline{\bigcup_{N \in \NN^2} \sigma\left( \A_N\right) } \supset\bigcup_{\theta \in [0,1)^2} \sigma(\theta).
$$
\end{proof}
The `limit spectrum' $\sigma_0$ has a gap if two adjacent bands do not overlap: i.e. if for some $i \in \mathbb{N}$, one has 
$$\max_{\theta} \lambda_i (\theta) < \min_{\theta} \lambda_{i+1} (\theta).$$
An immediate consequence of Theorem \ref{mainthm1} is the following result
\begin{cor}
Assume that $I$ is a gap of $\sigma_0$. Then, there exists a set 
$\G \subset \left\{ (\omega^2, k^2)\in\RR^2: \, k^2<\omega^2\mu\epsilon_1\right\} $ such that for any pair $(\omega^2, k^2) \in \G$ equation \eqref{13}-\eqref{14} admits no non-trivial solution. Furthermore, one has  $ \partial{  \G} \cap \{ (\omega^2, \omega^2 \mu\epsilon_1)\ \vert \ \omega^2 \in I \}  \neq \emptyset$.
\end{cor}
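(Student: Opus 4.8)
The plan is to upgrade the one–dimensional spectral gap $I$ of the limit operator into a genuine two–dimensional forbidden region in the $(\omega^2,k^2)$–plane whose upper edge lies on the critical line $k^2=\omega^2\mu\epsilon_1$; the only real input will be the Hausdorff spectral convergence of Theorem~\ref{mainthm1}. First I would fix a nondegenerate compact subinterval $[a,b]\subset I$; since $I$ is a nonempty open interval and $0\in\sigma_0$ (the constant fields are admissible in $V(0)$ and solve \eqref{122} with eigenvalue $0$, so $0\notin I$), we may and do take $a>0$.

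The key step will be the claim: there exists $\ep_0>0$, which we may also take with $\ep_0^2<\epsilon_1$, such that
\[
\sigma_\ep\cap[a,b]=\emptyset\qquad\text{for all }\ep\in(0,\ep_0).
\]
I would prove this by contradiction: if it failed there would be $\ep_n\downarrow0$ and $\lambda_n\in\sigma_{\ep_n}\cap[a,b]$; compactness of $[a,b]$ gives a subsequence with $\lambda_{n_j}\to\lambda^\ast\in[a,b]$, and then Theorem~\ref{mainthm1} — precisely condition (ii) in the definition of Hausdorff convergence of the $\sigma_\ep$, applied along $\ep_{n_j}$ — forces $\lambda^\ast\in\sigma_0$, contradicting $[a,b]\subset I$ and $I\cap\sigma_0=\emptyset$.

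With $\ep_0$ in hand I would set
\[
\G:=\Big\{(\omega^2,k^2)\in\RR^2:\ a\le\omega^2\le b,\ \ \omega^2\mu(\epsilon_1-\ep_0^2)<k^2<\omega^2\mu\epsilon_1\Big\},
\]
which by construction satisfies $\G\subset\{(\omega^2,k^2):k^2<\omega^2\mu\epsilon_1\}$. Given $(\omega^2,k^2)\in\G$ (so $\omega^2\ge a>0$), put $\ep:=\big(\epsilon_1-k^2/(\omega^2\mu)\big)^{1/2}$; the defining inequalities of $\G$ give exactly $\ep\in(0,\ep_0)$ and $k^2=\omega^2\mu(\epsilon_1-\ep^2)$, so by the reduction carried out just before Theorem~\ref{mainthm1} the existence of a non-trivial solution of \eqref{13}--\eqref{14} (equivalently \eqref{waveguideform}) for this pair amounts, by the identification made when $\B^\ep$ was introduced, to $\omega^2\in\sigma_\ep$; but $\omega^2\in[a,b]$ and $\sigma_\ep\cap[a,b]=\emptyset$, so no non-trivial solution exists, which is the first assertion. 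For the boundary statement, the segment $S:=\{(\omega^2,\omega^2\mu\epsilon_1):\omega^2\in[a,b]\}$ lies in $\overline{\G}$ (let $k^2\uparrow\omega^2\mu\epsilon_1$ at fixed $\omega^2\in[a,b]$) while $S\cap\G=\emptyset$, hence $S\subset\partial\G$; and $[a,b]\subset I$ gives $S\subset\{(\omega^2,\omega^2\mu\epsilon_1):\omega^2\in I\}$, so $\partial\G\cap\{(\omega^2,\omega^2\mu\epsilon_1):\omega^2\in I\}\supset S\neq\emptyset$.

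I expect no genuinely hard analytic step: the argument is bookkeeping on top of Theorem~\ref{mainthm1}. The two points needing care are (a) that the "forbidden" information supplied by the theorem is attached only to the one–parameter family of curves $k^2=\omega^2\mu(\epsilon_1-\ep^2)$, so $\G$ must be assembled as the two–dimensional region these curves sweep out as $\ep$ runs over $(0,\ep_0)$, sandwiched between the critical line and the curve $\ep=\ep_0$; and (b) that "\eqref{13}--\eqref{14} admits no non-trivial solution" must be read as "$\omega^2\notin\sigma_\ep$", i.e. in the Floquet–Bloch sense in which $\B^\ep$ and its spectrum were defined. The only place where the substance of Theorem~\ref{mainthm1} is used is the compactness/contradiction argument producing the uniform gap $\ep_0$.
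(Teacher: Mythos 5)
Your proposal is correct and follows essentially the same route as the paper: the same compactness/contradiction argument via condition (ii) of the Hausdorff convergence in Theorem \ref{mainthm1} to produce a uniform $\ep_0$, and the same set $\G$ (your explicit description $\{a\le\omega^2\le b,\ \omega^2\mu(\epsilon_1-\ep_0^2)<k^2<\omega^2\mu\epsilon_1\}$ is precisely the union over $\ep\in(0,\ep_0)$ of the curves $k^2=\omega^2\mu(\epsilon_1-\ep^2)$ used in the paper). Your write-up is somewhat more explicit about the boundary assertion and the identification of pairs in $\G$ with values of $\ep$, but there is no substantive difference.
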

\begin{proof}
Suppose $\sigma_0$ has a gap, i.e. there exists an $i \in \NN$ such that $\max_\theta \lambda_i(\theta) < \min_\theta \lambda_{i+1}(\theta) $. Let us fix $a,b$ such that $\max_\theta \lambda_i(\theta) < a < b < \min_\theta \lambda_{i+1 }(\theta)$.

We now argue that there exists a $\ep_0 > 0$ such that  the set denoted by 
$$
\G = \bigcup_{\substack{\ep \\ 0<\ep < \ep_0}}  \left\{ \big(\omega^2, \omega^2\mu(\epsilon_1 - \ep^2)\big)\ \big\vert\  \omega^2 \in [a,b] \right\}
$$
satisfies the statements of the corollary. For if not, there exists a sequence $\ep \rightarrow 0$ for which each set $ \left\{ \big(\omega^2, \omega^2\mu(\epsilon_1 - \ep^2)\big)\ \big\vert\  \omega^2 \in [a,b] \right\}$ contains a pair $(\omega^2_\ep , \omega^2_\ep \mu( \epsilon_1 - \ep^2))$ such that \eqref{13}-\eqref{14} admits a non-trivial solution, i.e. $\lambda_\ep = \omega^2_\ep$ belongs to $\sigma_\ep$. Since $ \lambda_\ep \in [a,b] $, it has a subsequence which converges to some $\lambda \in [a,b]$. By Theorem \ref{mainthm1}, we conclude that $\lambda \in \sigma_0$ which contradicts the fact $ [a,b]$ is a gap in the spectrum $\sigma_0$.
\end{proof}
In Section \ref{examples} we prove the existence of gaps in the limit operator $\sigma_0$ for several subclasses of two-dimensional photonic crystal.
\section{Limit Bloch operators and some properties}
\label{secblochlim}
In this section we shall study some properties of the family of self-adjoint operators $\A(\theta)$, $\theta \in [0,1)^2$, associated with Theorem \ref{mainthm1}. In particular, we shall establish several important  properties concerning the continuity of these operators, their domains and their spectra with respect to the quasi-periodicity parameter $\theta$.

Suppose $\theta \in [0,1)^2$ and let $V(\theta)$ be defined by \eqref{emspaceVn}. 
\begin{rem}
$v \in V(\theta)$ is `equivalent' to $v$ solving the conjugate Cauchy-Riemann equations in $Q_1$: for $z= y_1 +i y_2$ if we define a complex valued $\theta$-quasiperiodic function $F(z) = v_1(y) - i v_2(y)$ then $F$ solves the Cauchy-Riemann equations in $Q_1$ if, and only if, $v \in V(\theta)$.
\end{rem}
Note that $V(\theta)$ is a closed subspace of $[H^1_\theta(Q)]^2$ and therefore a Hilbert space when equipped with the following equivalent $[H^1_\theta(Q)]^2$ norm
\begin{equation}
\label{normofV}
\norm{v}_{V(\theta)}^2  : = \int_Q \left\vert  v \right\vert^2 +  \int_Q \left\vert \div v \right\vert^2  +  \int_Q \vert \div v^\perp \vert^2.
\end{equation}
\begin{rem}
\label{remequ}
The equivalence of the norm \eqref{normofV} follows from standard Poincar\'{e} type inequalities and the following equality, which is a simple consequence of integration by parts and the quasiperiodicity of the functions in question, 
\begin{equation*}
\int_Q \vert \nabla v \vert^2 = \int_Q \vert \div v \vert^2 + \int_Q \vert \div v^\perp \vert^2 , \qquad \forall v \in [H^1_{\theta}(Q)]^2.
\end{equation*}
\end{rem}
Denoting by $\H_\theta$ the closure of $V(\theta)$ in\footnote{$L^2_\rho(Q)$ denotes the space of Lesbesgue measure functions $f$ such that $\rho f \cdot f$ is integrable for the matrix $\rho$ given by \eqref{rho}. Clearly, $L^2_\rho(Q)$ is equivalent to $[L^2(Q)]^2$. } $L^2_{\rho}(Q)$ we introduce the non-negative quadratic form $\a_\theta : \H_\theta \times \H_\theta \rightarrow \mathbb{R}$, defined by, cf. \eqref{122}, 
\begin{equation}
\label{biformtheta}
 \a_\theta(u,v) : = \int_{Q} \nabla u_1 \cdot \overline{\nabla v_1} + \gamma^{-1} \left( \div{ u} \cdot \overline{  \div v} + \div  u^\perp \cdot \overline{ \div v^\perp} \right).
\end{equation}
It is clear that $\a_\theta$ is closed in $\H_\theta$ due to the norm \eqref{normofV}. Therefore, $\a_\theta$ generates a unique self-adjoint operator $\A(\theta): \D(\theta) \rightarrow [L^2_\rho(Q)]^2$ whose domain $\D(\theta)$ is a dense subset of $V(\theta)$ and whose action $\A(\theta) u =f$ associates $f$ to the unique solution $u \in V(\theta)$ of
\begin{equation}
\a_\theta(u,\phi) 
 = \int_{Q} \epsilon_1 \left( f \cdot \overline{\phi}  + \gamma \chi_0 f_1 \overline{ \phi_1} \right)
=:\int_Q \epsilon_1\rho f\cdot \overline{\phi},  
\qquad \forall \phi \in V(\theta). 
\label{thetaprob}
\end{equation}
 As $H^1(Q)$ is compactly embedded into $L^2(Q)$ we find that the resolvent of $\A(\theta)$ is  compact and, in particular, the spectrum $\sigma(\theta)$ of  $\A(\theta)$ is discrete: it consists of countably many isolated eigenvalues of finite multiplicity which converge to infinity. We order the eigenvalues, allowing for  multiplicity, as follows:
$$
0 \le \lambda_1 (\theta) \le \lambda_2 (\theta) \le \lambda_3 (\theta) \le \ldots, \qquad \lambda_k(\theta) \rightarrow \infty \ \text{as} \ k \rightarrow \infty.
$$

An important distinct feature of the operators $\A(\theta)$ (as opposed to the usual Floquet-Bloch operators concerned with periodic elliptic PDEs in the whole space) is the non-trivial dependence on $\theta$ of the operator form domain $V(\theta)$. The statement of continuity of $\A(\theta)$ is therefore not a simple consequence of the Bloch-wave representation of functions belonging to $H^1_\theta(Q)$, and relies on establishing that the underlying space $V(\theta)$ is continuous with respect to $\theta$, see Lemma \ref{lem:spcom1}. We shall show that the continuity of $V(\theta)$ leads to the operators $\A(\theta)$ being continuous with respect to $\theta$ in the norm-resolvent sense, which in turn implies the continuity of $\lambda_k(\theta)$, see Lemma \ref{apb.1}.

In the rest of this section we shall introduce and prove the aforementioned results.
\begin{lem}[Continuity of $V(\theta)$]
\label{lem:spcom1}
Let $\theta^\ep \in [0,1)^2$ be such that $\theta^\ep \rightarrow \theta$ for some $\theta \in [0,1)^2$. Then, for each $\phi \in V(\theta)$ there exists $\phi^\ep \in V(\theta^\ep)$ such that $\phi^\ep \rightarrow \phi$ strongly in $[H^1(Q)]^2$ as $\ep \rightarrow 0.$
\end{lem}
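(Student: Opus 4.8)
The plan is to reduce everything to a statement about quasiperiodic functions by writing $\phi(y) = e^{i2\pi\theta\cdot y} v(y)$ with $v \in [H^1_\#(Q)]^2$, and then to correct the naive guess. Given $\phi \in V(\theta)$, the first candidate is $\psi^\ep(y) := e^{i2\pi(\theta^\ep-\theta)\cdot y}\phi(y) = e^{i2\pi\theta^\ep\cdot y}v(y)$, which lies in $[H^1_{\theta^\ep}(Q)]^2$ and converges to $\phi$ strongly in $[H^1(Q)]^2$ as $\ep\to 0$ (since multiplication by the smooth factor $e^{i2\pi(\theta^\ep-\theta)\cdot y}$, whose $C^1(\overline Q)$ norm tends to that of $1$, is continuous on $H^1$). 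The trouble is that $\psi^\ep$ will generally \emph{not} satisfy the two constraints $\div\psi^\ep = 0$ and $\div(\psi^\ep)^\perp = 0$ in $Q_1$, i.e. $\psi^\ep \notin V(\theta^\ep)$; indeed $\div\psi^\ep = i2\pi(\theta^\ep-\theta)\cdot e^{i2\pi(\theta^\ep-\theta)\cdot y}\phi$ in $Q_1$, which is $O(|\theta^\ep-\theta|)$ but nonzero. So the heart of the proof is a \emph{correction step}: subtract a small corrector $r^\ep \in [H^1_{\theta^\ep}(Q)]^2$ that fixes the constraint violation and whose $H^1$ norm is controlled by $\|\div\psi^\ep\|_{L^2(Q_1)} + \|\div(\psi^\ep)^\perp\|_{L^2(Q_1)} = O(|\theta^\ep-\theta|) \to 0$. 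Then $\phi^\ep := \psi^\ep - r^\ep \in V(\theta^\ep)$ and $\phi^\ep \to \phi$ strongly in $[H^1(Q)]^2$.

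To build the corrector, recall (the Remark preceding the lemma) that $v \in V(\theta)$ iff $F(z) = v_1 - iv_2$ is holomorphic in $Q_1$; the pair of constraints $\div w = 0$, $\div w^\perp = 0$ in $Q_1$ is exactly the Cauchy--Riemann system for $w_1 - iw_2$. Thus I want to solve, on the (say smooth, but possibly disconnected) open set $Q_1$, an inhomogeneous $\bar\partial$-type problem: find $w$ with $\div w = g_1$, $\div w^\perp = g_2$ in $Q_1$, where $(g_1,g_2)$ are the prescribed divergences of $\psi^\ep$, with the bound $\|w\|_{H^1(Q_1)} \lesssim \|(g_1,g_2)\|_{L^2(Q_1)}$. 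Concretely, writing $w = \nabla\Phi + \nabla^\perp\Psi$ (Helmholtz-type decomposition on $Q_1$) turns the system into two decoupled Poisson problems $\Delta\Phi = g_1$, $\Delta\Psi = -g_2$ on each connected component of $Q_1$ with, say, homogeneous Neumann data — solvable after subtracting the mean of $g_i$ over each component, which vanishes here since $g_i = \div(\text{something})$ with the right periodicity/quasiperiodicity making the relevant boundary flux controllable — and elliptic regularity gives the $H^1$ bound. One must then extend $w$ from $Q_1$ to all of $Q$ with the correct $\theta^\ep$-quasiperiodicity and with $H^1$ norm still $O(|\theta^\ep-\theta|)$; a bounded linear extension operator $[H^1(Q_1)]^2 \to [H^1(Q)]^2$ composed with a cutoff and the modulation $e^{i2\pi\theta^\ep\cdot y}$ does this (the extension need satisfy no constraint outside $Q_1$). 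Setting $r^\ep$ equal to this extension and $\phi^\ep := \psi^\ep - r^\ep$ gives $\div\phi^\ep = \div\psi^\ep - g_1 = 0$ and $\div(\phi^\ep)^\perp = 0$ in $Q_1$, i.e. $\phi^\ep \in V(\theta^\ep)$, while $\|\phi^\ep - \phi\|_{H^1(Q)} \le \|\psi^\ep - \phi\|_{H^1(Q)} + \|r^\ep\|_{H^1(Q)} \to 0$.

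The main obstacle is the corrector construction done \emph{uniformly} and \emph{quantitatively}: one needs the solvability of the divergence/$\bar\partial$ problem on $Q_1$ together with an extension to $Q$ that is simultaneously (a) linear and bounded independently of $\ep$, (b) compatible with $\theta^\ep$-quasiperiodicity, and (c) controlled in norm by the (small) size of the data. If $Q_1$ is connected and touches the cell boundary in the standard way this is routine via Bogovskii's operator or the Helmholtz decomposition plus Poisson solvability; the bookkeeping of boundary fluxes and compatibility (mean-zero) conditions under quasiperiodicity is the only delicate point, and the smoothness of $\Gamma$ assumed in the paper is exactly what makes the elliptic estimates and the extension operator available. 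I would also remark that, since $\theta^\ep$ only appears through the smooth modulation factors, all constants can be taken uniform for $\theta,\theta^\ep$ in the compact cell $[0,1)^2$, so no separate argument is needed near the corners $\theta_j \in \{0\}$.
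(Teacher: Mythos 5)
Your overall strategy (modulate $\phi$ to make it $\theta^\ep$-quasiperiodic, then subtract a small corrector restoring the two constraints in $Q_1$) is a legitimate alternative to the paper's route, which instead represents $\phi=\nabla a+\nabla^{\perp}b$ (plus a constant when $\theta=0$) via Lemma \ref{lem:spcom2} and simply re-solves the same Poisson problems $\Delta a_\ep=f_1$, $\Delta b_\ep=f_2$ with $\theta^\ep$-quasiperiodic conditions on all of $Q$, thereby avoiding any boundary-value problem on $Q_1$ and any extension step. For $\theta\neq 0$ your plan can be carried through. But there is a genuine gap at $\theta=0$, exactly where the paper has to work hardest. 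Your closing remark that ``all constants can be taken uniform\dots\ no separate argument is needed near the corners'' is false: the constants in your corrector step are controlled not by the smooth modulation factors but by the solvability estimate for the $\theta^\ep$-quasiperiodic elliptic problems, and that estimate degenerates as $\theta^\ep\to 0$ because the lowest eigenvalue of the $\theta^\ep$-quasiperiodic Laplacian behaves like $\vert\theta^\ep\vert^2$ (this is the factor $1+\vert\theta\vert^{-2}$ in Lemma \ref{lem:spcom3.1}(i)). Concretely, take $\theta=0$ and $\phi=c$ a nonzero constant, which does belong to $V(0)$: then $g_1=\div\psi^\ep=O(\vert\theta^\ep\vert)$, but the gradient part of your corrector satisfies only $\norm{\nabla\Phi}_{L^2}\lesssim\vert\theta^\ep\vert^{-1}\norm{g_1}_{L^2}=O(1)$, so the corrector need not tend to zero in $L^2$ and the convergence $\phi^\ep\to c$ is not obtained. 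This is precisely why the paper devotes the second half of its proof to a bespoke approximation of constants, $\phi^\ep=c_1^\ep\nabla N_\ep+c_2^\ep\nabla^{\perp}N_\ep$ with $\Delta N_\ep=2\pi\vert\theta^\ep\vert\chi_0$ and coefficients of modulus $\vert c\vert$; some such construction is unavoidable in your scheme as well.

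Two smaller points. First, the compatibility condition for your Neumann problem on $Q_1$ does not ``vanish'': $\int_{Q_1}\div\psi^\ep$ equals the flux of $\psi^\ep$ through $\partial Q$ minus that through $\Gamma$, and the fluxes through opposite faces of $\partial Q$ do not cancel for a genuinely quasiperiodic field; the defect is $O(\vert\theta^\ep-\theta\vert)$ and can be absorbed, e.g.\ into inhomogeneous Neumann data on $\Gamma$, but it must be accounted for rather than asserted to be zero. Second, since $\overline{Q_0}\subset Q$, the set $Q_1$ contains a full neighbourhood of $\partial Q$, so there is no room for a ``cutoff'': the corrector is forced to coincide with $w$ near $\partial Q$ and must therefore be built $\theta^\ep$-quasiperiodic from the outset (solve the mixed quasiperiodic/Neumann problem on the punctured cell and extend only across the interior hole $Q_0$). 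Both of these are repairable bookkeeping issues; the degeneration at $\theta=0$ is the one that requires a new idea.
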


To prove this lemma we shall use the following characterisation of the spaces $V(\theta)$.

\begin{lem}
\label{lem:spcom2} $ $ \\
\noindent (i). Suppose $\theta \in [0, 1)^2\backslash\{0\}$, then $\phi \in V(\theta)$ if, and only if,
$$
\phi = \nabla a + \nabla^{\perp}  b 
$$
for some $a,b \in H^2_{\theta}(Q)$ with
\begin{alignat}{3}
\Delta a  = f_1, & \hspace{3cm} & \Delta b  = f_2,
\end{alignat}
where $f_1, f_2 \in L^2(Q)$ with $\mathrm{supp}{f_1}$ and $\mathrm{supp}{f_2}$ contained in $\overline{Q_0}$. Here, for a given scalar function $f$, we denote by $\nabla^{\perp} f$ the vector $(-f_{,2}, f_{,1})$.

\noindent (ii). Suppose $\theta = 0$. Then $\phi \in V(0)$ if, and only if,
$$
\phi = c + \nabla a + \nabla^{\perp} b 
$$
for some constant $c \in \CC^2$ and $a,b \in H^2_{\#}(Q)$ with
\begin{alignat}{3}
\Delta a  = f_1, & \hspace{3cm} & \Delta b  = f_2,
\end{alignat}
where $f_1, f_2 \in L^2(Q)$, $\mv{f_1}=\mv{f_2}=0$,\footnote{Henceforth, $\mv{f}$ denotes the mean-value over $Q$ of an integrable function $f$, i.e. $\mv{f} := \int_Q f$.} with  $\mathrm{supp}{f_1}$ and $\mathrm{supp}{f_2}$ contained in $\overline{Q_0}$. 
\end{lem}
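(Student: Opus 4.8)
The plan is to characterize $V(\theta)$ via a Helmholtz–Hodge type decomposition of vector fields in $[H^1_\theta(Q)]^2$, using the two scalar potentials $a$ (for the gradient part) and $b$ (for the skew-gradient part). The starting observation is the identity from Remark~2.2, $\int_Q |\nabla v|^2 = \int_Q |\div v|^2 + \int_Q |\div v^\perp|^2$; together with $\div(\nabla a)=\Delta a$ and $\div((\nabla^\perp b)^\perp) = \div(-\nabla b) = -\Delta b$ (and $\div(\nabla a)^\perp = \div(\nabla^\perp a) = 0$, $\div(\nabla^\perp b) = 0$), this tells us that the ansatz $\phi = \nabla a + \nabla^\perp b$ automatically satisfies $\div\phi = \Delta a$ and $\div\phi^\perp = \Delta b$ (up to sign). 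Hence the conditions $\div\phi = 0$ and $\div\phi^\perp = 0$ in $Q_1$ translate precisely into $\supp(\Delta a), \supp(\Delta b) \subset \overline{Q_0}$, which is the claimed statement. So the real content is: \emph{(a) every $\phi \in [H^1_\theta(Q)]^2$ decomposes as stated, and (b) the decomposition respects quasiperiodicity and the $H^2$ regularity of the potentials.}

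First I would establish the decomposition. For $\theta \neq 0$: given $\phi \in V(\theta)$, set $f_1 := \div\phi$ and $f_2 := \div\phi^\perp = \phi_{2,1} - \phi_{1,2}$, both in $L^2(Q)$ and, by definition of $V(\theta)$, both supported in $\overline{Q_0}$ (more precisely, vanishing a.e. on $Q_1$). On the torus-with-character, i.e. on the space $H^2_\theta(Q)$, the operator $-\Delta$ is invertible when $\theta \neq 0$ (no nonzero constant is $\theta$-quasiperiodic, so $0$ is not an eigenvalue), so there exist unique $a, b \in H^2_\theta(Q)$ with $\Delta a = f_1$, $\Delta b = f_2$; elliptic regularity on the cell with quasiperiodic boundary conditions gives the $H^2$ bound. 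Then $w := \phi - \nabla a - \nabla^\perp b \in [H^1_\theta(Q)]^2$ satisfies $\div w = 0$ and $\div w^\perp = 0$ \emph{everywhere} on $Q$; by the Remark~2.2 identity applied to $w$, $\int_Q |\nabla w|^2 = 0$, so $w$ is constant, and the only $\theta$-quasiperiodic constant is $0$. This gives $\phi = \nabla a + \nabla^\perp b$. Conversely, any such $\phi$ lies in $[H^1_\theta(Q)]^2$ and has $\div\phi, \div\phi^\perp$ supported in $\overline{Q_0}$, hence lies in $V(\theta)$. For $\theta = 0$ the argument is identical except that $-\Delta$ on $H^2_\#(Q)$ has kernel the constants and cokernel the mean-zero functions, so we must subtract off the means: impose $\mv{f_1} = \mv{f_2} = 0$ (automatic since $\int_Q \div\phi = 0$ and $\int_Q \div\phi^\perp = 0$ by periodicity) to solve $\Delta a = f_1$, $\Delta b = f_2$ in $H^2_\#(Q)$, and the leftover divergence-free, curl-free periodic field $w$ is now an arbitrary constant $c \in \CC^2$, which is exactly the extra term in part (ii).

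The main obstacle, and the step deserving the most care, is the regularity and solvability claim for $-\Delta$ in the $\theta$-quasiperiodic setting: one must confirm that $-\Delta : H^2_\theta(Q) \to L^2(Q)$ (resp. its mean-zero restriction when $\theta = 0$) is an isomorphism, with the norm of the inverse controlled uniformly enough for later use, and that the Bloch substitution $a(y) = e^{i2\pi\theta\cdot y} \tilde a(y)$ with $\tilde a \in H^2_\#(Q)$ correctly transports the problem to a standard periodic elliptic problem $(-\Delta - \text{lower order})\tilde a = e^{-i2\pi\theta\cdot y} f_1$. This is standard Floquet theory but needs to be stated cleanly because the whole point of the section (cf. the discussion preceding Lemma~2.1) is that $V(\theta)$ varies with $\theta$ in a delicate way, so the potentials' dependence on $\theta$ should be tracked here to feed into Lemma~2.1. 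Everything else — the integration-by-parts identity, the vanishing of $w$, the bookkeeping of $\supp f_i \subset \overline{Q_0}$ versus ``$\div\phi = 0$ in $Q_1$'' — is routine.
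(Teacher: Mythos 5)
Your proposal is correct and follows essentially the same route as the paper: set $f_1=\div\phi$, $f_2=\div\phi^\perp$, solve $\Delta a=f_1$, $\Delta b=f_2$ in $H^2_\theta(Q)$ (subtracting the constant/mean when $\theta=0$), and show the remainder $w=\phi-\nabla a-\nabla^\perp b$ vanishes (the paper argues via $\Delta w=0$ for a quasiperiodic $w$, you via the Remark on the energy identity giving $\nabla w\equiv0$ — these are equivalent). Note only that the sign discrepancy you flag ``up to sign'' in $\div((\nabla^\perp b)^\perp)=-\Delta b$ is present in the paper's proof as well and is immaterial to the statement.
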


\begin{proof} In both cases (i) and (ii) the necessity condition is easy to demonstrate. Let us now show the sufficiency condition. First, we shall consider case (i). Suppose, $\theta \in [0,1)^2\backslash\{0\}$, and let us fix $\phi \in V(\theta)$ and set $f_1 : = \div \phi$, $f_2 : = \div \phi^\perp$. It is clear that $f_1,f_2 \in L^2(Q)$ and that their supports are contained in $\overline{Q_0}$. Now, let us introduce $a,b \in H^2_\theta (Q)$ the unique solutions of 
\begin{alignat*}{3}
\Delta a  = f_1, & \hspace{3cm} & \Delta b  = f_2.
\end{alignat*}
We claim that $\phi = \nabla a + \nabla^{\perp} b$. Indeed, this can be seen by noting that $w : = \phi - \nabla a - \nabla^{\perp} b$ belongs to $[H^1_{\theta}(Q)]^2$ with $\Delta w = 0$, which implies $w \equiv 0$. \\

For case (ii), fix $\phi \in V$; then it is clear that $\phi = c + \phi_0$ for some $\phi_0 \in V$ such that $\mv{\phi_0} = 0$. Now we repeat the above argument, that is let $a,b \in H^2_{\#}(Q)$ be the unique solutions of 
\begin{alignat*}{3}
\Delta a  = \div \phi_0, \quad \mv{a}=0, & \hspace{3cm} & \Delta b  = \div \phi_0^\perp, \quad \mv{b}=0.
\end{alignat*}
As above, we show that $\phi_0 = \nabla a + \nabla^{\perp} b$.
\end{proof}
Additionally, we shall use in the proof of Lemma \ref{lem:spcom1} the following regularity results.
\begin{lem} 
\label{lem:spcom3.1}
\hspace{0pt}

\begin{enumerate}[(i)]
\item{Suppose $\theta \in (0,1)^2$, and let $u \in H^1_{\theta}(Q)$ be the unique solution of 
\begin{equation}
\label{eq:spcom4.4}
- \Delta u = f,
\end{equation}
for some $f \in L^2(Q)$. Then,  $u \in H^2_{\theta}(Q)$ and 
\begin{equation}
\label{eq:spcom5}
\norm{u}_{H^2(Q)} \le  \left( 1 + \frac{1}{\vert \theta \vert^2} \right) \norm{f}_{L^2(Q)}.
\end{equation}}
\item{Suppose $\theta \in [0,1)^2$ and let $ u \in H^1_{\#}(Q)$, $\mv{u}=0$ be the unique solution of 
\begin{equation}
\label{eq:spcom8.1}
\Delta u + 4\pi {\rm i} \theta \cdot \nabla u - 4\pi^2\vert \theta \vert^2 u = f,
\end{equation}
for some $f \in L^2(Q)$ such that $\mv{f} = 0$. Then, $u \in H^2_{\#}(Q)$ and there exists a constant $C>0$, independent of $\theta$, such that
$$
\norm{u}_{H^2(Q)} \le C \norm{f}_{L^2(Q)}.
$$}
\end{enumerate}
\end{lem}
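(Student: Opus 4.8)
The plan is to establish both parts of Lemma~\ref{lem:spcom3.1} using the Fourier/Bloch representation of $\theta$-quasiperiodic functions, since on $Q$ everything diagonalises over the exponentials $e_m(y) = e^{2\pi {\rm i}(m+\theta)\cdot y}$, $m \in \ZZ^2$.

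\medskip

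\textbf{Proof sketch.} For part (i), write $f = \sum_{m \in \ZZ^2} \hat f_m e_m$ with $\sum_m |\hat f_m|^2 = \norm{f}_{L^2(Q)}^2$, and note that since $\theta \in (0,1)^2$ the vector $m + \theta$ never vanishes, so in fact $|m+\theta|^2 \ge |\theta'|^2$ where $\theta'$ is the nearest point of the lattice-shifted origin; more simply, $|m+\theta|^2 \ge \min\{|\theta|^2, \dots\}$, but one gets the clean bound by observing $|m+\theta| \ge |\theta|$ when $m = 0$ and $|m+\theta| \ge 1 - |\theta| \ge$ (something) otherwise — actually the bound as stated follows by separating $m=0$. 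The unique $H^1_\theta$ solution of $-\Delta u = f$ is $u = \sum_m \frac{\hat f_m}{4\pi^2 |m+\theta|^2} e_m$, and then $\norm{u}_{H^2(Q)}^2 \sim \sum_m (1 + |m+\theta|^2)^2 |\hat u_m|^2 = \sum_m \frac{(1+|m+\theta|^2)^2}{16\pi^4 |m+\theta|^4}|\hat f_m|^2$. The worst case for the multiplier $\frac{(1+t)^2}{t^2}$ (with $t = |m+\theta|^2$) over $t \ge |\theta|^2$ is at the smallest $t$, namely $t = |\theta|^2$ for $m = 0$; bounding $\frac{(1+t)^2}{t^2} \le \frac{1}{t^2} + \frac{2}{t} + 1 \le (1 + \tfrac{1}{t})^2 \le (1 + \tfrac{1}{|\theta|^2})^2$ gives $\norm{u}_{H^2(Q)} \le (1 + \tfrac{1}{|\theta|^2})\norm{f}_{L^2(Q)}$ up to the $\pi$-constants, which can be absorbed by the precise choice of equivalent $H^2$ norm (as flagged in the footnote, $H^1_\theta(Q)$ is defined as a closure, and one takes the natural Fourier-side norm). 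For part (ii) the function is genuinely periodic ($\theta$-independent lattice $\ZZ^2$) but solves a shifted equation; substituting the ansatz $u = \sum_{m \ne 0} \hat u_m e^{2\pi {\rm i} m \cdot y}$ into \eqref{eq:spcom8.1} shows the symbol is exactly $-4\pi^2|m+\theta|^2$ again, so $u = \sum_{m\ne 0}\frac{-\hat f_m}{4\pi^2|m+\theta|^2}e^{2\pi {\rm i} m\cdot y}$. Since $m$ ranges over $\ZZ^2 \setminus\{0\}$ and $\theta \in [0,1)^2$, one has the uniform lower bound $|m+\theta|^2 \ge c > 0$ for all such $m$ and all $\theta \in [0,1)^2$ (the infimum of $|m+\theta|$ over $m \ne 0$, $\theta \in [0,1)^2$ is attained and positive — e.g. $\ge 1 - \sqrt 2 \cdot$ something, but one can just note $[0,1)^2$ has compact closure and $|m+\theta|$ is bounded below away from $0$ on it for $m \ne 0$, with the minimum achieved as $\theta \to$ corners). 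Hence the multiplier $\frac{(1+|m+\theta|^2)^2}{|m+\theta|^4}$ is bounded by a $\theta$-independent constant $C$, giving $\norm{u}_{H^2(Q)} \le C\norm{f}_{L^2(Q)}$.

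\medskip

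\textbf{Main obstacle.} The routine part is the Fourier computation; the only genuine point requiring care is the \emph{uniformity in $\theta$} in part (ii) and the \emph{sharp $\theta$-dependence} in part (i). In (i), one must be honest that the constant blows up as $|\theta| \to 0$ because the $m = 0$ mode has symbol $4\pi^2|\theta|^2 \to 0$; this is exactly why the hypothesis there is $\theta \in (0,1)^2$ (the open square) whereas (ii) allows $\theta = 0$ — in (ii) the mean-zero constraint $\mv{u} = \mv{f} = 0$ excludes the $m=0$ mode, restoring a uniform spectral gap. I would make sure the statement about equivalence of the Fourier-side $H^2$ norm with the genuine $H^2(Q)$ norm is invoked cleanly (it is standard for $\theta$-quasiperiodic functions: $\partial_j (e^{2\pi{\rm i}(m+\theta)\cdot y})$ brings down $2\pi{\rm i}(m_j + \theta_j)$, and $|m+\theta| \asymp (1+|m|)$ uniformly for $\theta$ in a bounded set, so $\sum(1+|m+\theta|^2)^2|\hat u_m|^2 \asymp \norm{u}_{H^2(Q)}^2$), and then the whole lemma reduces to estimating the scalar multipliers above. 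No subtlety beyond bookkeeping remains once that is in place.
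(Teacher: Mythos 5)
Your proof follows essentially the same route as the paper's: expand in the quasiperiodic exponentials $e^{2\pi {\rm i}(z+\theta)\cdot y}$, read off the multiplier $\bigl(1+4\pi^2|z+\theta|^2\bigr)/\bigl(4\pi^2|z+\theta|^2\bigr)$, and bound it, using the $z=0$ mode to produce the $1/|\theta|^2$ in (i) and the mean-zero spectral gap in (ii). The one caveat --- present equally in the paper's own argument --- is that the lower bounds $|z+\theta|\ge|\theta|$ in (i) and $\inf\{|z+\theta| : z\neq 0,\ \theta\in[0,1)^2\}>0$ in (ii) actually fail for $\theta$ near the corners $(1,0)$, $(0,1)$, $(1,1)$ of the square (where $\mathrm{dist}(\theta,\ZZ^2)\to 0$ while $|\theta|\not\to 0$), so the stated constants should really involve $\mathrm{dist}(\theta,\ZZ^2)$; this is harmless in the places the lemma is used ($\theta$ fixed away from $\ZZ^2$ in case (i), $\theta^\ep\to 0$ in case (ii)), but your remark that the minimum over $\theta\in[0,1)^2$ is ``attained and positive'' is not quite right.
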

\begin{proof} \hspace{0pt}

{\bf Proof of (i):} The eigenfunctions $w_z(\theta;y) = e^{2\pi{\rm i}(\theta + z) \cdot y}$ of the $\theta$-quasiperiodic Laplacian form an orthonormal basis of $L^2(Q)$. By decomposing $u$ and $f$ in terms of this basis, we have
\begin{alignat*}{3}
u(y)  = \sum_{z \in \mathbb{Z}^2} a_z  e^{ 2\pi{\rm i}(\theta + z) \cdot y}, & \hspace{3cm} & f(z)  =  \sum_{z \in \ZZ^2} b_z  e^{ 2\pi{\rm i}(\theta + z) \cdot y},
\end{alignat*}
Now \eqref{eq:spcom4.4} tells us $a_z = \tfrac{b_z}{\lambda_z(\theta)}$, where $\lambda_z(\theta) = 4\pi^2 \vert \theta + z \vert^2$. Since
$$
\norm{u}_{H^2(Q)}^2 = \sum_{z \in \mathbb{Z}^2} \left( 1 + 4\pi^2\vert \theta + z \vert^2 \right)^2 \vert a_z \vert^2 = \sum_{z \in \mathbb{Z}^2} \left( \frac{1 + 4\pi^2\vert \theta + z \vert^2}{4\pi^2\vert \theta +  z \vert^2} \right)^2 \vert b_z \vert^2.
$$ 
For $\vert z \vert \ge 0$ we see $\tfrac{1 + 4 \pi^2\vert \theta + z \vert^2}{4\pi^2\vert \theta + z \vert^2} = \tfrac{1 }{4\pi^2\vert \theta + z \vert^2} + 1 \le \tfrac{1 }{\vert \theta \vert^2} + 1 $. Hence,  
$$
\norm{u}_{H^2(Q)} \le  \left( \frac{1 + \vert \theta \vert^2}{\vert \theta \vert^2} \right) \norm{f}_{L^2(Q)}.
$$ 

{\bf Proof of (ii):} The eigenfunctions $w_z(y ) = e^{2\pi {\rm i } z \cdot y}$ of the periodic Laplacian form an orthonormal basis in $L^2(Q)$. Decomposing $u$ and $f$ in terms of this basis we have
\begin{align*}
u(y) & = \sum_{z \in \mathbb{Z}^2} a_z  e^{ 2\pi {\rm i}z \cdot y} & f(z) & =  \sum_{z \in \ZZ^2} b_z  e^{2\pi {\rm i}z \cdot y},
\end{align*}
By the assumptions $\mv{f}=0$ and $\mv{u}=0$, we have $b_0 = 0$ and $a_0 =0$ respectively. Now \eqref{eq:spcom8.1} tells us that for $z \neq 0$, $a_z = - \tfrac{b_z}{4\pi^2\vert z + \theta \vert^2}$. Since $\theta \in [ 0, 1)^2$, one has
\begin{flalign*}
\norm{u}_{H^2(Q)}^2 &  = \sum_{\substack{z \in \mathbb{Z}^2 \\ z \neq 0}} \left( 1 + 4\pi^2 \vert   z \vert^2 \right)^2 \vert a_z \vert^2 = \sum_{\substack{z \in \mathbb{Z}^2 \\ z \neq 0}} \left( \frac{1 + 4\pi^2\vert  z \vert^2}{4\pi^2\vert \theta + z \vert^2} \right)^2 \vert b_z \vert^2 \\ & \le \sum_{\substack{z \in \mathbb{Z}^2 \\ z \neq 0}}  \left( \frac{1 + 4\pi^2 \vert  z \vert^2}{ 4\pi^2 \vert  z \vert^2} \right)^2 \vert b_z \vert^2\le 4 \sum_{\substack{z \in \mathbb{Z}^2 \\ z \neq 0}}  \vert b_z \vert^2,
\end{flalign*}
i.e.
$$
\norm{u}_{H^2(Q)}^2 \le 4  \norm{f}_{L^2(Q)}^2.
$$ 
\end{proof}

\begin{proof}[Proof of Lemma \eqref{lem:spcom1}.] Consider a sequence $\theta^\ep \subset [ 0 , 1)^2$ such that $\theta^\ep \rightarrow \theta$ as $\ep \rightarrow 0$. There are two separate cases to consider, the cases $\theta \in (0,1)^2$ and $\theta = 0$. \\

{\bf Case 1:} $\theta \in (0,1)^2$. Let us assume without loss of generality that $\theta^\ep \in (0,1)^2$. For fixed $\phi \in V(\theta)$ we know, by Lemma \ref{lem:spcom2}, that $\phi = \nabla a + \nabla^{\perp} b$ for some $a,b \in H^2_{\theta}(Q)$ where
\begin{alignat*}{3}
\Delta a = f_1, &\hspace{3cm}& \Delta b = f_2,
\end{alignat*}
for some $f_1,f_2 \in L^2(Q)$. We shall now construct the desired $\phi^\ep \in V(\theta^\ep)$ as follows: Set $\phi^\ep : = \nabla a_\ep + \nabla^{\perp} b_\ep$ where $a_\ep,b_\ep \in H^2_{\theta^\ep}(Q)$ solve
\begin{alignat*}{3}
\Delta a_\ep = f_1, &\hspace{3cm}& \Delta b_\ep = f_2.
\end{alignat*}
Notice, by Lemma \ref{lem:spcom2}, that $\phi^\ep \in V(\theta^\ep)$. It remains to show $\phi^\ep \rightarrow \phi$ strongly in $[H^1(Q)]^2$. To this end, it is sufficient to show that $a_\ep \rightarrow a$ and $b_\ep \rightarrow b$ strongly in $H^2(Q)$ as $\ep \rightarrow 0$. Let us show that $a_\ep \rightarrow a$ as $\ep \rightarrow 0$. By defining $\widetilde{a}_\ep (y) : = e^{-  2\pi{\rm i}( \theta^\ep - \theta)\cdot y} a_\ep(y)$, one notices that $\widetilde{a}_\ep \in H^2_{\theta}(Q)$ and uniquely solves 
\begin{equation}
\label{eq:spcom7}
\Delta \widetilde{a}_\ep  = f_\ep,
\end{equation} 
where $f_\ep : = e^{- 2\pi{\rm i} (\theta^\ep - \theta ) \cdot y}  f_1 - 4\pi {\rm i} \left( \theta^\ep - \theta \right) \cdot \nabla a_\ep - 4\pi^2 \vert \theta^\ep - \theta \vert^2 a_\ep $. In particular, $\widetilde{a}_\ep - a \in H^2_{\theta}(Q)$ with $\Delta (\widetilde{a}_\ep - a) = f_\ep - f$, and therefore by Lemma \ref{lem:spcom3.1}(i) one has
\begin{equation}
\label{spcomcorrect1}
\norm{\widetilde{a}_\ep - a}_{H^2(Q)}^2  \le C \norm{f_\ep - f_1}_{L^2(Q)}^2.
\end{equation}
Furthermore, since $\Delta a_\ep = f_1$,  by an application of Lemma \ref{lem:spcom3.1}(i) one notices 
$$
\norm{a_\ep}_{H^2(Q)}^2  \le  \left( 1 + \frac{1}{\vert \theta^\ep \vert^2} \right)^2 \norm{f_1}_{L^2(Q)}^2 \le C \norm{f_1}_{L^2(Q)}^2,
$$
where $C$ is independent of $\ep$ (here we used the fact that $\theta^\ep \rightarrow \theta^0 \neq 0 $ as $\ep \rightarrow 0$). Therefore, one has
\begin{multline*}
\norm{f_\ep - f_1}_{L^2(Q)}^2 \le \norm{ \left( e^{2\pi {\rm i} (\theta^\ep - \theta ) \cdot y} -1\right)  f_1}_{L^2(Q)}^2 +   \norm{  4\pi {\rm i}\left( \theta^\ep - \theta \right) \cdot \nabla a_\ep}_{L^2(Q)}^2  \\ + \norm{ 4 \pi^2 \vert \theta^\ep - \theta \vert^2 a_\ep}_{L^2(Q)}^2  \longrightarrow 0,
\end{multline*}
as $\ep \rightarrow 0$. Hence, by \eqref{spcomcorrect1}, $\tilde{a}_\ep \rightarrow a$ strongly in $H^2(Q)$ as $\ep \rightarrow 0$. Now, we can show $a_\ep \rightarrow a$ strongly in $H^2(Q)$ by noting that
\begin{flalign*}
\norm{a_\ep - a}_{H^2(Q)}^2 & = \norm{ e^{ 2\pi{\rm i} ( \theta^\ep - \theta) \cdot y} \widetilde{a}_\ep - a}_{H^2(Q)}^2\\ 
&  \le \norm{e^{ 2\pi{\rm i} ( \theta^\ep - \theta) \cdot y} ( \widetilde{a}_\ep - a ) }_{H^2(Q)}^2 + \norm{(e^{ 2\pi{\rm i} ( \theta^\ep - \theta) \cdot y} -1 ) a}_{H^2(Q)}^2 \\
& \le  C\norm{( \widetilde{a}_\ep - a )}_{H^2(Q)}^2 + \norm{(e^{ 2\pi{\rm i} ( \theta^\ep - \theta) \cdot y} -1 ) a}_{H^2(Q)}^2,
\end{flalign*}
and recalling that $e^{ 2\pi{\rm i} x}$ is uniformly continuous with respect to $x$ on $[0,1]$. Similarly, one can show $b_\ep \rightarrow b$ strongly in $H^2(Q)$ as $\ep \rightarrow 0$. Therefore $\phi^\ep \rightarrow \phi$ strongly in $[H^1(Q)]^2$ as $\ep \rightarrow 0$. \\

{\bf Case 2:} $\theta^\ep \rightarrow 0$. Let us assume without loss of generality $\theta^\ep \neq 0$. First let us consider the case $\theta = 0$, $\phi \in V(0)$, $\mv{\phi}=0$. By Lemma \ref{lem:spcom2}(ii) $\phi^0 = \nabla a + \nabla^{\perp} b$ for some $a,b \in H^2_{\#}(Q)$ such that 
\begin{alignat*}{3}
\Delta a  = f_1, & \hspace{3cm} & \Delta b  =f_2,
\end{alignat*}
for given $f_1,f_2$. By setting $\phi^\ep = \nabla a_\ep + \nabla^{\perp} b_\ep$, where $a_\ep,b_\ep \in H^2_{\theta^\ep}(Q)$ solve 
\begin{alignat*}{3}
\Delta a_\ep  = f_1, & \hspace{3cm} & \Delta b_\ep  =f_2,
\end{alignat*}
one can show by arguments very similar to those presented above that $\phi^\ep \rightarrow \phi$ strongly in $[H^1(Q)]^2$.

It remains to consider $\phi = c \in \mathbb{C}^2$. We shall show that there exists a sequence of $\theta^\ep$-quasi periodic functions $\phi^\ep$ that converges strongly to the constant vector $c$ in $[H^1(Q)]^2$. This requires the construction of special functions as follows: Denote by $N_\ep \in H^2_{\theta^\ep}(Q)$ the unique solution to
\begin{equation}
\label{eq:spcom8}
\Delta N_\ep = 2 \pi \vert \theta^\ep \vert \chi_0,
\end{equation} 
where $\chi_0$ is the characteristic function of $Q_0$. Now, let us introduce $\phi^\ep = c_1^\ep u^\ep_1 + c_2^\ep  u^\ep_2$, where $u^\ep_1 = \nabla N_\ep$, $u^\ep_2 = \nabla^{\perp} N_\ep$ and $c^\ep_1,c^\ep_2$ are constants yet to be determined. We shall show that $\phi^\ep \rightarrow \phi =c$ strongly in $[H^1(Q)]^2$ as $\ep \rightarrow 0$ for  specially  chosen constants $c^\ep_1,c^\ep_2$.\\
\noindent By the representation $N_\ep(y) = e^{ 2\pi{\rm i} \theta^\ep \cdot y} M_\ep(y)$, for some  $M_\ep \in H^2_{\#}(Q)$, we see by \eqref{eq:spcom8} that
\begin{equation}
\label{eq:spcom9}
\Delta M_\ep + 4 \pi i \theta^\ep \cdot \nabla M_\ep - 4 \pi^2 \vert \theta^\ep \vert^2 M_\ep = 2 \pi \vert \theta^\ep \vert \chi_0 e^{- 2\pi{\rm i} \theta^\ep \cdot y}.
\end{equation}
Since  $M_\ep = C_\ep + \widetilde{M}_\ep$, for some constant $C^\ep$ and some function $\widetilde{M}_\ep \in H^2_{\#}(Q)$, $\mv{\widetilde{M}^\ep}=0$, we find by \eqref{eq:spcom9} that
$$
C_\ep = - \frac{1}{2\pi \vert \theta^\ep \vert} \int_{Q_0}e^{- 2\pi{\rm i} \theta^\ep \cdot y}
$$
and
\begin{equation}
\label{eq:spcom10}
\Delta \widetilde{M}_\ep + 4 \pi{\rm i }\theta_\ep \cdot \nabla \widetilde{M}_\ep - 4 \pi^2 \vert \theta^\ep \vert^2 \widetilde{M}_\ep = f_\ep,
\end{equation}
where $f_\ep : = 2\pi \vert \theta^\ep \vert \chi_0 e^{- 2\pi{\rm i} \theta^\ep \cdot y} + 4\pi^2 \vert \theta^\ep \vert^2 C_\ep$. It is clear $\mv{f^\ep} = 0$, therefore, by Lemma \ref{lem:spcom3.1}(ii), $\norm{\widetilde{M}_\ep}_{H^2(Q)} \le C \norm{f_\ep}_{L^2(Q)}$ for some $C>0$ independent of $\ep$. Hence,  one can show that $\widetilde{M}_\ep$ strongly converges to zero in $H^2_{\#}(Q)$ as $\ep \rightarrow 0$.  Indeed, this follows from the fact that $f_\ep \rightarrow 0$ strongly in $L^2(Q)$ as $\ep \rightarrow 0$, which is seen by an application of the dominated convergence theorem upon noticing that $\vert f_\ep \vert \le 2 \pi$, $f_\ep \rightarrow 0$ a.e. in $Q$ as $\ep \rightarrow 0$. 

  Now, we find that
\begin{multline*}
\phi^\ep = \left( -{\rm i} c^\ep_1 \frac{\theta^\ep}{\vert \theta^\ep \vert}  -{\rm i} c^\ep_2 \frac{{\theta^\ep}^{\perp}}{\vert \theta^\ep \vert} \right)\left( \int_{Q_0}{e^{- 2 \pi{\rm i} \theta^\ep \cdot y}} \right) e^{ 2 \pi {\rm i} \theta^\ep \cdot y} \   \\  + c^\ep_1 \nabla \left( e^{ 2\pi {\rm i} \theta^\ep \cdot y} 
 \widetilde{M}_\ep \right) + c^\ep_2 \nabla^{\perp} \left( e^{ 2\pi {\rm i} \theta^\ep \cdot y}
 \widetilde{M}_\ep \right).
\end{multline*}
Note that $\left( \int_{Q_0}{e^{-2\pi{\rm i} \theta^\ep \cdot y}}\right) e^{ 2\pi \theta^\ep {\rm i}\cdot y} \rightarrow 1$ uniformly in $\overline{Q}$, and $ e^{ 2\pi{\rm i} \theta^\ep \cdot y}\widetilde{M}_\ep \rightarrow 0$  strongly in $H^2(Q)$ as $\ep \rightarrow 0$. Therefore, to show $\phi^\ep \rightarrow c$ strongly in $[H^1(Q)]^2$ it is sufficient to set
$$
 -{\rm i} c^\ep_1 \frac{\theta^\ep}{\vert \theta^\ep \vert} -{\rm i} c^\ep_2 \frac{{\theta^\ep}^{\perp}}{\vert \theta^\ep \vert} = c,
$$
i.e. to choose
\begin{alignat*}{3}
c^\ep_1 =  \frac{{\rm i}}{\vert \theta^\ep \vert} c \cdot \theta^\ep,& \hspace{3cm} & c^\ep_2  =  \frac{{\rm i}}{\vert \theta^\ep \vert} c \cdot {\theta^\ep}^{\perp}.
\end{alignat*}
Hence, $c^\ep_1, c^\ep_2$ are uniformly bounded in $\ep$ and 
\begin{equation*}
\phi^\ep =c  \left( \int_{Q_0}e^{- 2 \pi{\rm i} \theta^\ep \cdot y} \right) e^{ 2 \pi{\rm i} \theta^{\ep} \cdot y}  + c^\ep_1 \nabla \left( e^{ 2\pi {\rm i} \theta^\ep \cdot y} 
 \widetilde{M}_\ep \right) + c^\ep_2 \nabla^{\perp} \left( e^{ 2\pi {\rm i} \theta^\ep \cdot y}
 \widetilde{M}_\ep \right) \longrightarrow c,
\end{equation*}
strongly in $[H^1(Q)]^2$ as $\ep \rightarrow 0$. This completes the proof of the lemma.
\end{proof}

\begin{lem}[Continuity of $\A(\theta)$]
\label{apb.1}
For any $\theta \in [0,1)^2$, let $\theta^\ep\in[0,1)^2$ be such that $\theta^\ep \rightarrow \theta$ as $\ep \rightarrow 0$. Then the sequence 
$\big(\A(\theta^\ep) + I\big)^{-1}$ converges to $\big(\A(\theta) + I \big)^{-1}$ in the operator norm. In particular, the eigenvalues $\lambda_k(\theta),$ $k\in{\mathbb N}$, of $\A(\theta)$  are continuous functions of  $\theta\in[0,1)^2,$ {\it i.e.} $\lim_{\ep \rightarrow 0}\lambda_k(\theta^\ep) = \lambda_k(\theta).$
\end{lem}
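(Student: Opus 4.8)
The plan is to deduce norm-resolvent convergence from the continuity of the form domains $V(\theta)$ established in Lemma \ref{lem:spcom1}, combined with a standard Mosco/$\Gamma$-convergence argument for the associated quadratic forms. First I would fix $\theta^\ep\to\theta$ and a target $g\in[L^2_\rho(Q)]^2$, set $u^\ep:=(\A(\theta^\ep)+I)^{-1}g$ and $u:=(\A(\theta)+I)^{-1}g$, so that $u^\ep\in V(\theta^\ep)$ solves $\a_{\theta^\ep}(u^\ep,\phi)+\int_Q\epsilon_1\rho u^\ep\cdot\overline{\phi}=\int_Q\epsilon_1\rho g\cdot\overline{\phi}$ for all $\phi\in V(\theta^\ep)$, and similarly for $u$. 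Testing the equation for $u^\ep$ with $\phi=u^\ep$ and using the norm equivalence \eqref{normofV} together with Remark \ref{remequ} gives a uniform bound $\norm{u^\ep}_{[H^1(Q)]^2}\le C\norm{g}$; hence, along a subsequence, $u^\ep\rightharpoonup u_*$ weakly in $[H^1(Q)]^2$ and strongly in $[L^2(Q)]^2$ by Rellich.

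Next I would identify the weak limit $u_*$ as $u$. This splits into two halves. For the limiting constraint: since $\div u^\ep=\div (u^\ep)^\perp=0$ in $Q_1$ for every $\ep$, and $\div$ is weakly continuous, the weak $H^1$ limit satisfies the same constraints, so $u_*\in V(\theta)$ — here one must also check that $u_*$ is $\theta$-quasiperiodic, which follows because $u^\ep$ is $\theta^\ep$-quasiperiodic and $\theta^\ep\to\theta$ (one writes $u^\ep=e^{2\pi i\theta^\ep\cdot y}v^\ep$ with $v^\ep$ periodic, bounded in $H^1_\#(Q)$, extracts a weak limit $v_*$, and passes to the limit in $e^{2\pi i\theta^\ep\cdot y}$ uniformly). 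For the limiting equation: given any $\phi\in V(\theta)$, use Lemma \ref{lem:spcom1} to produce $\phi^\ep\in V(\theta^\ep)$ with $\phi^\ep\to\phi$ strongly in $[H^1(Q)]^2$; plug $\phi^\ep$ into the equation for $u^\ep$ and pass to the limit — the bilinear-form term passes by weak$\times$strong convergence of gradients (and divergences), the lower-order term by strong $L^2$ convergence — to obtain $\a_\theta(u_*,\phi)+\int_Q\epsilon_1\rho u_*\cdot\overline{\phi}=\int_Q\epsilon_1\rho g\cdot\overline{\phi}$. By uniqueness of the solution of \eqref{thetaprob} (coercivity of $\a_\theta(\cdot,\cdot)+\int_Q\epsilon_1\rho\,\cdot\,\overline{\cdot}$ on $V(\theta)$), $u_*=u$, and since the limit is independent of the subsequence the whole family converges. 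To upgrade weak to strong $H^1$ convergence (needed for the operator-norm statement in $L^2$, though in fact $L^2$-strong suffices there), test the $u^\ep$ equation with $u^\ep$ and the limiting equation with $u$, and compare: $\a_{\theta^\ep}(u^\ep,u^\ep)\to\a_\theta(u,u)$ forces convergence of the full $H^1$ norms via Remark \ref{remequ}, hence $u^\ep\to u$ strongly in $[H^1(Q)]^2$.

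Finally, the convergence $(\A(\theta^\ep)+I)^{-1}g\to(\A(\theta)+I)^{-1}g$ holds for every $g$; to get convergence \emph{in operator norm} rather than merely strongly, I would invoke compactness: the operators $(\A(\theta)+I)^{-1}$ map $[L^2_\rho(Q)]^2$ into a bounded set of $V(\theta)\subset[H^1(Q)]^2$ uniformly in $\theta$, and $[H^1(Q)]^2\hookrightarrow[L^2(Q)]^2$ compactly, so strong pointwise convergence of this equicompact family is automatically uniform on the unit ball — i.e. norm convergence. Norm-resolvent convergence then yields convergence of each eigenvalue $\lambda_k(\theta^\ep)\to\lambda_k(\theta)$ by the standard spectral-perturbation consequence. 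The main obstacle is the identification step: ensuring the weak limit lands in the \emph{right} space $V(\theta)$ — stability of the quasiperiodicity class under $\theta^\ep\to\theta$ and stability of the divergence-free constraints — and for that the characterisation in Lemma \ref{lem:spcom2} together with the recovery sequence of Lemma \ref{lem:spcom1} are exactly the tools that make the passage to the limit legitimate; everything else is routine weak/strong convergence bookkeeping.
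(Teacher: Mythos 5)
Your overall route is the same as the paper's: prove convergence of the resolvents using the recovery sequences of Lemma \ref{lem:spcom1} to identify the limit problem, then deduce continuity of the eigenvalues from the min--max principle. The energy bound, the extraction of a weak $H^1$ limit, the stability of the quasiperiodicity class and of the constraints $\div u=\div u^\perp=0$ in $Q_1$, the passage to the limit against recovery test functions, and the uniqueness/whole-sequence argument are all sound and essentially reproduce the paper's Lemma \ref{apb.lem1} for the special case of fixed data $g$.

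The gap is in your final upgrade from strong (pointwise) convergence to operator-norm convergence. The principle you invoke --- that a family of operators uniformly bounded from $L^2$ into $H^1$ (hence ``equicompact'') which converges strongly must converge in norm --- is false. Take $T_nf:=\langle f,e_n\rangle_{L^2}\,g$ with $(e_n)$ orthonormal in $L^2(Q)$ and $g\in H^1(Q)$ fixed with $\norm{g}_{L^2}=1$: then $\norm{T_nf}_{H^1}\le\norm{g}_{H^1}\norm{f}_{L^2}$ uniformly, all ranges lie in the compact set $\{cg:\abs{c}\le1\}$, and $T_nf\to0$ for every $f$ by Bessel's inequality, yet $\norm{T_n}_{L^2\to L^2}=1$ for all $n$. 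The point is that when you pick a near-maximising sequence $f^\ep$ with $\norm{f^\ep}=1$ and extract $f^\ep\rightharpoonup f$, the term $R^\ep(f^\ep-f)$ is controlled neither by the pointwise convergence $R^\ep g\to Rg$ (which holds only for fixed $g$) nor by equicompactness alone. Two repairs are available. The paper's repair is to prove the resolvent convergence directly for \emph{weakly} convergent data $f^\ep\rightharpoonup f$ (its Lemma \ref{apb.lem1}); your identification argument already does this verbatim, since the right-hand side $\int_Q\epsilon_1\rho f^\ep\cdot\overline{\phi^\ep}$ passes to the limit by weak--strong pairing --- so you should state and run your Steps 1--2 with $g$ replaced by a weakly convergent sequence, after which the near-maximiser argument closes. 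Alternatively, use self-adjointness: for fixed $g$, $\langle R^\ep(f^\ep-f),g\rangle=\langle f^\ep-f,R^\ep g\rangle=\langle f^\ep-f,(R^\ep-R)g\rangle+\langle f^\ep-f,Rg\rangle\to0$, and combined with precompactness of $\{R^\ep(f^\ep-f)\}$ in $L^2$ this forces $R^\ep(f^\ep-f)\to0$ strongly. Either way the missing ingredient is more than ``bookkeeping''; as stated your step 4 does not go through.
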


In establishing the proof of Lemma \ref{apb.1} we shall use the following result.

\begin{lem}
\label{apb.lem1}
For any $\theta \in [0,1)^2$, let $\theta^\ep$ be such that $\theta^\ep \rightarrow \theta$  and let $f^\ep, f \in [L^2(Q)]^2$ be such that $f^\ep \rightharpoonup f$ in 
$[L^2(Q)]^2$ as $\ep \rightarrow 0$. Then, the sequence of solutions $u^\ep \in V(\theta^\ep)$ of the problems 
\begin{multline*}
\int_Q \nabla u_1^\ep \cdot \overline{\nabla \phi_1} + \gamma^{-1} \left( \div u^\ep \cdot \overline{\div \phi} + \div {u^\ep}^\perp \cdot \overline{\div \phi^\perp}\right) + \int_Q \epsilon_1 \left( u^\ep \cdot \phi + \gamma \chi_0 u^\ep_1 \phi_1 \right) \\ =  \int_Q \epsilon_1 \left( f^\ep \cdot \phi + \gamma \chi_0 f^\ep_1 \phi_1 \right), \qquad \forall \phi \in V(\theta^\ep),
\end{multline*}
weakly converges in $[H^1(Q)]^2$
to the solution $u \in V(\theta)$ of the problem
\begin{multline*}
\int_Q \nabla u_1 \cdot \overline{\nabla \phi_1} + \gamma^{-1} \left( \div u \cdot \overline{\div \phi} + \div  u ^\perp \cdot \overline{\div \phi^\perp}\right) + \int_Q \epsilon_1 \left( u \cdot \phi + \gamma \chi_0 u_1 \phi_1 \right) \\ =  \int_Q \epsilon_1 \left( f \cdot \phi + \gamma \chi_0 f_1 \phi_1 \right), \qquad \forall \phi \in V(\theta).
\end{multline*}
\end{lem}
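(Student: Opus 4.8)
The plan is to run the standard compactness/lower-semicontinuity scheme, with the one non-routine ingredient being the $\e$-dependence of the test-function space, handled by the recovery sequences of Lemma \ref{lem:spcom1}. First I would derive a uniform a priori bound. Testing the equation for $u^\ep$ with $\phi=u^\ep\in V(\theta^\ep)$ and using the identity of Remark \ref{remequ}, $\int_Q|\nabla v|^2=\int_Q|\div v|^2+\int_Q|\div v^\perp|^2$ for $v\in[H^1_\theta(Q)]^2$, the left-hand side is bounded below by $\int_Q|\nabla u^\ep_1|^2+\gamma^{-1}\int_Q|\nabla u^\ep|^2+\epsilon_1\int_Q|u^\ep|^2\ge c\,\|u^\ep\|_{[H^1(Q)]^2}^2$ with $c>0$ independent of $\e$ (the term $\epsilon_1\gamma\int_Q\chi_0|u^\ep_1|^2$ is nonnegative and only helps). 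Since $f^\ep\rightharpoonup f$ in $[L^2(Q)]^2$, the family $(f^\ep)$ is bounded there, so the right-hand side is $\le C\|f^\ep\|_{[L^2(Q)]^2}\|u^\ep\|_{[L^2(Q)]^2}$, and hence $\sup_\e\|u^\ep\|_{[H^1(Q)]^2}<\infty$. Along a subsequence (not relabelled), $u^\ep\rightharpoonup u^*$ weakly in $[H^1(Q)]^2$ and, by Rellich, strongly in $[L^2(Q)]^2$.

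Next I would verify $u^*\in V(\theta)$. Writing $u^\ep=e^{2\pi{\rm i}\theta^\ep\cdot y}v^\ep$ with $v^\ep\in[H^1_\#(Q)]^2$ and noting $e^{-2\pi{\rm i}\theta^\ep\cdot y}\to e^{-2\pi{\rm i}\theta\cdot y}$ in $C^1(\overline Q)$, one checks that $v^\ep\rightharpoonup e^{-2\pi{\rm i}\theta\cdot y}u^*$ weakly in $[H^1(Q)]^2$; since periodicity is preserved under weak $H^1$-limits, $u^*\in[H^1_\theta(Q)]^2$. Moreover $\div u^\ep\rightharpoonup\div u^*$ and $\div {u^\ep}^\perp\rightharpoonup\div {u^*}^\perp$ weakly in $L^2(Q)$, and each of these vanishes a.e.\ in $Q_1$ for every $\e$, so the same holds for the limits; therefore $u^*\in V(\theta)$.

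Then I would pass to the limit in the weak formulation. Given an arbitrary $\phi\in V(\theta)$, Lemma \ref{lem:spcom1} furnishes $\phi^\ep\in V(\theta^\ep)$ with $\phi^\ep\to\phi$ strongly in $[H^1(Q)]^2$; using $\phi^\ep$ as test function in the equation for $u^\ep$, every term is a product of a weakly convergent factor ($u^\ep$, $\nabla u^\ep$, $\div u^\ep$, $\div {u^\ep}^\perp$, or $f^\ep$) with a strongly convergent one ($\phi^\ep$, $\nabla\phi^\ep$, $\div\phi^\ep$, $\div {\phi^\ep}^\perp$, or $\chi_0\phi^\ep_1$), so each term converges and we obtain the limit identity for this $\phi$. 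Since $\phi\in V(\theta)$ was arbitrary, $u^*$ solves the limit problem. Finally, the limit sesquilinear form is bounded and coercive on $V(\theta)$ (again by Remark \ref{remequ} together with the Poincar\'e-type inequalities behind \eqref{normofV}), so by Lax--Milgram its solution $u$ is unique; hence $u^*=u$. As this limit does not depend on the subsequence, the whole sequence $u^\ep\rightharpoonup u$ weakly in $[H^1(Q)]^2$.

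The only real obstacle is the mismatch between the test spaces $V(\theta^\ep)$ and $V(\theta)$: a fixed $\phi\in V(\theta)$ is not admissible in the equation for $u^\ep$. This is exactly what Lemma \ref{lem:spcom1} resolves, and the strong $[H^1(Q)]^2$-convergence of the recovery sequence is precisely what makes all the weak--strong products converge; the verification that $u^*\in V(\theta)$ is then routine, being just the closedness of the quasiperiodicity constraint and of the two divergence constraints under weak $H^1$-convergence.
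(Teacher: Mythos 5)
Your proof is correct and follows essentially the same route as the paper's: a priori bound by testing with $u^\ep$, weak $[H^1(Q)]^2$ compactness, recovery sequences from Lemma \ref{lem:spcom1} to bridge the $\ep$-dependent test spaces, and uniqueness of the limit solution to upgrade from a subsequence to the whole sequence. The only difference is that you spell out the verification that the limit lies in $V(\theta)$ (which the paper leaves as ``readily shown''), and you make the coercivity constant's independence of $\ep$ explicit via Remark \ref{remequ}; both are welcome but not a change of method.
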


\begin{proof}
Let $u^\ep \in V(\theta^\ep)$ be the solution to
\begin{multline}
\label{apb.ee1}
\int_Q \nabla u_1^\ep \cdot \overline{\nabla \phi_1} + \gamma^{-1} \left( \div u^\ep \cdot \overline{\div \phi} + \div {u^\ep}^\perp \cdot \overline{\div \phi^\perp}\right) + \int_Q \epsilon_1 \left( u^\ep \cdot \phi + \gamma \chi_0 u^\ep_1 \phi_1 \right) \\ =  \int_Q \epsilon_1 \left( f^\ep \cdot \phi + \gamma \chi_0 f^\ep_1 \phi_1 \right), \qquad \forall \phi \in V(\theta^\ep).
\end{multline}
As the sequence $f^\ep$ is weakly convergent, it is bounded and by choosing $\phi = u^\ep$ in \eqref{apb.ee1}, we find that  $\norm{u^\ep}_{[H^1(Q)]^2}\le C\norm{f^\ep}_{[L^2(Q)]^2}$ for some constant $C>0$,
{\it i.e.} the sequence $u^\ep$ is bounded in $[H^1(Q)]^2.$ In particular, up to a subsequence, $u^\ep$ converges weakly in $[H^1(Q)]^2$ (hence strongly in $[L^2(Q)]^2$) to some $u \in [H^1(Q)]^2$. Moreover, it can be readily shown that $u \in V(\theta)$. 

Furthermore, for a fixed $\phi \in V(\theta)$ there exists, by Lemma \ref{lem:spcom1}, $\phi^\ep \in V(\theta^\ep)$ such that $\phi^\ep \rightarrow \phi$ strongly in $[H^1(Q)]^2$ as $\ep \rightarrow 0$. Choosing $\phi^\ep$ as the test function in \eqref{apb.ee1} and passing to the limit $\ep \rightarrow 0$ shows that  $u$ is a solution to 
\begin{multline*}
\int_Q \nabla u_1 \cdot \overline{\nabla \phi_1} + \gamma^{-1} \left( \div u \cdot \overline{\div \phi} + \div  u ^\perp \cdot \overline{\div \phi^\perp}\right) + \int_Q \epsilon_1 \left( u \cdot \phi + \gamma \chi_0 u_1 \phi_1 \right) \\ =  \int_Q \epsilon_1 \left( f \cdot \phi + \gamma \chi_0 f_1 \phi_1 \right), \qquad \forall \phi \in V(\theta).
\end{multline*}
By virtue of the fact that the solution $u$ is unique, see \eqref{normofV}, the above result holds for any subsequence of $u^\ep$ and therefore holds for the whole sequence $u^\ep$.
\end{proof}
\begin{proof}[Proof of Lemma \ref{apb.1}.]
\noindent \textit{Step 1:} Let $\theta, \theta^\ep$ satisfy the assumptions of the lemma. First we show that the operator sequence 
$R^\ep:=\big( \A(\theta^\ep) + I \big)^{-1}$ converges uniformly to $R:= \big(\A(\theta) + I \big)^{-1}$ as $\ep \rightarrow 0$, {\it i.e.} 
$$
\norm{R^\ep - R}{} = \sup_{\norm{f}_{[L^2(Q)]^2}=1}  \norm{R^\ep f - R f }_{[L^2(Q)]^2} \rightarrow 0 \quad \text{as} \ \ep \rightarrow 0.
$$
For all $\ep$, let $f^\ep$ be such that
$$
\sup_{\norm{f}_{[L^2(Q)]^2}=1}  \norm{R^\ep f - R f }_{[L^2(Q)]^2}\le\norm{R^\ep f^\ep - R f^\ep }_{[L^2(Q)]^2}+ \ep.
$$
Since $\norm{f^\ep}_{[L^2(Q)]^2} = 1$, the sequence $f^\ep$ has a subsequence that converges weakly to some $f\in [L^2(Q)]^2.$ Therefore, by Lemma \ref{apb.lem1} the 
sequence $R^\ep f^\ep$ converges to $R f$ strongly in $[L^2(Q)]^2.$ Furthermore, since $R$ is compact, we infer that $R f^\ep$ converges to $R f$ strongly in $[L^2(Q)]^2,$ and therefore
$$
\sup_{\norm{f}_{L^2(Q)}=1}\norm{R^\ep f - R f }_{[L^2(Q)]^2}\le\norm{R^\ep f^\ep - R f }_{[L^2(Q)]^2} + \norm{R f^\ep - R f }_{[L^2(Q)]^2} + \ep.
$$
The right-hand side of the above estimate converges to zero as $\ep \rightarrow 0.$ The result follows as this argument holds for any subsequence of $f^\ep$ and therefore for the whole sequence $f$. \\

\noindent \textit{Step 2:} We shall now show the continuity in $\theta$ of the eigenvalues $\lambda_k(\theta).$ To this end we establish that the eigenvalues 
$\mu_k(\theta)=\bigl(\lambda_k(\theta) + 1\bigr)^{-1}$ of the operator $\bigl(\A(\theta)+I\bigr)^{-1}$ are continuous in $\theta.$ 
To prove that $\mu_k(\theta)$ are continuous we note that for any $f \in [L^2(Q)]^2$ one has
$$
\frac{(R^\ep f ,f )_{[L^2(Q)]^2}}{\norm{f}_{[L^2(Q)]^2}^2} - \norm{R^\ep - R}{} \le \frac{(R f ,f )_{[L^2(Q)]^2}}{\norm{f}_{[L^2(Q)]^2}^2} \le \frac{(R^\ep f ,f )_{[L^2(Q)]^2}}{\norm{f}_{[L^2(Q)]^2}^2} + \norm{R^\ep - R}{},
$$
and  by the min-max variational principle 
$$
\mu_k(\theta) = \inf_{\substack{ F \subset [L^2(Q)]^2, \\ \text{dim}F =k}}\ \  \sup_{\substack{ f \in F, \\ \norm{f}_{[L^2(Q)]^2}=1}} \frac{(Rf ,f )_{[L^2(Q)]^2}}{\norm{f}_{[L^2(Q)]^2}^2}
$$
implies that $\bigl\vert \mu_k(\theta^\ep)-\mu_k(\theta)\bigr\vert \le\norm{R^\ep - R}{}.$ Invoking the uniform convergence of $R^\ep$ to $R$ as $\ep \rightarrow 0$ proves the result.
\end{proof}


\section{Lower semicontinuity of the spectrum under homogenisation}
In this section we shall be concerned with proving the spectral inclusion
\begin{equation}
\label{semicontspect}
\lim_{\ep \rightarrow 0} \sigma_\ep \supset \bigcup_{\theta \in [0, 1)^2} \sigma(\theta) = \sigma_0,
\end{equation}
where,  $\sigma(\theta)$ is the spectrum of the self-adjoint operator $\A(\theta)$ introduced in Section \ref{secblochlim}.  More precisely, we shall  show that if $\lambda \in \sigma_0$, then there exists $\lambda_\ep \in \sigma_\ep$ such that $\lambda_\ep \rightarrow \lambda$ as $\ep \rightarrow 0$.

Before we proceed, we shall present a slightly different characterisation of the set $\sigma_\ep$. Formally, $\lambda \in \sigma_\ep$ if there exists non-zero $v$ such that
\begin{equation}
\label{effectivecontrast}
\int_{\mathbb{R}^2} A^\ep(y) \nabla v \cdot \overline{\nabla \phi} \ \mathrm{d}y = \lambda \int_{\mathbb{R}^2} \rho(y)v \cdot \overline{\phi}\ \mathrm{d}y  \quad \forall \phi \in [C^{\infty}_{0}(\mathbb{R}^2)]^2,
\end{equation}
where $A^\ep = \tfrac{1}{\ep^2}\chi_1(y) A^\ep_{1} + \tfrac{1}{\epsilon_0 - \epsilon_1 + \ep^2} \chi_0(y)A^\ep_{0}$ and the second order constant symmetric tensors $A^\ep_{r},$ $r=0,1$ are given in matrix representation form $\{ A^\ep_{r} \}_{ijpq} = \{ A^{jq}_{\ep,r} \}_{ip}$,
\begin{align}
\label{tensor2}
A^{11}_{\ep,r} = \left(
\begin{matrix}
\epsilon_r & 0 \\
0 & \mu 
\end{matrix}
\right),  &  & 
A^{22}_{\ep,r}  = \left(
\begin{matrix}
\epsilon_r & 0 \\
0 & \mu 
\end{matrix}
\right), & &  A^{12}_{\ep,r} = - A^{21}_{\ep,r}  =  \left(
\begin{matrix}
0 &  \sqrt{\mu(\epsilon_1 - \ep^2)} \\
- \sqrt{\mu(\epsilon_1 - \ep^2)} & 0 
\end{matrix}
\right). 
\end{align} 
Henceforth, for a second order symmetric tensor $A$ that is given as follows
\begin{equation}
\label{tensor1}
A = \left( \begin{matrix}
A^{11} & A^{12} \\ A^{21} & A^{22}
\end{matrix} \right),
\end{equation}
it is to be understood that $A_{ijpq} = A^{jq}_{ip}$. Moreover, for a given matrix $B$ we introduce the vectors $b^{j}$, $j=1,2$, given by $b^j_i = B_{ij}$, then the tensor contraction $C = A B$  becomes a matrix vector product as follows
$$
\left( \begin{matrix}
c^1 \\ c^2 
\end{matrix} \right) = \left( \begin{matrix}
A^{11} & A^{12} \\ A^{21} & A^{22}
\end{matrix} \right) \left( \begin{matrix}
b^1 \\ b^2
\end{matrix} \right)= \left( \begin{matrix}
A^{11}b^1 + A^{12}b^2 \\ A^{21}b^1 + A^{22}b^2
\end{matrix} \right).
$$

For the non-trivial solution $v$ to \eqref{effectivecontrast}, and test function $\phi$, we perform the following scaling $u = (\sqrt{\epsilon_1} v_1 , \sqrt{\mu} v_2)$, $\varphi = (\sqrt{\epsilon_1} \phi_1 , \sqrt{\mu} \phi_2)$. Then, by a change of variables $x = \ep y$, \eqref{effectivecontrast} becomes: Find $u$ such that
\begin{equation}
\label{effectivecontrast2}
\int_{\mathbb{R}^2} A^\ep\left(\tfrac{x}{\ep}\right) \nabla u \cdot \overline{\nabla \phi} \ \mathrm{d}x =  \lambda \int_{\mathbb{R}^2} \rho\left(\tfrac{x}{\ep}\right)u \cdot \overline{\phi} \ \mathrm{d}x  \quad \forall \phi \in [C^{\infty}_{0}(\mathbb{R}^2)]^2.
\end{equation}
where 
\begin{equation}
\label{rho}
\rho(y) = \chi_1(y)\left( \begin{array}{cc} 1 & 0 \\ 0 & 1\end{array} \right) + \chi_0(y) \left( \begin{array}{cc} \frac{\epsilon_0}{\epsilon_1} & 0 \\ 0 & 1\end{array} \right), 
\end{equation}
$A^\ep(y) = \chi_1(y) A^\ep_{1} + \tfrac{\ep^2}{\epsilon_0 - \epsilon_1 + \ep^2} \chi_0(y)A^\ep_{0}$, and
\begin{align}
A^{\ep}_1 & = \left( \begin{matrix} 1 & 0 & 0 & \sqrt{(1 - \frac{\ep^2}{\epsilon_1})} \\ 0 & 1 & - \sqrt{(1 - \frac{\ep^2}{\epsilon_1})} & 0 \\
0 & - \sqrt{(1 - \frac{\ep^2}{\epsilon_1})} & 1 & 0 \\ \sqrt{(1 - \frac{\ep^2}{\epsilon_1})} & 0 & 0 &1 \end{matrix} \right),  \label{emextraeq1}\\
A^{\ep}_0 & = \left( \begin{matrix} \frac{\epsilon_0}{\epsilon_1} & 0 & 0 & \sqrt{(1 - \frac{\ep^2}{\epsilon_1})} \\ 0 & 1 & - \sqrt{(1 - \frac{\ep^2}{\epsilon_1})} & 0 \\
0 & - \sqrt{(1 - \frac{\ep^2}{\epsilon_1})} & \frac{\epsilon_0}{\epsilon_1} & 0 \\ \sqrt{(1 - \frac{\ep^2}{\epsilon_1})} & 0 & 0 &1 \end{matrix} \right). \label{emextraeq2}
\end{align} 
Thus, we conclude that the set $\sigma_\ep$ coincides with the spectrum of the high-contrast operator $\A^\ep : [H^1(\RR^2)]^2 \rightarrow [L^2(\RR^2)]^2$, 
generated by the bilinear form
$$
\a_\ep(u,v): = \int_{\mathbb{R}^2} A^\ep\left(\tfrac{x}{\ep}\right) \nabla u \cdot \overline{\nabla v} \ \mathrm{d}x,
$$
such that the mapping $\A^\ep u = f$ relates $f$ to the unique solution $u$ of
$$
\a_\ep(u,\phi) = \int_{\RR^2} \rho\left(\tfrac{x}{\ep}\right) f \cdot \overline{\phi}, \quad \forall \phi \in [C^\infty_0(\RR^2)]^2.
$$
The remainder of this section is devoted to studying the strong  two-scale resolvent limit of the operator $\A^\ep$.
\subsection{Periodic homogenisation}\label{sec:PChom}
Here we analyse the strong  $Q$-periodic  two-scale resolvent limit of the operator $\A^\ep$. That is, for a given $f^\ep \in [L^2(\RR^2)]^2$ two-scale converging to $f \in [L^2(\RR^2 \times Q)]^2$, we wish to study the two-scale limit of the sequence of solutions $u^\ep \in \left[H^1(\mathbb{R}^2)\right]^2$ to 
\begin{multline}
\label{finalproblem}
\int_{\mathbb{R}^2} \Big( \chi_1\left(\tfrac{x}{\ep}\right) A^\ep_{1} + \tfrac{\ep^2}{\epsilon_0 - \epsilon_1 + \ep^2} \chi_0\left(\tfrac{x}{\ep}\right)A^\ep_{0} \Big) \nabla u^\ep \cdot \overline{\nabla \phi} \ \mathrm{d}x +  \int_{\mathbb{R}^2} \rho\left(\tfrac{x}{\ep}\right) u^\ep \cdot \overline{\phi} \ \mathrm{d}x \\ = \int_{\mathbb{R}^2} \rho\left(\tfrac{x}{\ep}\right) f^{\ep} \cdot \overline{\phi} \ \mathrm{d}x  \quad \forall \phi \in \left[C^{\infty}_{0}(\mathbb{R}^2)\right]^2.
\end{multline}
Here $\rho$, $A^\ep_0$ and $A^\ep_1$ are given by \eqref{rho}-\eqref{emextraeq2}. It can readily seen that  $\big(\chi_1(y)A^\ep_1 + \chi_0(y)A^\ep_0\big)$ is (major) symmetric and uniformly positive definite in $y$ for sufficiently small  $\ep$. That is 
\begin{equation}
\label{emtensorprop}
\begin{aligned}
\big((\chi_1(y)A^\ep_1 + \chi_0(y)A^\ep_0)\big)_{ijpq} & =\big((\chi_1(y)A^\ep_1 + \chi_0(y)A^\ep_0)\big)_{pqij} \\
\big((\chi_1(y)A^\ep_1 + \chi_0(y)A^\ep_0)\big) \eta \cdot \eta & > \nu \vert \eta \vert^2, \quad \forall \eta \in \mathbb{C}^2, \forall y \in Q,
\end{aligned}
\end{equation} 
for some $\nu >0$ independent of $\ep$. In particular, \eqref{emtensorprop}  implies the existence and uniqueness of solutions to \eqref{finalproblem}. 

Introducing the following tensors
\begin{align}
\label{emdegentensor1}
a^{(1)}(y) & = \chi_1(y) \left( \begin{matrix}
1 & 0 & 0 & 1 \\ 0 & 1 & -1 & 0 \\ 0 & -1 & 1 & 0 \\ 1 & 0 & 0 & 1
\end{matrix}
\right), \\ a^{(0)}(y) & =\frac{1}{2 \epsilon_1}\chi_1(y)\left( \begin{matrix}
0 & 0 & 0 & - 1 \\ 0 & 0 & 1 & 0 \\ 0 & 1 & 0 & 0 \\ -1 & 0 & 0 & 0
\end{matrix} \right) + \frac{1}{(\epsilon_0 - \epsilon_1)}\chi_0(y) \left( \begin{matrix}
\frac{\epsilon_0}{\epsilon_1} & 0 & 0 & 1 \\ 0 & 1 & -1 & 0 \\ 0 & -1 & \frac{\epsilon_0}{\epsilon_1} & 0 \\ 1 & 0 & 0 & 1
\end{matrix} \right), \label{emdegentensor2}
\end{align}
we present our first important result
\begin{lem}
\label{emextralem1}
Let $u^\ep$ be the solution to \eqref{finalproblem}. Then there exist constants $C>0$, $ \ep_0 > 0$ such that for all $\ep < \ep_0$
\begin{align}
\norm{u^\ep}_{[L^2(\mathbb{R}^2)]^2} & \le C\norm{f^\ep}_{[L^2(\mathbb{R}^2)]^2} , \label{emextralemeq1}\\
\norm{\ep \nabla u^\ep}_{[L^2(\mathbb{R}^2)]^{2\times2}} & \le C\norm{f^\ep}_{[L^2(\mathbb{R}^2)]^2} , \label{emextralemeq2}\\
\norm{\left( a^{(1)}\left(\tfrac{x}{\ep}\right)\right)^{1/2} \nabla u^\ep}_{[L^2(\mathbb{R}^2)]^{2\times2}} & \le C\norm{f^\ep}_{[L^2(\mathbb{R}^2)]^2}   \label{emextralemeq3}.
\end{align}
Furthermore,
\begin{multline}
\lim_{\ep \rightarrow 0}\sup_{\substack{\phi \in [H^1(\RR^2)]^2 \\ \norm{\phi}_{H^1}=1}} \left\vert \int_{\mathbb{R}^2}  \left( \chi_1\left(\tfrac{x}{\ep}\right) A^\ep_{1} + \tfrac{\ep^2}{\epsilon_0 - \epsilon_1 + \ep^2} \chi_0\left(\tfrac{x}{\ep}\right)A^\ep_{0} \right) \nabla u^\ep \cdot \overline{\nabla \phi} \ \mathrm{d}x \right. \\ - \left. \int_{\mathbb{R}^2} \left[ a^{(1)}\left(\tfrac{x}{\ep}\right) + \ep^2 a^{(0)}\left(\tfrac{x}{\ep}\right) \right] \nabla u^\ep \cdot \overline{\nabla \phi} \ \mathrm{d}x \right\vert = 0. \label{emextralemeq4}
\end{multline}
\end{lem}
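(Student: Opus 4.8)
The plan is to extract all four assertions from a single energy identity, supplemented by elementary inspection of the explicit $4\times 4$ matrices \eqref{emextraeq1}--\eqref{emextraeq2}. Write $B^\ep(y):=\chi_1(y)A^\ep_1+\tfrac{\ep^2}{\epsilon_0-\epsilon_1+\ep^2}\chi_0(y)A^\ep_0$ for the coefficient tensor of \eqref{finalproblem}: it is major symmetric and, for $\ep>0$, positive definite (it equals $A^\ep_1$ on $Q_1$ and a positive multiple of $A^\ep_0$ on $Q_0$). First I would test \eqref{finalproblem} with $\phi=u^\ep$ and use positivity of $B^\ep$ together with the Cauchy--Schwarz inequality in $L^2_\rho(\RR^2)$ to obtain
\begin{multline*}
\int_{\RR^2} B^\ep\!\left(\tfrac{x}{\ep}\right)\nabla u^\ep\cdot\overline{\nabla u^\ep}\,\mathrm{d}x+\int_{\RR^2}\rho\!\left(\tfrac{x}{\ep}\right) u^\ep\cdot\overline{u^\ep}\,\mathrm{d}x=\int_{\RR^2}\rho\!\left(\tfrac{x}{\ep}\right) f^\ep\cdot\overline{u^\ep}\,\mathrm{d}x \\ \le\Bigl(\int_{\RR^2}\rho f^\ep\cdot\overline{f^\ep}\,\mathrm{d}x\Bigr)^{1/2}\Bigl(\int_{\RR^2}\rho u^\ep\cdot\overline{u^\ep}\,\mathrm{d}x\Bigr)^{1/2}.
\end{multline*}
Since $\rho$ is bounded above and below by positive constants (see \eqref{rho}), this at once gives \eqref{emextralemeq1} and, dropping the non-negative second term, the master bound $\int_{\RR^2} B^\ep(\tfrac{x}{\ep})\nabla u^\ep\cdot\overline{\nabla u^\ep}\,\mathrm{d}x\le C\norm{f^\ep}_{[L^2(\RR^2)]^2}^2$, on which the remaining bounds rest.

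Next I would record two algebraic facts about these matrices. Setting $s:=\sqrt{1-\ep^2/\epsilon_1}$, a direct computation gives the clean identity $A^\ep_1=s\,a^{(1)}|_{Q_1}+(1-s)\,\mathrm{Id}_4$ on $Q_1$, where $a^{(1)}|_{Q_1}$ is the constant positive semidefinite matrix of \eqref{emdegentensor1} and $\mathrm{Id}_4$ the $4\times4$ identity, while on $Q_0$ the two $2\times2$ blocks of $A^\ep_0$ have determinant $\epsilon_0/\epsilon_1-s^2=\gamma+\ep^2/\epsilon_1\ge\gamma>0$ and uniformly bounded trace, so $A^\ep_0\ge c_0\,\mathrm{Id}_4$ uniformly in $\ep$. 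From these I get, for small $\ep$, the uniform lower bound $B^\ep(y)\ge c\,\ep^2\,\mathrm{Id}_4$ (on $Q_1$ because $1-s=\tfrac{\ep^2/\epsilon_1}{1+s}\ge\tfrac{\ep^2}{2\epsilon_1}$, on $Q_0$ because $\tfrac{\ep^2}{\epsilon_0-\epsilon_1+\ep^2}$ is comparable to $\ep^2$), which with the master bound gives \eqref{emextralemeq2}. Likewise $a^{(1)}(y)\le\tfrac1s A^\ep_1\le 2A^\ep_1$ on $Q_1$ (for $\ep$ small) and $a^{(1)}\equiv0$ on $Q_0$, hence $a^{(1)}(\tfrac{x}{\ep})\le 2B^\ep(\tfrac{x}{\ep})$ pointwise, which with the master bound gives \eqref{emextralemeq3}.

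For the asymptotic equivalence \eqref{emextralemeq4}, I would Taylor-expand in $\ep$. Plugging $s=1-\tfrac{\ep^2}{2\epsilon_1}+O(\ep^4)$ into the identity above yields $A^\ep_1=a^{(1)}|_{Q_1}+\tfrac{\ep^2}{2\epsilon_1}\bigl(\mathrm{Id}_4-a^{(1)}|_{Q_1}\bigr)+O(\ep^4)$ on $Q_1$; and $\tfrac{\ep^2}{\epsilon_0-\epsilon_1+\ep^2}=\tfrac{\ep^2}{\epsilon_0-\epsilon_1}+O(\ep^4)$ together with $A^\ep_0=A^\ep_0\big|_{\ep=0}+O(\ep^2)$ gives $\tfrac{\ep^2}{\epsilon_0-\epsilon_1+\ep^2}A^\ep_0=\tfrac{\ep^2}{\epsilon_0-\epsilon_1}A^\ep_0\big|_{\ep=0}+O(\ep^4)$ on $Q_0$. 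Since $\tfrac1{2\epsilon_1}(\mathrm{Id}_4-a^{(1)}|_{Q_1})$ and $\tfrac1{\epsilon_0-\epsilon_1}A^\ep_0\big|_{\ep=0}$ are precisely the $Q_1$- and $Q_0$-parts of $a^{(0)}$ in \eqref{emdegentensor2}, this reads $B^\ep(y)=a^{(1)}(y)+\ep^2 a^{(0)}(y)+R^\ep(y)$ with $\norm{R^\ep}_{L^\infty}=O(\ep^4)$. Then Cauchy--Schwarz and \eqref{emextralemeq2} give
\begin{multline*}
\Bigl|\int_{\RR^2}\bigl(B^\ep(\tfrac{x}{\ep})-a^{(1)}(\tfrac{x}{\ep})-\ep^2a^{(0)}(\tfrac{x}{\ep})\bigr)\nabla u^\ep\cdot\overline{\nabla\phi}\,\mathrm{d}x\Bigr| \\ \le\norm{R^\ep}_{L^\infty}\,\norm{\nabla u^\ep}_{[L^2(\RR^2)]^{2\times2}}\,\norm{\nabla\phi}_{[L^2(\RR^2)]^{2\times2}}\le C\,\ep^{3}\norm{f^\ep}_{[L^2(\RR^2)]^2}\,\norm{\phi}_{H^1},
\end{multline*}
whose right-hand side, after taking the supremum over $\norm{\phi}_{H^1}=1$, tends to $0$ (the data $f^\ep$ being bounded in $[L^2(\RR^2)]^2$, as in the two-scale homogenisation setting where the lemma is used). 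This proves \eqref{emextralemeq4}.

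I do not expect a genuine analytic obstacle: the argument is the standard high-contrast a priori estimate plus careful book-keeping of the explicit tensors. The only delicate point is the pointwise matrix comparison $a^{(1)}\le 2A^\ep_1$ on $Q_1$ (and, more generally, identifying the degeneracy rate as exactly $\ep^2$ and the first corrector as $a^{(0)}$), which hinges on $s$ remaining bounded away from $0$ as $\ep\to0$ and is made transparent by the decomposition $A^\ep_1=s\,a^{(1)}|_{Q_1}+(1-s)\mathrm{Id}_4$.
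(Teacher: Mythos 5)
Your proof is correct and follows essentially the same route as the paper: test \eqref{finalproblem} with $u^\ep$ to get the energy identity, deduce \eqref{emextralemeq1}--\eqref{emextralemeq3} from pointwise comparisons of the explicit tensors, and obtain \eqref{emextralemeq4} from the expansion $\chi_1A^\ep_1+\tfrac{\ep^2}{\epsilon_0-\epsilon_1+\ep^2}\chi_0A^\ep_0=a^{(1)}+\ep^2a^{(0)}+O(\ep^4)$ combined with \eqref{emextralemeq2} and the boundedness of $f^\ep$. The only (minor) difference is that your decomposition $A^\ep_1=s\,a^{(1)}|_{Q_1}+(1-s)\mathrm{Id}_4$ gives \eqref{emextralemeq3} directly from the pointwise bound $a^{(1)}\le 2\bigl(\chi_1A^\ep_1+\tfrac{\ep^2}{\epsilon_0-\epsilon_1+\ep^2}\chi_0A^\ep_0\bigr)$, whereas the paper reaches it via the identity $a^{(1)}\nabla u^\ep\cdot\overline{\nabla u^\ep}=\chi_1A^\ep_1\nabla u^\ep\cdot\overline{\nabla u^\ep}+2(1-s)\chi_1\bigl(u^\ep_{1,1}u^\ep_{2,2}-u^\ep_{2,1}u^\ep_{1,2}\bigr)$ together with the already-established estimate \eqref{emextralemeq2}.
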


\begin{proof}
 Henceforth, while constants may change between successive inequalities we shall always denote them by the letter $C$. Taking $\phi = u^\ep$ in \eqref{finalproblem} gives
\begin{multline}
\label{emextraeq3}
\int_{\mathbb{R}^2}  \big( \chi_1\left(\tfrac{x}{\ep}\right) A^\ep_{1} + \tfrac{\ep^2}{\epsilon_0 - \epsilon_1 + \ep^2} \chi_0\left(\tfrac{x}{\ep}\right)A^\ep_{0} \big) \nabla u^\ep \cdot \overline{\nabla u^\ep} \ \mathrm{d}x  + \int_{\mathbb{R}^2} \rho\left(\tfrac{x}{\ep}\right) u^\ep \cdot \overline{u^\ep} \ \mathrm{d}x \\ = \int_{\mathbb{R}^2} \rho\left(\tfrac{x}{\ep}\right) f^{\ep} \cdot \overline{u^\ep} \ \mathrm{d}x.
\end{multline}
Using  \eqref{rho},\eqref{emextraeq3} and the Cauchy-Schwarz inequality we find that
\begin{flalign*}
 \int_{\mathbb{R}^2} \vert u^\ep \vert^2  & \le C \int_{\mathbb{R}^2} \rho\left(\tfrac{x}{\ep}\right) u^\ep \cdot \overline{u^\ep} \le C \int_{\mathbb{R}^2} \rho\left(\tfrac{x}{\ep}\right) f^\ep \cdot \overline{u^\ep} \\
& \le C \int_{\mathbb{R}^2}  f^\ep \cdot \overline{u^\ep} \le C \left( \int_{\mathbb{R}^2} \vert f^\ep \vert^2 \right)^{1/2}\left( \int_{\mathbb{R}^2} \vert u^\ep \vert^2 \right)^{1/2},
\end{flalign*}
resulting in \eqref{emextralemeq1}. Similarly \eqref{emextralemeq2} holds due to  \eqref{emtensorprop}, \eqref{emextralemeq1}, \eqref{emextraeq3}  and the observation that for sufficiently small $\ep$
\begin{flalign*}
\frac{\ep^2}{\epsilon_0 - \epsilon_1 + \ep^2} \int_{\mathbb{R}^2} \vert \nabla u^\ep \vert^2 & \le \frac{\nu \ep^2}{\epsilon_0 - \epsilon_1 + \ep^2} \int_{\mathbb{R}^2} \big(\chi_1\left(\tfrac{x}{\ep}\right)A^\ep_1 + \chi_0\left(\tfrac{x}{\ep}\right)A^\ep_0\big) \nabla u^\ep \cdot \overline{\nabla u^\ep} \\ & \le \nu \int_{\mathbb{R}^2} \left(\chi_1\left(\tfrac{x}{\ep}\right)A^\ep_1 + \frac{\ep^2}{\epsilon_0 - \epsilon_1 + \ep^2}\chi_0\left(\tfrac{x}{\ep}\right) A^\ep_0 \right) \nabla u^\ep \cdot \overline{\nabla u^\ep}.
\end{flalign*}

For fixed sufficiently small $\ep >0 $ such that $\ep < \epsilon_1$ and $\ep < (\epsilon_0 - \epsilon_1)$ we know that 
\begin{align*}
\left(1 - \frac{\ep^2}{\epsilon_1} \right)^{1/2} & = 1 - \frac{\ep^2}{2\epsilon_1} + O(\ep^4), \\
\frac{\ep^2}{\epsilon_0 - \epsilon_1 + \ep^2} & = \frac{\ep^2}{\epsilon_0 - \epsilon_1} + O(\ep^4).
\end{align*}
We further observe that, by \eqref{emextraeq1}, one has
\begin{multline*}
\int_{\mathbb{R}^2} \vert \left( a^{(1)}\left(\tfrac{x}{\ep}\right)\right)^{1/2} \nabla u^\ep \vert^2 = \int_{\mathbb{R}^2} a^{(1)}\left(\tfrac{x}{\ep}\right) \nabla u^ \ep \cdot \overline{\nabla u^\ep} =  \int_{\mathbb{R}^2} \chi_1\left(\tfrac{x}{\ep}\right) A^{\ep}_1 \nabla u^ \ep \cdot \overline{\nabla u^\ep}  \\ + 2\left(1 - \sqrt{1 - \tfrac{\ep^2}{\epsilon_1}} \right)\int_{\mathbb{R}^2} \chi_1\left(\tfrac{x}{\ep}\right) \left(  u^\ep_{1,1}u^\ep_{2,2} - u^\ep_{2,1}u^\ep_{1,2} \right).
\end{multline*}
Therefore \eqref{emextralemeq3} holds by \eqref{emextralemeq1}, \eqref{emextralemeq2}, \eqref{emextraeq3}  and the Cauchy-Schwarz inequality.

It remains to prove \eqref{emextralemeq4}.  For $A^\ep_0, A^\ep_1$, given by \eqref{emextraeq1}-\eqref{emextraeq2}, we find by direct calculation that
$$
\chi_1(y)A^\ep_1 + \frac{\ep^2}{\epsilon_0 - \epsilon_1 + \ep^2}\chi_0(y) A^\ep_0 = a^{(1)}(y) + \ep^2 a^{(0)}(y) + R^{\ep}(y),
$$
where
$$
R^{\ep}(y) = O(\ep^4).
$$
Equation \eqref{emextralemeq4} then follows from \eqref{emextralemeq2}.
\end{proof}

Recalling the space $V(\theta)$ introduced in Section \ref{secblochlim}, we denote by $V$ the space $V(0)$, i.e.
\begin{equation}
\label{emspaceV}
V = \left\{ v \in [H^1_{\#}(Q)]^2 : \text{ $\div{v} = 0$ and  $\div{v^{\perp}} = 0$ in  $Q_1$}\right\}.
\end{equation}

Following the general approach to partial degenerate PDEs of \cite{VPSIVK}, we introduce the following space
	\begin{gather}
	W\,:=\,\left\{\, \psi\,\in\,\left[L^2(Q)\right]^{2\times
		2}\,\left\vert \, \mbox{
		div}\left(\,\left(a^{(1)}\right)^{1/2}\psi\,\right)\,=\,0
	\ \mbox{ in  }
	\left[H_\#^{-1}(Q)\right]^2\,\right.\right\}, \label{eq:pdhom10}
	\end{gather}
and present the following result
\begin{lem}\label{lem:pdhom1}
Suppose $f^\ep \in [L^2(\RR^2)]^2$ two-scale converges to $f \in [L^2(\RR^2 \times Q)]^2$ and consider the sequence of solutions $u^\ep$ to problems \eqref{finalproblem}. Then, there exists $u(x,y)\in L^2\left(\RR^2;\, V\right)$ and
	$\xi(x,y)\in L^2\left(\RR^2; \, W\right)$ such that, up
	to extracting a subsequence in $\varepsilon$ which we do not
	relabel,
	\begin{eqnarray}
	u^\varepsilon&\stackrel{2}\rightharpoonup& u(x,y)         \label{eq:pdhom6} \\
	\varepsilon \nabla u^\varepsilon  &\stackrel{2}\rightharpoonup& \nabla_y u(x,y)           \label{eq:pdhom7} \\
	\left( a^{(1)}\left(\tfrac{x}{\ep}\right)\right)^{1/2} \nabla u^\varepsilon\,   &\stackrel{2}\rightharpoonup& \xi(x,y).
	\label{eq:pdhom8}
	\end{eqnarray}
Henceforth, we use the symbol $\twoscale$ to denote two-scale convergence with respect to the periodic reference cell $Q$.
Moreover, the following identity holds: 
\begin{multline}
\label{eq:pdhom13} 
 \dblint{\RR^2}{\xi(x,y)\,\cdot\,\Psi(x,y)}= 
-\dblint{\RR^2}{u(x,y) \cdot \left(a^{(1)}(y)\right)^{1/2} {\rm div}_x{
		\Psi(x,y)}}, \\ \forall \Psi(x,y)\in C^\infty(\RR^2;\,W \cap [C^\infty_{\#}(Q)]^{2 \times 2}) .
\end{multline}
\end{lem}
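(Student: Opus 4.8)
The plan is to obtain \eqref{eq:pdhom13} as the passage to the two-scale limit in a carefully chosen weak formulation, using the a priori bounds and the ``asymptotic replacement'' \eqref{emextralemeq4} from Lemma \ref{emextralem1} to reduce \eqref{finalproblem} to something whose limit is tractable. First I would establish the convergences \eqref{eq:pdhom6}--\eqref{eq:pdhom8}: the bound \eqref{emextralemeq1} gives a (subsequential) two-scale limit $u(x,y)$; the bound \eqref{emextralemeq2} gives that $\ep\nabla u^\ep$ two-scale converges, and a standard argument identifies the limit with $\nabla_y u$ where $u(x,\cdot)\in[H^1_\#(Q)]^2$ for a.e.\ $x$; finally \eqref{emextralemeq3} gives that $\big(a^{(1)}(x/\ep)\big)^{1/2}\nabla u^\ep$ two-scale converges to some $\xi(x,y)\in[L^2(\RR^2\times Q)]^{2\times2}$. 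To see that $u(x,\cdot)\in V$ for a.e.\ $x$, I would test \eqref{finalproblem}, or rather use \eqref{emextralemeq3} directly, against oscillating test functions supported in the ``soft'' phase $Q_1$: since $a^{(1)}$ restricted to $Q_1$ is (up to the scalar factor) the projection onto the subspace encoding $\div v=0$, $\div v^\perp=0$ (these are exactly the conjugate Cauchy--Riemann relations noted in the remark before Lemma \ref{lem:spcom2}), the finiteness of $\big(a^{(1)}\big)^{1/2}\nabla u^\ep$ forces $\div_y u=\div_y u^\perp=0$ in $Q_1$ in the limit. That $\xi(x,\cdot)\in W$ for a.e.\ $x$ is the divergence-free constraint on the flux, which is the natural dual object; it follows by testing \eqref{finalproblem} with $\phi(x)=\ep\,\eta(x)\psi(x/\ep)$ for $\psi\in[C^\infty_\#(Q)]^{2\times 2}$-type test functions and passing to the two-scale limit, the key point being that $\ep\nabla\phi = \eta(x)(\nabla_y\psi)(x/\ep)+O(\ep)$ so only the $y$-derivatives survive, producing $\div_y\big((a^{(1)})^{1/2}\xi\big)=0$.

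The identity \eqref{eq:pdhom13} itself is the ``compatibility'' relation linking $\xi$ and $u$ at the two-scale level: it says that the limiting flux $\xi$ is, in a distributional-in-$x$ sense, the $x$-gradient of $u$ filtered through $(a^{(1)})^{1/2}$. To prove it I would take $\Psi(x,y)\in C^\infty(\RR^2;W\cap[C^\infty_\#(Q)]^{2\times2})$ as in the statement and test the relation \eqref{eq:pdhom8}, i.e.\ write
\[
\int_{\RR^2}\big(a^{(1)}(\tfrac{x}{\ep})\big)^{1/2}\nabla u^\ep(x)\cdot\Psi(x,\tfrac{x}{\ep})\,\mathrm{d}x \;\longrightarrow\; \dblint{\RR^2}{\xi(x,y)\cdot\Psi(x,y)},
\]
which holds by definition of two-scale convergence since $\Psi$ is an admissible (smooth, $Q$-periodic in the fast variable) test function. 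On the left-hand side I would integrate by parts in $x$, moving the gradient off $u^\ep$:
\[
\int_{\RR^2}\big(a^{(1)}(\tfrac{x}{\ep})\big)^{1/2}\nabla u^\ep\cdot\Psi(x,\tfrac{x}{\ep})
= -\int_{\RR^2}u^\ep\cdot\mathrm{div}_x\Big(\big(a^{(1)}(\tfrac{x}{\ep})\big)^{1/2}\Psi(x,\tfrac{x}{\ep})\Big).
\]
Now $\mathrm{div}_x\big((a^{(1)}(x/\ep))^{1/2}\Psi(x,x/\ep)\big) = \big(a^{(1)}(x/\ep)\big)^{1/2}\mathrm{div}_x\Psi(x,x/\ep) + \tfrac1\ep\,\mathrm{div}_y\big((a^{(1)})^{1/2}\Psi\big)(x,x/\ep)$, and the singular $\ep^{-1}$ term \emph{vanishes identically} precisely because $\Psi(x,\cdot)\in W$ for every $x$ — this is the whole reason $W$ was defined with the constraint $\mathrm{div}\big((a^{(1)})^{1/2}\psi\big)=0$. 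Hence the right-hand side is $-\int_{\RR^2}u^\ep\cdot\big(a^{(1)}(x/\ep)\big)^{1/2}\mathrm{div}_x\Psi(x,x/\ep)$, whose two-scale limit is exactly $-\dblint{\RR^2}{u(x,y)\cdot(a^{(1)}(y))^{1/2}\mathrm{div}_x\Psi(x,y)}$ since $u^\ep\twoscale u$ and $(a^{(1)}(y))^{1/2}\mathrm{div}_x\Psi(x,y)$ is an admissible two-scale test function. Equating the two limits gives \eqref{eq:pdhom13}.

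The step I expect to be the main obstacle is \emph{not} the integration-by-parts manipulation — that is essentially bookkeeping once $W$ is in hand — but rather the rigorous justification that the constraint-encoding convergences hold, specifically: (a) that the two-scale limit of $\ep\nabla u^\ep$ really is $\nabla_y u$ with $u(x,\cdot)\in[H^1_\#(Q)]^2$ (the standard lemma, but one must be careful that $u$ is then the same object appearing in \eqref{eq:pdhom6}, which requires the usual oscillating-test-function argument), and more delicately (b) that $u(x,\cdot)\in V$, i.e.\ that the degenerate directions of $a^{(1)}$ in $Q_1$ force the Cauchy--Riemann constraints in the limit. For (b) one cannot simply read off a bound; one has to observe that $\big(a^{(1)}\big)^{1/2}\nabla u^\ep$ being bounded does \emph{not} immediately bound $\div_y$-type quantities of the limit, so instead I would argue through the two-scale limit of $\big(a^{(1)}(x/\ep)\big)^{1/2}\nabla u^\ep$ combined with the structure $a^{(1)}=\chi_1 P$ where $P$ is (proportional to) the orthogonal projection onto $\mathrm{span}\{e_1\otimes e_1+e_2\otimes e_2+\dots\}$ — more precisely, $a^{(1)}(y)\nabla v\cdot\nabla v = \chi_1(y)\big(|\div v|^2 + |\div v^\perp|^2 + \text{(nonneg.\ cross terms)}\big)$ up to the explicit algebra in \eqref{emdegentensor1} — so that passing the quadratic form to the two-scale liminf yields $\chi_1(y)\big(|\div_y u|^2+|\div_y u^\perp|^2\big)\le$ (something finite in $L^1$), while the \emph{full} $\nabla_y u$ being only in $L^2$ does not by itself give this. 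Handling the cross terms in $a^{(1)}$ correctly, and matching the bound \eqref{emextralemeq3} (whose proof already isolated the $u^\ep_{1,1}u^\ep_{2,2}-u^\ep_{2,1}u^\ep_{1,2}$ correction) with the two-scale liminf of the ``hard'' part of the quadratic form, is where the genuine care is needed; the rest follows the template of \cite{VPSIVK}.
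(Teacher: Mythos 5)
Your overall route coincides with the paper's: extract two-scale limits from the bounds \eqref{emextralemeq1}--\eqref{emextralemeq3}, identify the limit of $\ep\nabla u^\ep$ as $\nabla_y u$ by the standard oscillating-test-function argument, obtain $\xi\in L^2(\RR^2;W)$ by testing \eqref{finalproblem} with $\ep$-scaled oscillating test functions so that only the $\nabla_y$ part survives, and derive \eqref{eq:pdhom13} by integrating by parts in $x$ and observing that the singular $\ep^{-1}\,{\rm div}_y\big((a^{(1)})^{1/2}\Psi\big)$ term vanishes identically because $\Psi(x,\cdot)\in W$. That last computation is exactly the paper's proof of \eqref{eq:pdhom13} (the paper merely adds a density reduction to compactly supported $\Psi$), and your treatment of $\xi\in W$ matches as well.

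The one step where your argument as written does not close is the membership $u(x,\cdot)\in V$. You assert that ``the finiteness of $(a^{(1)})^{1/2}\nabla u^\ep$ forces $\div_y u=\div_y u^\perp=0$ in $Q_1$ in the limit,'' and your elaboration only yields $\chi_1(y)\big(\vert\div_y u\vert^2+\vert\div_y u^\perp\vert^2\big)\le$ (something finite), which is vacuous since $\nabla_y u$ is already in $L^2$. Boundedness alone gives nothing; the missing ingredient is the extra factor of $\ep$. The correct (and simpler) argument, which is the paper's, is linear rather than quadratic: since $a^{(1)}$ is proportional to $(a^{(1)})^{1/2}$ (it is $\chi_1$ times a multiple of a projection), the bound \eqref{emextralemeq3} makes $a^{(1)}(x/\ep)\nabla u^\ep$ bounded in $L^2$, hence $\ep\, a^{(1)}(x/\ep)\nabla u^\ep\rightarrow 0$ strongly in $L^2$ and in particular two-scale converges to $0$; on the other hand, testing against $\Phi(x,x/\ep)$ and using \eqref{eq:pdhom7} together with the admissibility of the periodic multiplier $a^{(1)}$ shows that the same sequence two-scale converges to $a^{(1)}(y)\nabla_y u(x,y)$. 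Equating the two limits gives $a^{(1)}(y)\nabla_y u=0$ a.e., which by \eqref{emdegentensor1} is precisely $\div_y u=\div_y u^\perp=0$ in $Q_1$, i.e. $u(x,\cdot)\in V$. Your worry about the cross terms $u^\ep_{1,1}u^\ep_{2,2}-u^\ep_{2,1}u^\ep_{1,2}$ is a red herring here: those enter only in the proof of the a priori bound \eqref{emextralemeq3} itself, not in the identification of the limit constraint. (A quadratic route would also work, but you would need to pass $\ep^2\int a^{(1)}(x/\ep)\nabla u^\ep\cdot\overline{\nabla u^\ep}\le C\ep^2\rightarrow 0$ to the two-scale liminf and conclude the limit form is \emph{zero}, not merely finite.)
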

\begin{proof}
	According to the theorem on relative (weak) two-scale
	compactness of a bounded sequences in $L^2$, see e.g.
	\cite{nguetseng,allaire,zhikov1}, the  estimates
	\eqref{emextralemeq1}-\eqref{emextralemeq3} imply, up to a strict\footnote{We shall see in Subsection \ref{sec:quasiotwoscale} below that the entire sequence does not converge. In particular, $u^\ep$ does not strongly two-scale converge to $u$. } subsequence in
	$\varepsilon$ which we shall no relabel, their  exists  $u \in \left[L^2\left(\RR^2; H^1_{\#}(Q)\right)\right]^{2}$, $\eta, \xi\in \left[L^2\left(\RR^2 \times
	Q\right)\right]^{2\times 2}$  such that  
\begin{eqnarray*}
u^\varepsilon&\stackrel{2}\rightharpoonup& u(x,y),   \\
\varepsilon \nabla u^\varepsilon  &\stackrel{2}\rightharpoonup& \eta(x,y),        \\
\left( a^{(1)}\left(\tfrac{x}{\ep}\right)\right)^{1/2} \nabla
u^\varepsilon\,   &\stackrel{2}\rightharpoonup& \xi(x,y).
\end{eqnarray*}
	
	Furthermore, by the definition of two-scale convergence, for fixed $\phi \in C^\infty_0 (\RR^2)$, $\varphi \in [C^\infty_{\#} (Q)]^2$ and $\psi \in [C^\infty_{\#} (Q)]^{2 \times 2}$ one have
	\begin{gather*}
	\lim_{\ep \rightarrow 0}\int_{\RR^2} u^\ep(x) \cdot \phi(x) \varphi(x / \ep) \ \mathrm{d}x = \dblint{\RR^2}{u(x,y) \cdot \phi(x) \varphi(y)} \\ 
	\lim_{\ep \rightarrow 0}\int_{\RR^2} \ep \nabla u^\ep(x) \cdot \phi(x) \psi(x / \ep) \ \mathrm{d}x =  \dblint{\RR^2}{\eta(x,y) \cdot \phi(x) \psi(y)}.
	\end{gather*}
	Noticing, via integration by parts, that
	\begin{multline*}
	\int_{\RR^2} \ep \nabla u^\ep(x) \cdot \phi(x) \psi(x / \ep) \ \mathrm{d}x = - \ep \int_{\RR^2}  u^\ep(x) \cdot \psi(x / \ep)   \nabla_x\hspace{0pt} \phi(x)\ \mathrm{d}x  \\ -
	\int_{\RR^2}  u^\ep(x) \cdot \phi(x) {\rm div}_y\hspace{0pt}{\psi(x / \ep)} \ \mathrm{d}x \overset{\ep \rightarrow 0}{\longrightarrow }  - \dblint{\RR^2}{ u(x,y) \cdot  \phi(x) {\rm div}_y{\psi(y)}}, 
	\end{multline*}
	we find
	\begin{equation}
	\label{addedyreg}
	\dblint{\RR^2}{\eta(x,y) \cdot \phi(x) \psi(y)} =  - \dblint{\RR^2}{ u(x,y) \cdot \phi(x){\rm div}_y{\psi(y)}}, 
	\end{equation}
	where for a matrix $M$, we define its divergence by $\{ {\rm div}M \}_i : = M_{i j, j}$. Therefore, the arbitrariness of $\phi, \psi$ and equation \eqref{addedyreg} imply that $u(x,y) \in \left[L^2(\RR^2 ; H^1_{\#}(Q))\right]^2$ with weak derivative $\nabla_y u(x,y) = \eta(x,y).$
	
	We argue that $u\in L^2(\RR^2 ; V)$, see \eqref{emspaceV}. Indeed, by \eqref{emextralemeq3} one has for a fixed $\Phi \in [C^\infty_0(\RR^2 ; C^\infty_{\#}(Q))]^{2\times2}$ that
	$$
	\lim_{\ep \rightarrow 0}\int_{\RR^2} \ep a^{(1)} \left(\tfrac{x}{\ep}\right) \nabla u^\ep \cdot  \Phi\left(x , \tfrac{x}{\ep} \right) = 0.
	$$
	Yet, on the other hand, by \eqref{eq:pdhom7} one has
	$$
	\lim_{\ep \rightarrow 0}\int_{\RR^2} \ep a^{(1)} \left(\tfrac{x}{\ep}\right) \nabla u^\ep \cdot \Phi\left(x , \tfrac{x}{\ep} \right) \ \mathrm{d}x = \int_{\RR^2}\int_Q a^{(1)}(y) \nabla_y u(x,y) \cdot \Phi\left(x , y \right) \ \mathrm{d}y \mathrm{d}x.
	$$
	By the arbitrariness of $\Phi$, we deduce that $a^{(1)} \nabla_y u = 0$ a.e. $x \in \RR^2$ and therefore, by \eqref{emdegentensor1}, $u( x , \cdot) \in V$.

	Similarly, one can show that  $\xi(x,y)\in L^2\left(\RR^2; \,
	W\right)$. Indeed, by choosing, in \eqref{finalproblem}, test functions of the form
	$\phi^\varepsilon(x)=\,
	\varepsilon\,\phi\left(x,\frac{x}{\varepsilon}\right)$ for
	any $\phi(x,y)\in \left[C^\infty_0\left(\RR^2;
	C_\#^\infty(Q)\right)\right]^{2}$, we notice via \eqref{emextralemeq2}-\eqref{emextralemeq4}, that 
	\begin{equation*}
	\lim_{\varepsilon \to 0} \int_{\RR^2} \Big(
	\chi_1 \left(\frac{x}{\varepsilon}\right) A^\ep_1   + \frac{\e^2}{\epsilon_0 - \epsilon_1 + \ep^2} \chi_0 \left(\frac{x}{\varepsilon}\right)A^\ep_0 \Big) \nabla
	u^\varepsilon(x) \,\cdot\,\varepsilon\nabla \phi\left(x,
	\frac{x}{\varepsilon}\right) \,dx\, = 0
	\end{equation*}
	In particular, by  \eqref{emextralemeq3} and \eqref{emextralemeq4}, one has
	\begin{multline*}
	\lim_{\varepsilon \to 0} \int_{\RR^2} 
	a^{(1)}\left(\frac{x}{\varepsilon}\right)  \nabla
	u^\varepsilon(x)  \,\cdot\,\nabla_y \phi\left(x,
	\frac{x}{\varepsilon}\right) \,dx\, \\
	= \lim_{\varepsilon \to 0} \int_{\RR^2} 
	a^{(1)}\left(\frac{x}{\varepsilon}\right)  \nabla
	u^\varepsilon(x)  \,\cdot\, \left[ \varepsilon\nabla \phi\left(x,
	\frac{x}{\varepsilon}\right) - \varepsilon\nabla_x \phi\left(x,
	\frac{x}{\varepsilon}\right) \right]\,dx  = 0
	\end{multline*}
	Yet, by \eqref{eq:pdhom8}, one finds that
		\begin{equation*}
		\lim_{\varepsilon \to 0} \int_{\RR^2} 
		a^{(1)}\left(\frac{x}{\varepsilon}\right)  \nabla
		u^\varepsilon(x)  \,\cdot\,\nabla_y \phi\left(x,
		\frac{x}{\varepsilon}\right) \,dx\, =\int_{\RR^2}\int_Q \xi \left(x , y \right)  \cdot \nabla_y 
		\phi\left(x , y \right) \ \mathrm{d}y \mathrm{d}x
		\end{equation*}
	Hence, by the arbitrariness of $\phi(x,y)$ one has
	$$
   \int_Q \xi \left(x , y \right)  \cdot \nabla_y \varphi\left( y \right) \ \mathrm{d}y = 0, \qquad \forall \varphi \in [C^\infty_{\#}(Q)]^{2} \text{	for a.e. $x \in \RR^2$},
	$$
	i.e  $\xi(x,y)\in L^2\left(\RR^2;\,W\right)$, see
	\eqref{eq:pdhom10}.

	It remains to show that $u$ and $\xi$ are related by \eqref{eq:pdhom13}. As smooth functions with compact support are  easily shown to approximate smooth functions with respect to the $L^2(\RR^2)$ norm, we shall prove \eqref{eq:pdhom13} for fixed $\Psi(x,y)\in C^\infty_0(\RR^2;\,W \cap [C^\infty_{\#}(Q)]^{2 \times 2})$. For such $\Psi$, integration by parts and the chain rule\footnote{ The justification of the multi-variable chain rule formula for functions of the form $f(x , x / \ep)$ where $f$ belongs $L^2(\RR^2 \times Q)$ can readily be justified by standard approximation arguments.} yields
	\begin{flalign*}
 \lim_{\ep \rightarrow 0}	\int_{\RR^2} \left( a^{(1)}\left(\tfrac{x}{\ep}\right)\right)^{1/2}  \nabla u ^ \ep \cdot \Psi \left( x ,  \tfrac{x}{\ep} \right) =   \lim_{\ep \rightarrow 0} 	\int_{\RR^2} - u ^ \ep \cdot  \left( a^{(1)}\left(\tfrac{x}{\ep}\right)\right)^{1/2} {\rm div}_x \Psi \left( x ,  \tfrac{x}{\ep} \right) ,
	\end{flalign*}
	which, via \eqref{eq:pdhom6} and \eqref{eq:pdhom8} yields \eqref{eq:pdhom13}.
\end{proof}


Continuing the general scheme of \cite{VPSIVK}, let us now introduce the space
\begin{multline}
U  := \Bigg\{ u(x,y) \in L^2(\mathbb{R}^2 ;V) : \exists \xi(x,y) \in L^2(\mathbb{R}^2 ; W) \text{ such that }, \biggr. \\
\dblint{\RR^2}{\xi(x,y)\,\cdot\,\Psi(x,y)}= 
-\dblint{\RR^2}{u(x,y) \cdot \left(a^{(1)}(y)\right)^{1/2} {\rm div}_x{
		\Psi(x,y)}}, \\ \forall \Psi(x,y)\in C^\infty(\RR^2;\,W \cap [C^\infty_{\#}(Q)]^{2 \times 2})  \Bigg\}, \label{emspaceU} 
\end{multline}
and the operator $T:U \rightarrow L^2$ defined by $Tu = \xi$ for $\xi$.
\begin{rem}
\label{Weylsdecomp}
Note that if $u \in U$ belongs to $C^\infty_0(\RR^2 ; V)$ then by \eqref{emspaceU} and the Generalised Weyl's decomposition, see Appendix, one has
$$
T u(x,y) = (a^{(1)}(y))^{1/2} \left( \nabla_x u(x,y) + \nabla_y u_1(x,y) \right),
$$
for some $u_1 \in [C^\infty(\RR^2 ; L^2(Q))]^2$.
\end{rem}
In fact, we find that for $a^{(1)}$   given by \eqref{emdegentensor1}, the map $T\ \equiv 0$; i.e.  the following result holds
\begin{prop}
\label{prop:emgenflux}
Let $T:U \rightarrow L^2$ be defined by $Tu = \xi$ for $\xi$ given in \eqref{emspaceU}. Then
$$
Tu = 0, \quad \forall u \in U.
$$
\end{prop}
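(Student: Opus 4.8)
The plan is to show that the flux $\xi = Tu$ is forced to vanish because the structure of $a^{(1)}$ is highly degenerate: $a^{(1)}(y) = 0$ for $y \in Q_0$, and on $Q_1$ the matrix $a^{(1)}$ is a fixed rank-one-type tensor whose range is very restricted. First I would compute explicitly what $(a^{(1)})^{1/2}$ looks like on $Q_1$: writing the $4\times 4$ matrix in \eqref{emdegentensor1} in the $2\times 2$ block notation $A^{jq}_{ip}$, one sees that $a^{(1)}\big|_{Q_1}$ acting on a gradient $\nabla v = (v_{1,1}, v_{1,2}, v_{2,1}, v_{2,2})$ returns precisely the combination $(\div v + \div v^{\perp}\text{-type terms})$; more concretely $a^{(1)}\nabla v = 0$ on $Q_1$ exactly encodes the Cauchy–Riemann constraints defining $V$. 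Hence for $u(x,\cdot) \in V$ the tensor $(a^{(1)})^{1/2}\nabla_y u$ lives in a one-(complex-)dimensional subspace of $[L^2(Q_1)]^{2\times 2}$ at each $x$, and in particular $\xi$ is supported in $Q_1$ and takes values in $\operatorname{Range}\,(a^{(1)})^{1/2}\big|_{Q_1}$, which I will identify as the span of a single constant matrix, call it $G$.

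Next I would exploit the identity \eqref{eq:pdhom13} (equivalently, membership in $U$) together with membership of $\xi$ in $W$, i.e. $\operatorname{div}_y\big((a^{(1)})^{1/2}\xi\big)=0$. Because $\xi = c(x,y)\,G$ for a scalar $c$ and $(a^{(1)})^{1/2}G$ is again a constant matrix, the constraint $\xi \in W$ becomes a scalar transport/divergence-free condition on $c$ in $Q_1$ in a fixed constant direction, while the $x$-identity \eqref{eq:pdhom13} says $\operatorname{div}_x$ of $\xi$ (weighted by $(a^{(1)})^{1/2}$) is controlled by $u$. The key algebraic observation I expect to need is that $(a^{(1)})^{1/2}G$ is orthogonal to $G$ itself (or more precisely that the relevant contractions vanish), so that testing \eqref{eq:pdhom13} against $\Psi = \phi(x)\,G$ — which one checks lies in $W\cap[C^\infty_\#(Q)]^{2\times2}$ since $G$ is constant — gives $\int\!\!\int u\cdot (a^{(1)})^{1/2}\operatorname{div}_x(\phi G) = 0$ identically, while the left-hand side $\int\!\!\int \xi\cdot\Psi = \int\!\!\int c\,|G|^2\phi$ need not vanish; reconciling these forces $c\equiv 0$. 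Alternatively, and perhaps more cleanly, I would test \eqref{eq:pdhom13} against $\Psi(x,y) = \nabla_x\phi(x)\otimes(\text{fixed vector})$ adapted to the structure so that the right-hand side becomes a perfect square in $u$ and compare with the Cauchy–Riemann rigidity of $V$.

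The main obstacle will be the bookkeeping of the degenerate linear algebra: one must verify carefully that the range of $(a^{(1)})^{1/2}$ restricted to the admissible gradients (those of $V$-functions) is exactly one-dimensional and spanned by a constant matrix, and that this constant matrix is annihilated by the pairing appearing on the right-hand side of \eqref{eq:pdhom13} after the $\operatorname{div}_x$ is applied. This is essentially the statement that the "macroscopic flux" associated with the partially degenerate operator decouples, which is the phenomenon isolated in \cite{VPSIVK}; here the extra feature is that the surviving direction is a genuine null direction of the $x$-transport as well, which is what kills $\xi$ entirely rather than merely reducing it. Once the algebra is pinned down, the functional-analytic part — arbitrariness of $\phi$, density of smooth compactly supported $\Psi$, and the observation that constant $\Psi$ automatically lie in $W$ — is routine. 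I would therefore organise the proof as: (1) identify $\operatorname{Range}(a^{(1)})^{1/2}\big|_{Q_1}$ and reduce $\xi$ to a scalar multiple of a constant matrix $G$; (2) show $(a^{(1)})^{1/2}G$ pairs trivially in \eqref{eq:pdhom13}; (3) conclude $\xi = 0$ by arbitrariness of the test function.
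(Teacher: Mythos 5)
There is a genuine gap, and it sits at the centre of your argument: the test functions you propose are not admissible. You want to test \eqref{eq:pdhom13} with $\Psi(x,y)=\phi(x)\,G$, where $G$ is a constant matrix spanning the range of $(a^{(1)})^{1/2}\big|_{Q_1}$, and you assert that $\Psi\in W$ ``since $G$ is constant''. This is false: by \eqref{emdegentensor1} the coefficient $(a^{(1)})^{1/2}$ equals $\chi_1(y)$ times a constant tensor, so $(a^{(1)})^{1/2}\Psi$ jumps across $\Gamma$ and its distributional divergence is a nonzero surface term there; the condition in \eqref{eq:pdhom10} fails. Indeed, a constant matrix $G$ belongs to $W$ if and only if $G$ lies in the \emph{kernel} of $a^{(1)}\big|_{Q_1}$ --- precisely the orthogonal complement of the subspace you want to test against --- so this route cannot produce the pairing $\int\!\!\int c\,|G|^2\phi$ that your argument needs. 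A second problem is that the reduction $\xi=c(x,y)G$ is not available from the definition \eqref{emspaceU}: membership of $\xi$ in $W$ does not place $\xi(x,\cdot)$ pointwise in the range of $(a^{(1)})^{1/2}$. (Two smaller slips: for $u\in V$ one has $a^{(1)}\nabla_y u=0$ on $Q_1$ outright, not confinement to a nontrivial subspace, and $\xi$ is not $(a^{(1)})^{1/2}\nabla_y u$; also the range of $a^{(1)}\big|_{Q_1}$ is two-real-dimensional.)

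What is actually needed --- and what the paper proves --- is the opposite inclusion: every $\psi\in W$ satisfies $\psi_{11}+\psi_{22}=0$ and $\psi_{12}-\psi_{21}=0$ in $Q_1$, i.e.\ property \eqref{eq:emgenfluxprop1}, so that $W$-membership forces $Tu$ into the \emph{kernel} of $a^{(1)}$ on $Q_1$ and hence $(a^{(1)})^{1/2}Tu=0$ pointwise (recall $(a^{(1)})^{1/2}$ is proportional to $a^{(1)}$). Establishing \eqref{eq:emgenfluxprop1} is the nontrivial step: the paper tests the divergence-free condition defining $W$ with $\phi=(u_{,1},u_{,2})$ and $\phi=(-u_{,2},u_{,1})$, where $u$ solves an auxiliary periodic Poisson problem whose right-hand side is an arbitrary smooth function corrected on $Q_0$. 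With that in hand, the generalised Weyl decomposition gives $T\phi=(a^{(1)})^{1/2}(\nabla_x\phi+\nabla_y\phi_1)$ for $\phi\in C^\infty_0(\RR^2;V)$, whence $\int\!\!\int Tu\cdot T\phi=\int\!\!\int (a^{(1)})^{1/2}Tu\cdot\nabla_x\phi=0$ (the $\nabla_y\phi_1$ contribution vanishing because $Tu\in W$), and density of $C^\infty_0(\RR^2;V)$ in $U$ finishes the proof. Your plan contains neither a proof of \eqref{eq:emgenfluxprop1} nor a substitute for it, so as written the argument does not close.
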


\begin{proof}
To prove the proposition, it is sufficient to show that for all $u,v \in U$
$$
\dblint{\mathbb{R}^2}{Tu(x,y) \cdot T v(x,y)} = 0.
$$
Let us state here, and demonstrate below, the following property of functions belonging to $W$, see \eqref{eq:pdhom10}:
\begin{align}
\label{eq:emgenfluxprop1}
\text{$\psi_{11} + \psi_{22}= 0$ and $\psi_{12} - \psi_{21}= 0$ in $Q_1$}, \quad \forall \psi  \in W.
\end{align}
For fixed $u\in U$, $Tu \in L^2(\mathbb{R}^2 ; W)$ and therefore \eqref{emdegentensor1} and \eqref{eq:emgenfluxprop1} imply $a^{(1)}Tu = 0$. Hence,  by additionally noting that $a^{(1)}= \tfrac{1}{\sqrt{2}}(a^{(1)})^{1/2}$, one has $(a^{(1)})^{1/2}Tu = 0$. Now, by the Generalised Weyl's decomposition, see Remark \eqref{Weylsdecomp},  for $ \phi \in C^{\infty}_0(\mathbb{R}^2;V)$, $T\phi(x,y) = (a^{(1)}(y))^{1/2}(\nabla_x \phi(x,y) + \nabla_y \phi_1(x,y))$ for some $\phi_1 \in C^\infty\left( \mathbb{R}^2 ; W^{\perp}\right)$. Therefore, one has
$$
\dblint{\mathbb{R}^2}{Tu(x,y) \cdot T \phi(x,y)} = \dblint{\mathbb{R}^2}{(a^{(1)}(y))^{1/2}Tu(x,y) \cdot  \nabla_x \phi(x,y)}   = 0.
$$
Hence the proposition follows by the virtue of the fact that $C^{\infty}_0(\mathbb{R}^2;V)$ is dense\footnote{The density argument goes along the lines of showing that smooth approximations, via mollification, of $u \in U$ also belong to $U$  by utilising \eqref{emspaceU} and the `self-adjoint' properties of the mollifier. For full details see \cite{VPSIVK}.} in $U$.

It remains to show \eqref{eq:emgenfluxprop1}. By \eqref{emdegentensor1} and \eqref{eq:pdhom10} a  fixed $\psi \in W$ satifies
\begin{multline}
\label{emextraeq11}
\int_{Q_1} \left( \psi_{11} + \psi_{22} \right) \left( \phi_{1,1} + \phi_{2,2} \right) + \left( \psi_{12} - \psi_{21} \right) \left( \phi_{1,2} - \phi_{2,1} \right) \ \mathrm{d}y = 0, \\ \forall \phi \in \left[C^\infty_{\#}(Q)\right]^2.
\end{multline}
For fixed $\varphi \in C^{\infty}_{\#}(Q)$ let $u \in C^{\infty}_{\#} (Q)$ be a solution to 
$$
 \Delta u = \varphi - \frac{1}{\vert Q_0 \vert} \chi_0(y) \mv{\varphi}.
$$
Then, by choosing  $\phi = ( u_{,1} , u_{,2} )$ to be a  test functions in \eqref{emextraeq11} gives
$$
0 = \int_{Q_1} \left( \psi_{11} + \psi_{22} \right) \Delta u \ \mathrm{d}y = \int_{Q_1} \left( \psi_{11} + \psi_{22} \right) \varphi \ \mathrm{d}y,
$$
i.e. $\psi_{11} + \psi_{22} \equiv 0$ for a.e. $y \in Q_1$. Similarly, choosing $\phi = ( -u_{,2} , u_{,1} )$ to be the test function in \eqref{emextraeq11} gives
$$
0 = \int_{Q_1} \left( \psi_{12} - \psi_{21} \right) \Delta u \ \mathrm{d}y = \int_{Q_1} \left( \psi_{12} - \psi_{21} \right) \varphi \ \mathrm{d}y,
$$
i.e. $\psi_{12} - \psi_{21} \equiv 0$ for a.e. $y \in Q_1$.
Hence, \eqref{eq:emgenfluxprop1} holds.
\end{proof}
One is now able to pass to the limit in problem \eqref{finalproblem} using Lemma \ref{lem:pdhom1} and  Proposition \ref{prop:emgenflux} to find the following result.
\begin{thm}
\label{homlimitcase1}
Consider $f^\ep \in [L^2(\RR^2)]^2$, $f \in [L^2(\RR^2 \times Q)]^2$ such that $f^{\ep} \twoscale f$ as $\ep \rightarrow 0$. For the sequence of solutions $u^\ep$ to \eqref{finalproblem}  there exists $u \in L^2(\RR^2 ; V)$ such that, up to a subsequence, $u_{\ep} \twoscale u$. Furthermore, $u$ is the unique solution of the following equation
\begin{multline}
\label{emhomlimit}
\int_{\RR^2}\int_{Q}   \nabla_y u_1 \cdot \overline{\nabla_y \phi_1} + \gamma^{-1} \Big( {\rm div}_y u \cdot \overline{ {\rm div}_y \phi} + {\rm div}_y  u^\perp \cdot \overline{ {\rm div}_y \phi^\perp} \Big) \\ + \int_{\RR^2}\int_{Q} \epsilon_1 \left( u \cdot \overline{ \phi} + \gamma \chi_0 u_1  \overline{ \phi_1} \right) = \int_{\RR^2}\int_{Q} \epsilon_1 \left( f \cdot \overline{\phi}  + \gamma \chi_0 f_1 \overline{ \phi_1} \right), \quad \forall \phi \in C^\infty_0(\RR^2 ;V),
\end{multline} 
where $\gamma : = \tfrac{\epsilon_0}{\epsilon_1} - 1$.
\end{thm}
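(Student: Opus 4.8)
The plan is to pass to the two-scale limit in \eqref{finalproblem} tested against oscillating functions that take values in the space $V$. By Lemma~\ref{lem:pdhom1}, after extracting a subsequence (not relabelled) one has $u^\ep\twoscale u\in L^2(\RR^2;V)$, $\ep\nabla u^\ep\twoscale\nabla_y u$, and $\left(a^{(1)}(\tfrac{x}{\ep})\right)^{1/2}\nabla u^\ep\twoscale\xi\in L^2(\RR^2;W)$; moreover, by Proposition~\ref{prop:emgenflux}, this last limit is $\xi=Tu=0$. The vanishing of $\xi$ is the structural fact that drives everything: it forces the degenerate part $a^{(1)}$ of the form to contribute nothing to the limit, leaving only the ``soft'' part $a^{(0)}$ --- which is why \eqref{emhomlimit} contains only $\nabla_y$ and no macroscopic derivative. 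I would also record, using the decomposition $\chi_1 A^\ep_{1}+\tfrac{\ep^2}{\epsilon_0-\epsilon_1+\ep^2}\chi_0 A^\ep_{0}=a^{(1)}+\ep^2 a^{(0)}+R^\ep$ with $\norm{R^\ep}_{L^\infty}=O(\ep^4)$ (proof of Lemma~\ref{emextralem1}) together with \eqref{emextralemeq2}, that in \eqref{finalproblem} the leading term may be replaced by $\int_{\RR^2}\left(a^{(1)}(\tfrac{x}{\ep})+\ep^2 a^{(0)}(\tfrac{x}{\ep})\right)\nabla u^\ep\cdot\overline{\nabla\phi^\ep}$ at the cost of an error $O(\ep^2)$, provided $\norm{\nabla\phi^\ep}_{L^2}\le C\ep^{-1}$.

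For the test function I would take $\phi^\ep(x)=\Phi(x,\tfrac{x}{\ep})$ with $\Phi$ jointly smooth, compactly supported in $x$, $Q$-periodic in $y$, and $\Phi(x,\cdot)\in V$ for every $x$; such $\Phi$ --- including all finite sums $\sum_i\psi_i(x)v_i(y)$ with $\psi_i\in C^\infty_0(\RR^2)$ and $v_i\in[C^\infty_\#(Q)]^2\cap V$ --- form a subset of $C^\infty_0(\RR^2;V)$ that is dense in the $L^2(\RR^2;H^1(Q))$ norm (by the representation of Lemma~\ref{lem:spcom2}(ii) with smooth mean-zero $f_1,f_2$ supported in $\overline{Q_0}$). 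Note $\phi^\ep\in[C^\infty_0(\RR^2)]^2$, so it is admissible in \eqref{finalproblem}, and $\norm{\nabla\phi^\ep}_{L^2}=O(\ep^{-1})$. Writing $\nabla\phi^\ep=(\nabla_x\Phi)(x,\tfrac{x}{\ep})+\ep^{-1}(\nabla_y\Phi)(x,\tfrac{x}{\ep})$, the crucial observation is that, since $\Phi(x,\cdot)\in V$, the explicit form \eqref{emdegentensor1} gives $a^{(1)}(y)\,\nabla_y\Phi(x,y)\equiv 0$ (indeed $a^{(1)}\nabla w=0$ is precisely the pair of conditions $\div w=\div w^\perp=0$ in $Q_1$), so the potentially singular $\ep^{-1}$-term contributes nothing to $\int a^{(1)}(\tfrac{x}{\ep})\nabla u^\ep\cdot\overline{\nabla\phi^\ep}$; the remaining $O(1)$ piece equals $\int \left(a^{(1)}(\tfrac{x}{\ep})\right)^{1/2}\nabla u^\ep\cdot\overline{\left(a^{(1)}(\tfrac{x}{\ep})\right)^{1/2}(\nabla_x\Phi)(x,\tfrac{x}{\ep})}$ and, since $\left(a^{(1)}(\tfrac{x}{\ep})\right)^{1/2}\nabla u^\ep\twoscale\xi=0$ paired against the admissible oscillating factor, tends to $\int_{\RR^2}\int_Q\xi\cdot\left(a^{(1)}(y)\right)^{1/2}\overline{\nabla_x\Phi}=0$. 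In the $\ep^2 a^{(0)}$-term the piece with $\nabla_x\Phi$ is $O(\ep)$ by \eqref{emextralemeq2}, while the $\ep^{-1}(\nabla_y\Phi)$-piece equals $\int a^{(0)}(\tfrac{x}{\ep})(\ep\nabla u^\ep)\cdot\overline{(\nabla_y\Phi)(x,\tfrac{x}{\ep})}$ and tends, by \eqref{eq:pdhom7} against the admissible factor, to $\int_{\RR^2}\int_Q a^{(0)}(y)\nabla_y u\cdot\overline{\nabla_y\Phi}$. Finally, the lower-order terms of \eqref{finalproblem} pass, via $u^\ep\twoscale u$ and $f^\ep\twoscale f$ tested against $\rho(y)\overline{\Phi(x,y)}$, to $\int_{\RR^2}\int_Q\rho(y)u\cdot\overline{\Phi}$ and $\int_{\RR^2}\int_Q\rho(y)f\cdot\overline{\Phi}$.

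It then remains to recognise the limit identity as \eqref{emhomlimit}. For $u,\Phi$ valued in $V$, a direct computation from the explicit matrices \eqref{emdegentensor2} using ${\rm div}_y u={\rm div}_y u^\perp=0$ and ${\rm div}_y\Phi={\rm div}_y\Phi^\perp=0$ in $Q_1$ gives, on all of $Q$,
\[
a^{(0)}(y)\nabla_y u\cdot\overline{\nabla_y\Phi}=\epsilon_1^{-1}\left(\nabla_y u_1\cdot\overline{\nabla_y\Phi_1}+\gamma^{-1}\big({\rm div}_y u\cdot\overline{{\rm div}_y\Phi}+{\rm div}_y u^\perp\cdot\overline{{\rm div}_y\Phi^\perp}\big)\right)
\]
(the two divergence terms being identically zero on $Q_1$), while $\rho(y)u\cdot\overline{\Phi}=u\cdot\overline{\Phi}+\gamma\chi_0 u_1\overline{\Phi_1}$ by \eqref{rho}. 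Multiplying the limit identity through by $\epsilon_1$ yields \eqref{emhomlimit} for all such $\Phi$, and hence for every $\phi\in C^\infty_0(\RR^2;V)$ by the density noted above (each term being continuous in the $L^2(\RR^2;H^1(Q))$ norm). Uniqueness then follows from Lax--Milgram: the left-hand side of \eqref{emhomlimit} is a bounded sesquilinear form on $L^2(\RR^2;V)$ which is coercive for the equivalent norm \eqref{normofV} --- the coercivity coming from the $\epsilon_1\int_{\RR^2}\int_Q\vert u\vert^2$ contribution together with $\gamma^{-1}\int_{\RR^2}\int_Q(\vert{\rm div}_y u\vert^2+\vert{\rm div}_y u^\perp\vert^2)$ and Remark~\ref{remequ} --- while the right-hand side is a bounded anti-linear functional of $\phi$. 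In particular every subsequence of $u^\ep$ has the same two-scale limit, so the whole sequence $u^\ep$ two-scale converges (weakly) to this unique $u$.

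I expect the main obstacle to be the correct treatment of the oscillating test function: one must insist it take values precisely in $V$, so that $a^{(1)}\nabla_y\Phi\equiv 0$ annihilates the $O(\ep^{-1})$ term, and one must combine this with $\xi=Tu=0$ from Proposition~\ref{prop:emgenflux} to see that the whole $a^{(1)}$-contribution drops out --- only then does the nominally negligible $\ep^2 a^{(0)}$-term become the one that generates the entire limit operator, with the contraction identity for $a^{(0)}$ (routine linear algebra) and Lax--Milgram uniqueness completing the argument.
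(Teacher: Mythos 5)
Your proof is correct and follows essentially the same route as the paper: take oscillating test functions $\phi(x,x/\ep)$ with $\phi$ valued in $V$, use $a^{(1)}\nabla_y\phi\equiv 0$ to kill the $O(\ep^{-1})$ piece and $\xi=Tu=0$ (Proposition~\ref{prop:emgenflux}) to kill the remaining $a^{(1)}$ contribution, pass to the two-scale limit in the $a^{(0)}$ and lower-order terms, and close with density of $C^\infty_0(\RR^2;V)$ and coercivity via \eqref{normofV}. The extra care you take in identifying a dense subclass of separated, $y$-smooth test functions and in spelling out the algebraic identification of $a^{(0)}\nabla_y u\cdot\overline{\nabla_y\phi}$ with the explicit divergence form are welcome refinements, but they do not change the argument.
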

\begin{proof}
Suppose $\phi \in C^\infty_0(\RR^2 ; V)$ and set $\phi^\ep (x) = \phi\left( x , \tfrac{x}{\ep} \right)$. Note by \eqref{emextralemeq2}, \eqref{emextralemeq4}, \eqref{eq:pdhom7}, \eqref{eq:pdhom8}, and Proposition \ref{prop:emgenflux} that
\begin{flalign*}
\lim_{\ep \rightarrow 0} \int_{\mathbb{R}^2} & \Big( \chi_1\left(\tfrac{x}{\ep}\right) A^\ep_{1}+ \tfrac{\ep^2}{\epsilon_0 - \epsilon_1 + \ep^2} \chi_0\left(\tfrac{x}{\ep}\right)A^\ep_{0} \Big)  \nabla u^\ep \cdot \overline{\nabla \phi^\ep} \ \mathrm{d}x \\
& = \lim_{\ep \rightarrow 0} \int_{\mathbb{R}^2} \left[ a^{(1)}\left(\tfrac{x}{\ep}\right) + \ep^2 a^{(0)}\left(\tfrac{x}{\ep}\right) \right] \nabla u^\ep \cdot \overline{\nabla \phi^\ep} \ \mathrm{d}x \\
& = \lim_{\ep \rightarrow 0} \int_{\mathbb{R}^2} \left[ a^{(1)}\left(\tfrac{x}{\ep}\right) + \ep^2 a^{(0)}\left(\tfrac{x}{\ep}\right) \right] \nabla u^\ep \cdot \left(  \overline{\nabla_x \phi\left( x , \tfrac{x}{\ep} \right)} + \ep^{-1} \overline{\nabla_y \phi\left( x , \tfrac{x}{\ep} \right)} \right) \ \mathrm{d}x \\
& = \lim_{\ep \rightarrow 0} \int_{\mathbb{R}^2} a^{(1)}\left(\tfrac{x}{\ep}\right) \nabla u^\ep \cdot\overline{\nabla_x \phi\left( x , \tfrac{x}{\ep} \right)}  +  a^{(0)}  \left(\tfrac{x}{\ep}\right) \ep\nabla u^\ep \cdot \left( \ep \overline{\nabla_x \phi\left( x , \tfrac{x}{\ep} \right)} + \overline{\nabla_y \phi\left( x , \tfrac{x}{\ep} \right)} \right) \ \mathrm{d}x \\
& = \dblint{\mathbb{R}^2}{ a^{(0)}(y) \nabla_y u(x,y) \cdot \overline{\nabla_y \phi(x,y)}}. 
\end{flalign*}
Therefore, passing to the limit $\ep \rightarrow 0$ in \eqref{finalproblem} for test functions of the form $\phi^\ep$ gives
\begin{multline}
\label{fpthproofe1}
 \dblint{\mathbb{R}^2}{ a^{(0)}(y) \nabla_y u(x,y) \cdot \overline{\nabla_y \phi(x,y)}} +  \dblint{\mathbb{R}^2} {\rho (y) u(x,y) \cdot \overline{ \phi(x,y)}} \\ =  \dblint{\mathbb{R}^2}{ \rho(y) f(x,y) \cdot \overline{ \phi(x,y)}}, \qquad \forall \phi \in C^\infty_0(\RR^2 ; V).
\end{multline}
It remains to show \eqref{fpthproofe1} is equivalent to \eqref{emhomlimit} and this simply seen by recalling the definition of $\rho$, $a^{(0)}$ and the space $V$, see  \eqref{rho}, \eqref{emdegentensor2} and \eqref{emspaceV} respectively.

Existence and uniqueness of the solution to \eqref{fpthproofe1} follows by noting that $C^\infty_0(\RR^2 ; V)$ is dense in $L^2(\RR^2 ; V)$  with respect to the standard norm on $L^2(\RR^2 ; [H^1_{\#}(Q)]^2)$ and recalling \eqref{normofV}.
\end{proof}
Recalling the operator $\A(\theta)$ for $\theta =0$, introduced in Section \ref{secblochlim}, it is clear that the spectrum associated to problem \eqref{fpthproofe1} coincides with $\sigma(0)$, the spectrum of $\A(0)$. In particular, we have shown by Theorem \ref{homlimitcase1} and strong resolvent two-scale convergence  that
\begin{equation}
\label{limzerospec}
\lim_{\ep \rightarrow 0}\sigma_\ep \supset \sigma (0).
\end{equation}

 As one expects the continuous spectrum of $\A^\ep$ to be non-empty, the spectral completeness $\lim_{\ep \rightarrow 0}\sigma_\ep \subset   \sigma (0)$ is not be expected. In this sense, the  limit operator appears to be incomplete. This incompleteness is a consequence of the fact that, for a fixed $N \in \NN^2$,  we can always find a subsequence of  the solutions $u^\ep$ to \eqref{finalproblem} which two-scale converge to a non-trivial $NQ$-periodic function. Here $NQ = [0, N_1) \times [0, N_2)$ and  non-triviality  is taken to mean the $NQ$ periodic function is not $Q$ periodic for $N\neq (1,1)$. 

\subsection{Quasiperiodic homogenisation}
\label{sec:quasiotwoscale}
At the end of the previous section we argued that fixing our periodic reference cell to be $Q$ does not appear to be sufficient for finding the limiting spectrum $\lim_{\ep \rightarrow 0}\sigma_\ep$. This argument will be made precise here. To this end,  we will shall consider the strong $NQ$-periodic two-scale resolvent limit of $\A^\ep$ corresponding by \eqref{finalproblem}. Here $NQ : = [0, N_1) \times [0, N_2)$ for a given  multi-index $N \in \NN^2$.

\noindent Let us begin by reviewing the definition of two-scale convergence (\cite{nguetseng, allaire, zhikov1}), with respect to the  periodic reference cell $NQ$.

\begin{defn}
\label{blochtwoscale}
Let $u_\ep$ be a bounded sequence in $L^2(\RR^2)$. We say $u_\ep$ two-scale converges to $u(x,y) \in L^2(\RR^2 \times NQ)$  if 
\begin{multline*}
\int_{\RR^2}{u_{\ep}(x) \varphi(x)\phi\left(\tfrac{x}{\ep}\right)} \ \mathrm{d}x \longrightarrow \frac{1}{N_1 N_2}\dblintn{\RR^2}{u(x,y)\varphi(x)\phi(y)}, \\
\forall \varphi \in C^{\infty}_0(\RR^2), \forall \phi \in C^{\infty}_{\#}(NQ).
\end{multline*}
\end{defn}
As the size of the periodicity cell does not alter the arguments detailed in Section \ref{sec:PChom}, the analogies of Lemma \ref{lem:pdhom1}, Proposition \ref{prop:emgenflux} and Theorem \ref{homlimitcase1} hold when the periodic reference cell $Q$ is replaced by $NQ$. In particular, we have demonstrated that the following result holds
\begin{thm}
\label{NQthm}
For each $N \in \NN^2$, let $f^\ep \in [L^2(\RR^2)]^2$ two-scale converges to $f \in [L^2(\RR^2 \times NQ)]^2$ as $\ep \rightarrow 0$. Then, the sequence  of solutions $u^\ep$ to \eqref{finalproblem} has a two-scale convergent subsequence to some $u \in L^2(\RR^2 ; V_N)$, where
\begin{equation}
\label{spaceVn}
V_N := \left\{ v \in [H^1_{\#}(NQ)]^2 : \text{$\div {v} = 0$ and $\div{v^{\perp}} = 0$ in $F_1 \cap NQ$}  \right\}.
\end{equation}
Here $F_1$ is the $Q$-periodic extension of the set $Q_1$ to the whole space. Furthermore, $u \in V_N$ satisfies
\begin{multline}
\label{eq:pcbloch1}
\int_{\RR^2}\int_{NQ} \nabla_y u_1 \cdot \overline{\nabla_y \phi_1} + \gamma^{-1}  \Big( {\rm div}_y u \cdot \overline{ {\rm div}_y \phi} + {\rm div}_y  u^\perp \cdot \overline{ {\rm div}_y \phi^\perp}  \Big) \ \mathrm{d}y\mathrm{d}x
\\ + \int_{\RR^2}\int_{NQ} \epsilon_1 \left( u \cdot \overline{\phi}  + \gamma \chi_0 u_1 \overline{ \phi_1} \right) = \int_{\RR^2}\int_{NQ} \epsilon_1 \left( f \cdot \overline{\phi}  + \gamma \chi_0 f_1 \overline{ \phi_1} \right), \quad \forall \phi \in C^\infty_0(\RR^2 ; V_N).
\end{multline}
\end{thm}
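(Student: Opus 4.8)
The plan is to re-run, essentially verbatim, the argument of Section~\ref{sec:PChom} with the reference cell $Q$ replaced by the larger cell $NQ$. The point that legitimises this is that the coefficient functions $A^\ep(x/\ep)$ and $\rho(x/\ep)$ entering \eqref{finalproblem} are $\ep Q$-periodic, hence \emph{a fortiori} $\ep NQ$-periodic, so $NQ$-two-scale convergence in the sense of Definition~\ref{blochtwoscale} is a meaningful notion for these problems. Moreover, the a priori bounds of Lemma~\ref{emextralem1}, namely \eqref{emextralemeq1}--\eqref{emextralemeq4}, are statements about integrals over $\RR^2$ and make no reference whatsoever to the choice of periodicity cell; they therefore hold unchanged, with the same constants.

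First I would invoke $NQ$-two-scale compactness (\cite{nguetseng,allaire,zhikov1}) together with those bounds to extract a subsequence along which $u^\ep \twoscale u(x,y)$, $\ep\nabla u^\ep \twoscale \nabla_y u(x,y)$ and $\left(a^{(1)}(x/\ep)\right)^{1/2}\nabla u^\ep \twoscale \xi(x,y)$, with $u \in L^2(\RR^2; [H^1_{\#}(NQ)]^2)$ and $\xi \in L^2(\RR^2\times NQ)$; this is the exact analogue of Lemma~\ref{lem:pdhom1}. The identification $u(x,\cdot)\in V_N$ proceeds as there: testing \eqref{emextralemeq3} against oscillating test functions and using \eqref{eq:pdhom7} gives $a^{(1)}(y)\nabla_y u(x,y)=0$ a.e., and since $a^{(1)}$ is the $Q$-periodic extension of the matrix in \eqref{emdegentensor1}, supported on $F_1$, this is precisely the constraint $\div u = \div u^\perp = 0$ in $F_1\cap NQ$, i.e. $u(x,\cdot)\in V_N$ as defined in \eqref{spaceVn}. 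Likewise $\xi(x,\cdot)$ lies in the $NQ$-analogue $W_N$ of the flux space \eqref{eq:pdhom10}, and the relation \eqref{eq:pdhom13} between $u$ and $\xi$ holds with $Q$ replaced by $NQ$.

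Next I would establish the analogue of Proposition~\ref{prop:emgenflux}, that the generalised flux $T$ vanishes on the corresponding space $U_N$. The crucial algebraic fact \eqref{eq:emgenfluxprop1} — that $\psi\in W_N$ forces $\psi_{11}+\psi_{22}=0$ and $\psi_{12}-\psi_{21}=0$ on $F_1\cap NQ$ — is entirely local in $y$: its proof reduces, exactly as in the proof of Proposition~\ref{prop:emgenflux}, to solving a Poisson equation on the torus $NQ$ with right-hand side supported off $F_1$ and testing the $NQ$-version of \eqref{emextraeq11} against $(\pm u_{,2},u_{,1})$, and nothing in that argument uses $N=(1,1)$. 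Granted \eqref{eq:emgenfluxprop1}, the identities $a^{(1)}Tu=0$ and $a^{(1)}=\tfrac{1}{\sqrt2}(a^{(1)})^{1/2}$ force $(a^{(1)})^{1/2}Tu=0$, and then the Generalised Weyl decomposition on $NQ$ (see Appendix) together with density of $C^\infty_0(\RR^2;V_N)$ in $U_N$ (along the lines of \cite{VPSIVK}) yields $Tu=0$ for all $u\in U_N$.

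Finally, passing to the limit in \eqref{finalproblem} with test functions $\phi^\ep(x)=\phi(x,x/\ep)$, $\phi\in C^\infty_0(\RR^2;V_N)$: by \eqref{emextralemeq2}, \eqref{emextralemeq4}, the two-scale convergences above, and the vanishing of $T$, the stiffness term converges to $\dblintn{\RR^2}{a^{(0)}(y)\nabla_y u\cdot\overline{\nabla_y\phi}}$ — the $\nabla_x\phi$ contributions being killed by the $\ep$-prefactors and the $a^{(1)}$-part by $T\equiv 0$ — while the mass terms pass to the limit directly from the definition of $NQ$-two-scale convergence. Rewriting $a^{(0)}$ and $\rho$ via \eqref{emdegentensor2}, \eqref{rho} and the definition of $V_N$ turns this identity into \eqref{eq:pcbloch1}; uniqueness of $u$ follows from coercivity on $L^2(\RR^2;V_N)$ using the $NQ$-analogue of the norm \eqref{normofV}. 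The only point requiring genuine (if routine) care is confirming that none of the local/algebraic ingredients — the Weyl decomposition, the flux computation \eqref{eq:emgenfluxprop1}, and the density of smooth compactly supported sections — depend on primitivity of the cell; once that is checked, the proof is a transcription of Section~\ref{sec:PChom}.
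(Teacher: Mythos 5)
Your proposal is correct and follows exactly the route the paper takes: the paper's entire ``proof'' of Theorem \ref{NQthm} is the observation that the size of the periodicity cell does not alter any of the arguments of Section \ref{sec:PChom}, so Lemma \ref{lem:pdhom1}, Proposition \ref{prop:emgenflux} and Theorem \ref{homlimitcase1} transfer verbatim with $Q$ replaced by $NQ$. You have simply made explicit the cell-independence checks (periodicity of the coefficients, a priori bounds over $\RR^2$, locality of the flux identity and of the Weyl decomposition) that the paper leaves implicit.
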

Denoting by $\H_N$ the closure of $V_N$ in $[L^2_\rho(NQ)]^2$ we introduce the non-negative closed quadratic form
 $\a_N : \H_N \times \H_N \rightarrow \mathbb{R}$, defined by
 $$
 \a_N(u,v) : = \int_{NQ} \nabla u_1 \cdot \overline{\nabla \phi_1} + \gamma^{-1} \left( \div{ u} \cdot \overline{  \div \phi} + \div  u^\perp \cdot \overline{ \div \phi^\perp} \right).
 $$
We find, by the strong two-scale resolvent convergence, that 
\begin{equation}
\label{eq:pcbloch2}
\lim_{\ep \rightarrow 0}\sigma_\ep \supset \sigma(\A_N),
\end{equation}
where $\sigma(\A_N)$ is the discrete spectrum of the self-adjoint operator $\A_N$ generated by the form $\a_N$. In particular, \eqref{limzerospec} and \eqref{eq:pcbloch2} imply that
\begin{equation}
\label{limspec01}
\lim_{\ep \rightarrow 0}\sigma_\ep \supset \overline{\bigcup_{N \in \NN^2} \sigma(\A_N)}.
\end{equation}
Note that, to prove \eqref{semicontspect} it remains to show the following result.
\begin{lem}
\label{homlimitbloch1}
For $\theta \in [ 0, 1)^{2}$, let $\sigma(\theta)$ be the spectrum of the operator $\A(\theta)$ introduced in Section \ref{secblochlim}. Then, the following inclusion holds
\begin{equation}
\label{limspec02}
\overline{\bigcup_{N \in \NN^2} \sigma(\A_N)} \supset \bigcup_{\theta \in [0,1)^2} \sigma (\theta).
\end{equation}
\end{lem}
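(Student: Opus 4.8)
The plan is to prove \eqref{limspec02} by reducing it, via a finite Bloch decomposition, to the \emph{rational} quasimomenta and then using the continuity of the eigenvalues $\lambda_k(\theta)$ obtained in Lemma \ref{apb.1} together with the density of $\QQ^2\cap[0,1)^2$ in $[0,1)^2$. So I would proceed in two steps: (1) show that $\sigma(\theta)\subset\bigcup_{N\in\NN^2}\sigma(\A_N)$ whenever $\theta$ has rational coordinates; (2) approximate an arbitrary $\theta$ by rational ones and pass to the limit in the eigenvalues, landing in the \emph{closure} of $\bigcup_N\sigma(\A_N)$.

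For step (1), fix $N\in\NN^2$ and set $\Theta_N:=\{(p_1/N_1,p_2/N_2):0\le p_j<N_j,\ j=1,2\}$, the finite set of quasimomenta compatible with the cell $NQ$. The key point is that $\A_N$ decouples, via the discrete Bloch transform, into the orthogonal sum $\bigoplus_{\theta'\in\Theta_N}\A(\theta')$. To see this I would first note that any $u\in V(\theta')$ with $\theta'\in\Theta_N$, extended to $\RR^2$ by $\theta'$-quasiperiodicity, is $NQ$-periodic (since $e^{2\pi{\rm i}\theta'_jN_j}=1$) and lies in $V_N$, see \eqref{spaceVn}: the constraints $\div u=0$, $\div u^\perp=0$ hold in $Q_1$ by hypothesis, hence in each lattice translate $Q_1+n$ because $\div$ and ${}^\perp$ commute with multiplication by the unimodular constant $e^{2\pi{\rm i}\theta'\cdot n}$, and therefore in $F_1\cap NQ$. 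Conversely, given $\phi\in V_N$, writing $\phi=\sum_{\theta'\in\Theta_N}\phi_{\theta'}$ for its Bloch components and evaluating $\div\phi=0$ on each copy $Q_1+n$ yields, for every such $n$, the relation $\sum_{\theta'}e^{2\pi{\rm i}\theta'\cdot n}\,(\div\phi_{\theta'})|_{Q_1}=0$; since the $N_1N_2\times N_1N_2$ matrix $[e^{2\pi{\rm i}\theta'\cdot n}]_{\theta',n}$ is a nonsingular (discrete Fourier) matrix, this forces $(\div\phi_{\theta'})|_{Q_1}=0$ for every $\theta'$, and likewise $(\div\phi_{\theta'}^\perp)|_{Q_1}=0$, i.e. $\phi_{\theta'}\in V(\theta')$. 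Thus $V_N=\bigoplus_{\theta'\in\Theta_N}V(\theta')$, an orthogonal decomposition in $L^2_\rho(NQ)$. Moreover distinct Bloch sectors integrate to zero against each other over $NQ$, so the form $\a_N$ and the right-hand side pairing in \eqref{eq:pcbloch1} carry no cross terms; each diagonal block, after dividing by the common volume factor $N_1N_2$, is precisely the weak formulation \eqref{thetaprob} of $\A(\theta')$. Hence $\sigma(\A_N)=\bigcup_{\theta'\in\Theta_N}\sigma(\theta')$, and since every $\theta\in\QQ^2\cap[0,1)^2$ belongs to some $\Theta_N$, this gives $\sigma(\theta)\subset\bigcup_{N\in\NN^2}\sigma(\A_N)$ for all rational $\theta$.

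For step (2), let $\lambda\in\sigma(\theta)$ with $\theta\in[0,1)^2$ arbitrary, say $\lambda=\lambda_k(\theta)$ for some $k$. Pick rationals $\theta^{(m)}\in\QQ^2\cap[0,1)^2$ with $\theta^{(m)}\to\theta$. By step (1), $\lambda_k(\theta^{(m)})\in\bigcup_{N\in\NN^2}\sigma(\A_N)$ for each $m$, while by Lemma \ref{apb.1} the eigenvalues are continuous, so $\lambda_k(\theta^{(m)})\to\lambda_k(\theta)=\lambda$. Therefore $\lambda\in\overline{\bigcup_{N\in\NN^2}\sigma(\A_N)}$. As $\lambda$ and $\theta$ were arbitrary, $\bigcup_{\theta\in[0,1)^2}\sigma(\theta)\subset\overline{\bigcup_{N\in\NN^2}\sigma(\A_N)}$, which is \eqref{limspec02}.

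The part that I expect to require genuine care is the decoupling of the \emph{constraint} space in step (1): verifying that each Bloch component of a constrained $NQ$-periodic vector field is itself admissible in the corresponding $V(\theta')$. This is where the rationality of $\theta'$ enters essentially, both so that $e^{2\pi{\rm i}\theta'\cdot y}$ is $NQ$-periodic and so that the finite Fourier matrix $[e^{2\pi{\rm i}\theta'\cdot n}]$ is invertible; without it the argument has no analogue for general $\theta$, and it is precisely this gap that the continuity statement of Lemma \ref{apb.1} is used to bridge. Once the orthogonal splitting $V_N=\bigoplus_{\theta'\in\Theta_N}V(\theta')$ is established, the block-diagonalisation of $\A_N$ and the identity $\sigma(\A_N)=\bigcup_{\theta'\in\Theta_N}\sigma(\theta')$ are routine, and the rest of the proof is the elementary density-plus-continuity argument above.
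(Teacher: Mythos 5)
Your proposal is correct, and its overall strategy coincides with the paper's: reduce to rational quasimomenta $\theta'=j/N$ and then use the continuity of $\lambda_k(\theta)$ from Lemma \ref{apb.1} together with the density of the rationals (your step (2) is verbatim the paper's first paragraph). The difference lies in how step (1) is implemented. The paper proves only the inclusion it needs, $\sigma(j/N)\subset\sigma(\A_N)$: it takes an eigenfunction $w$ of $\A(j/N)$, extends it $NQ$-periodically into $V_N$, and verifies the eigenvalue identity by \emph{folding} an arbitrary test function $\varphi\in V_N$ into the quasiperiodic test function $\phi(y)=\sum_{n}e^{-2\pi{\rm i}(j/N)\cdot n}\varphi(y+n)\in V(j/N)$; it never needs to decompose $V_N$ itself. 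You instead establish the full orthogonal Bloch splitting $V_N=\bigoplus_{\theta'\in\Theta_N}V(\theta')$ and hence the identity $\sigma(\A_N)=\bigcup_{\theta'\in\Theta_N}\sigma(\theta')$; the extra work is exactly the converse direction you flag, namely that each Bloch component of a constrained field is itself constrained, which you handle correctly by evaluating the constraints on each translate $Q_1+n$ and inverting the discrete Fourier matrix $[e^{2\pi{\rm i}\theta'\cdot n}]$. Your route buys a stronger per-$N$ statement (the paper only recovers the equality of the unions a posteriori, via the sandwich in Corollary \ref{equivformoflimspec}), at the cost of the additional DFT-inversion argument that the paper's one-directional unfolding avoids.
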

\begin{proof}[Proof of Lemma \ref{homlimitbloch1}]
In Section \ref{secblochlim} we showed that the spectra $\sigma(\theta)$ of $\A(\theta)$ are discrete and that the eigenvalues $\lambda_k(\theta)$ are continuous with respect to $\theta$. The continuity of the eigenvalues, and the fact that the rationals are dense in $\RR$, implies that
\begin{equation*}
\label{hombloch2}
\overline{\bigcup_{N \in \NN^2} \hspace{2pt} \bigcup_{\substack{ j \in \NN^2 \\0 \le  j  \le N-\textbf{1}}} \sigma\left( \tfrac{j}{N}\right)} = \bigcup_{\theta \in \left[0 , 1 \right)^2} \sigma{(\theta)}.
\end{equation*}
Henceforth, for fixed multi-indices $j=(j_1,j_2)$, $N=(N_1,N_2)$, $\tfrac{j}{N} : = \left( \tfrac{j_1}{N_1}, \tfrac{j_2}{N_2} \right)$ and we denote $ 0 \le j \le N -1$ to mean $0 \le j_i \le N_i - 1, i =1,2.$

In particular, to prove \eqref{limspec02} it is sufficient to show, for fixed $N \in \NN^2$ and $j \in \NN^2$ with $ 0 \le j \le N-1$, that
\begin{equation}
\label{spe21}
\sigma\left(\A_N\right) \supset \sigma \left(\tfrac{j}{N}\right). 
\end{equation}

Suppose we take such an $N \in \NN^2$ and $j \in \NN^2$. Consider $\left(\lambda(\tfrac{j}{N}),w\right)$,  an eigenvalue-eigenfunction pair of $\A\left(\tfrac{j}{N}\right)$, i.e. $w \in V\left(\tfrac{j}{N}\right)$ and
\begin{multline}
\label{eq:pcbloch1.1}
\int_{Q}  \nabla w_1 \cdot \overline{\nabla \phi_1}  + \gamma^{-1} \left( \div w \cdot \overline{ \div \phi} + \div  w^\perp \cdot \overline{ \div \phi^\perp} \right) 
= \lambda\left(\tfrac{j}{N}\right) \int_{Q}\epsilon_1 \left( w \cdot \overline{\phi}  + \gamma \chi_0 w_1 \overline{ \phi_1} \right) \\ \quad \forall \phi \in V\left(\tfrac{j}{N}\right).
\end{multline}
Recall, by \eqref{rho} and \eqref{emdegentensor2}, that \eqref{eq:pcbloch1.1} is equivalent to 
\begin{equation}
\label{eq:pcbloch1.12}
\int_{Q} a^{(0)} \nabla w \cdot \overline{\nabla \phi} = \lambda(\tfrac{j}{N}) \int_{Q} \rho w \cdot \overline{\phi},  \quad \forall \phi \in V\left(\tfrac{j}{N}\right).
\end{equation}

As any function  $u \in H^1_{j / N}(Q)$  has the representation $u(y) = e^{{\rm i} 2\pi \left(j/N\right) \cdot y} \widetilde{u}(y)$ for some $\widetilde{u} \in H^1_{\#}(Q)$, we can extend $u$ to belong to $H^1_{\#}(NQ)$ by $Q$-periodically extending $\widetilde{u}$ to $NQ$. Therefore, we perform such an extension on the solution $w$ to \eqref{eq:pcbloch1.12}, and can directly shown that $w \in V_N$, see \eqref{spaceVn}. Furthermore, for fixed $\varphi \in C^\infty_{\#}(Q) \cap V_N$, for the periodic second-order tensor $a^{(0)}$, one has
\begin{flalign}
\int_{NQ} a^{(0)} \nabla  w \cdot  \overline{\nabla \varphi} \nonumber & = \sum_{n_1 = 0}^{N_1 - 1}\sum_{n_2 = 0}^{N_2 - 1} \int_{Q+n} a^{(0)}(y ) \nabla_y w(y) \cdot \overline{\nabla_y\varphi(y)}\ \mathrm{d}y \nonumber \\
& = \sum_{n_1 = 0}^{N_1 - 1}\sum_{n_2 = 0}^{N_2 - 1} \int_{Q} a^{(0)}(y + n ) \nabla_y w(y + n) \cdot \overline{\nabla_y\varphi(y + n)}\ \mathrm{d}y \nonumber \\
&  = \sum_{n_1 = 0}^{N_1 - 1}\sum_{n_2 = 0}^{N_2 - 1} \int_{Q} a^{(0)}(y) \nabla_y e^{  2 \pi {\rm i} (j/N)  \cdot n }w(y ) \cdot \overline{\nabla_y\varphi(y + n)}\ \mathrm{d}y  \nonumber\\
& =  \int_{Q} a^{(0)} \nabla w \cdot \overline{\nabla \phi}, \label{eq:pcbloch1.2}
\end{flalign}
where
\begin{equation}
\label{testbloch}
\phi(y) : = \sum_{n_1 = 0}^{N_1 - 1}\sum_{n_2 = 0}^{N_2 - 1} e^{- 2 \pi {\rm i} ( j / N)\cdot n} \varphi( y + n).
\end{equation}
 Direct calculation shows $\phi(y) \in C^\infty(Q) \cap [H^1_{j / N}(Q)]^2$: Indeed, in the $x_1$ direction,
\begin{flalign*}
\phi(y + e_1 ) & = \sum_{n_1 = 0}^{N_1 - 1}\sum_{n_2 = 0}^{N_2 - 1} e^{\left(-  2 \pi {\rm i} (n_1j_1/N_1 + n_2j_2/N_2 )\right)} \varphi( y + n + e_1) \\
& = e^{  2 \pi {\rm i} (j_1/N_1 )}\sum_{m_1 = 1}^{N_1 }\sum_{m_2 = 0}^{N_2 -1 } e^{\left(- 2 \pi {\rm i} (m_1j_1/N_1 + m_2j_2/N_2 )\right)} \varphi( y + m) \\
& = e^{ 2 \pi{\rm i}  (j_1/N_1  )}\phi(y),
\end{flalign*}
with similar calculations holding in the $x_2$ direction and for the first-order derivatives of $\phi$. Moreover, $\phi(y) \in V\left(\tfrac{j}{N}\right)$ since $ a^{(1)}\nabla_y \varphi(y) = 0$ in $F_1 \cap NQ$ because $\varphi \in V_N$.

The equations \eqref{eq:pcbloch1.12}, \eqref{eq:pcbloch1.2} and \eqref{testbloch} imply that
\begin{flalign*}
\int_{NQ} a^{(0)} \nabla w \cdot \overline{\nabla \varphi} & = \int_{Q} a^{(0)} \nabla w \cdot \overline{\nabla \phi}  = \lambda(\tfrac{j}{N}) \int_{Q} \rho w \cdot \overline{\phi} \ \mathrm{d}y \\
& = \lambda(\tfrac{j}{N}) \sum_{n_1 = 0}^{N_1 - 1}\sum_{n_2 = 0}^{N_2 - 1} \int_{Q} \rho(y) e^{  2 \pi {\rm i}(j / N) \cdot n} w(y ) \cdot \overline{(\varphi(y + n))}\ \mathrm{d}y \\
& = \lambda(\tfrac{j}{N}) \sum_{n_1 = 0}^{N_1 - 1}\sum_{n_2 = 0}^{N_2 - 1} \int_{Q} \rho(y + n) w(y +n) \cdot \overline{(\varphi(y + n))}\ \mathrm{d}y \\
& = \lambda(\tfrac{j}{N}) \int_{NQ} \rho w \cdot \overline{\varphi}.
\end{flalign*}
This implies, by the arbitrariness of $\varphi \in C^\infty_{\#}(Q) \cap V_N$, that  $\lambda(\tfrac{j}{N})$ belongs to the spectrum of $\A_N$, i.e. \eqref{spe21} holds.

\end{proof}

\section{Completeness of spectrum}
In this section we shall prove that the inclusion
\begin{equation}
\label{limspe23}
\lim_{\ep \rightarrow 0} \sigma_\ep \subset \bigcup_{\theta \in \left[ 0 , 1 \right)^2} \sigma{(\theta)} = : \sigma_0.
\end{equation}
holds. More precisely, we shall  show that if $\lambda_\ep \in \sigma_\ep$, such that $\lambda_\ep \rightarrow \lambda$ as $\ep \rightarrow 0$, then $\lambda \in \sigma{(\theta)}$ for some $\theta \in [0,1)^2$. Along with \eqref{semicontspect}, the above inclusion proves Theorem \ref{mainthm1}.

For fixed $\lambda_\ep \in \sigma_\ep$, we know by the Floquet-Bloch decomposition, that there exist $\eta^\ep \in \ep^{-1} [0,1)^2,$ $w^\ep \in [H^1_{\eta^\ep}(\ep Q)]^2$, such that 
\begin{multline}
\label{eq:spcom1}
\int_{\ep Q} \Big( \chi_1\left(\tfrac{x}{\ep}\right) A^\ep_{1} + \tfrac{\ep^2}{\epsilon_0 - \epsilon_1 + \ep^2} \chi_0\left(\tfrac{x}{\ep}\right)A^\ep_{0} \Big) \nabla w^\ep \cdot \overline{\nabla \phi} \ \mathrm{d}x  = \lambda_\ep \int_{\ep Q} \rho\left(\tfrac{x}{\ep}\right) w^{\ep} \cdot \overline{\phi} \ \mathrm{d}x, \\ \qquad \forall \phi \in [C^\infty_{\eta^\ep}(\ep Q)]^2.
\end{multline} 
Upon, choosing in \eqref{eq:spcom1} the rescalings $y = \tfrac{x}{\ep}$ and $\theta_\ep = \ep \eta_\ep \in [0,1)^2$, then $u^\ep(y) : = w^\ep(\ep y)$ belongs to $[H^1_{\theta^\ep}(Q)]^2$ and solves 
\begin{multline}
\label{eq:spcom2}
\int_{Q} \Big(  \tfrac{1}{\ep^2}\chi_1\left( y \right) A^\ep_{1} + \tfrac{1}{\epsilon_0 - \epsilon_1 + \ep^2} \chi_0\left(y \right)A^\ep_{0} \Big) \nabla u^\ep \cdot \overline{\nabla \phi} \ \mathrm{d}x  = \lambda_\ep \int_{Q} \rho\left(y \right) u^{\ep} \overline{\phi} \ \mathrm{d}y, \\ \qquad \forall \phi \in [C^\infty_{\theta^\ep}(Q)]^2.
\end{multline}
Under the assumption that $\lambda_\ep \rightarrow \lambda$, we have, up to a subsequence, $\theta^\ep \rightarrow \theta$ as $\ep \rightarrow 0$. Therefore, to prove \eqref{limspe23} it is sufficient to show that the corresponding, appropriately normalised, sequence $u^\ep$ of solutions to \eqref{eq:spcom2} converges strongly in $[L^2(Q)]^2$ to some non-trivial function $u \in V(\theta)$ which solves
\begin{multline*}
\int_{Q}  \nabla u_1 \cdot \overline{\nabla \phi_1}  + \gamma^{-1} \left( \div u \cdot \overline{ \div \phi} + \div  u^\perp \cdot \overline{ \div \phi^\perp} \right) 
= \lambda \int_{Q}\epsilon_1 \left( u \cdot \overline{\phi}  + \gamma \chi_0 u_1 \overline{ \phi_1} \right), \\ \forall \phi \in V(\theta),
\end{multline*}
or, equivalently (see \eqref{emdegentensor2})
$$
\int_{Q} a^{(0)} \nabla u \cdot \overline{\nabla \phi} = \lambda \int_{Q} \rho u \cdot \overline{\phi}, \quad \forall \phi \in V(\theta).
$$
Indeed, if true this implies $\lambda \in \sigma{(\theta)}$. This is the main spectral completeness result.
\begin{thm}
\label{thm:spcom1}
For  $\lambda_\ep \in \sigma_\ep$, let $\theta^\ep \in [0, 1)^2$, $u^\ep \in [H^1_{\theta^\ep}(Q)]^2$ with $\norm{u^\ep}_{[L^2(Q)]^2} = 1$ satisfying
\begin{multline}
\int_{Q} \Big(  \tfrac{1}{\ep^2}\chi_1\left( y \right) A^\ep_{1} + \tfrac{1}{\epsilon_0 - \epsilon_1 + \ep^2} \chi_0\left(y \right)A^\ep_{0} \Big)\nabla u^\ep \cdot \overline{\nabla \phi} \ \mathrm{d}y = \lambda_\ep \int_{Q} \rho(y) u^\ep \cdot \overline{\phi} \ \mathrm{d}y, \\ \forall  \phi \in [H^1_{\theta^{\ep}}(Q)]^2. \label{eq:spcom3}
\end{multline}
Assume $\lambda_\ep \rightarrow \lambda$, $\theta^\ep \rightarrow \theta$ as $\ep \rightarrow 0$. Then, there exists a unique $u \in V(\theta)$, $u \neq 0$, such that 

\begin{equation}
\label{eq:spcom4}
\int_{Q} a^{(0)} \nabla u \cdot \overline{\nabla \phi}  = \lambda \int_{Q} \rho u \cdot \overline{\phi}, \quad \forall \phi \in V(\theta).
\end{equation}
\end{thm}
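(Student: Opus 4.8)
\emph{Proof proposal.} The strategy is a compactness/passage-to-the-limit argument on the fixed cell $Q$ — a ``quasiperiodic cell version'' of the two-scale homogenisation of Section~3 — whose new ingredient is the continuity of the constrained spaces $V(\theta)$ furnished by Lemma~\ref{lem:spcom1}.

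\textbf{Step 1 (uniform $H^1(Q)$ bound and compactness).} Testing \eqref{eq:spcom3} with $\phi=u^\ep$ and using $\norm{u^\ep}_{[L^2(Q)]^2}=1$, $\lambda_\ep\to\lambda$, gives
\[
\int_{Q}\Big(\tfrac1{\ep^2}\chi_1 A^\ep_{1}+\tfrac1{\epsilon_0-\epsilon_1+\ep^2}\chi_0 A^\ep_{0}\Big)\nabla u^\ep\cdot\overline{\nabla u^\ep}\,\mathrm dy=\lambda_\ep\int_{Q}\rho\,u^\ep\cdot\overline{u^\ep}\,\mathrm dy\le C .
\]
The key observation is that although $A^\ep_1$ of \eqref{emextraeq1} degenerates, its smallest eigenvalue is exactly $1-\sqrt{1-\ep^2/\epsilon_1}\ge \ep^2/(2\epsilon_1)$, so $\ep^{-2}A^\ep_1$ is \emph{uniformly} coercive; combined with the uniform positive definiteness of $A^\ep_0$ this yields $\norm{\nabla u^\ep}_{[L^2(Q)]^{2\times2}}\le C$. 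Hence, along a subsequence, $u^\ep\rightharpoonup u$ weakly in $[H^1(Q)]^2$ and, by Rellich's theorem, $u^\ep\to u$ strongly in $[L^2(Q)]^2$; in particular $\norm{u}_{[L^2(Q)]^2}=1$, so $u\neq0$. This strong $L^2$ convergence (immediate because we work on the bounded cell $Q$) is precisely what secures non-triviality of the limit.

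\textbf{Step 2 ($u\in V(\theta)$).} Quasiperiodicity: writing $u^\ep=e^{2\pi{\rm i}\theta^\ep\cdot y}v^\ep$ with $v^\ep\in[H^1_\#(Q)]^2$ and using $\theta^\ep\to\theta$, the sequence $v^\ep$ converges weakly in $[H^1(Q)]^2$ to $e^{-2\pi{\rm i}\theta\cdot y}u$, which therefore lies in $[H^1_\#(Q)]^2$; thus $u\in[H^1_\theta(Q)]^2$. Constraints in $Q_1$: from \eqref{emdegentensor1} one has $a^{(1)}\nabla v\cdot\overline{\nabla v}=\abs{\div v}^2+\abs{\div v^\perp}^2$ on $Q_1$ (cf.\ Remark~\ref{remequ}), and $a^{(1)}\le 2A^\ep_1$ there (the two symmetric tensors share eigenspaces on $Q_1$). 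Hence the energy bound of Step~1 gives $\int_{Q_1}(\abs{\div u^\ep}^2+\abs{\div (u^\ep)^\perp}^2)\le 2\int_{Q_1}A^\ep_1\nabla u^\ep\cdot\overline{\nabla u^\ep}\le C\ep^2\to0$, and by weak lower semicontinuity $\div u=\div u^\perp=0$ in $Q_1$. Therefore $u\in V(\theta)$, see \eqref{emspaceVn}.

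\textbf{Step 3 (passage to the limit).} Fix $\phi\in V(\theta)$; by Lemma~\ref{lem:spcom1} there is $\phi^\ep\in V(\theta^\ep)$ with $\phi^\ep\to\phi$ strongly in $[H^1(Q)]^2$, and we use it as test function in \eqref{eq:spcom3}. A direct computation from \eqref{emextraeq1}--\eqref{emextraeq2} gives, on $Q_1$, $\ep^{-2}A^\ep_1=\ep^{-2}a^{(1)}+a^{(0)}+O(\ep^2)$, and on $Q_0$, $(\epsilon_0-\epsilon_1+\ep^2)^{-1}A^\ep_0=a^{(0)}+O(\ep^2)$, with $a^{(0)}$ as in \eqref{emdegentensor2}. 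The only singular contribution is $\ep^{-2}\int_{Q_1}a^{(1)}\nabla u^\ep\cdot\overline{\nabla\phi^\ep}$; since $a^{(1)}$ is real and major-symmetric this equals $\ep^{-2}\int_{Q_1}\nabla u^\ep\cdot\overline{a^{(1)}\nabla\phi^\ep}$, which vanishes \emph{identically} because $\phi^\ep\in V(\theta^\ep)$ forces $a^{(1)}\nabla\phi^\ep=0$ in $Q_1$. The $O(\ep^2)$ remainders vanish in the limit by the $H^1(Q)$ bound on $u^\ep$, while $\int_{Q}a^{(0)}\nabla u^\ep\cdot\overline{\nabla\phi^\ep}\to\int_Q a^{(0)}\nabla u\cdot\overline{\nabla\phi}$ and $\lambda_\ep\int_Q\rho u^\ep\cdot\overline{\phi^\ep}\to\lambda\int_Q\rho u\cdot\overline{\phi}$ by weak--strong convergence (one factor strong in $[L^2(Q)]^2$). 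This yields \eqref{eq:spcom4} for all $\phi\in V(\theta)$; together with $u\neq0$ it gives $\lambda\in\sigma(\theta)$, which is all that is needed for the completeness inclusion. (The stated uniqueness of $u$ plays no role in that inclusion and we will not dwell on it.)

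\textbf{Main obstacle.} The delicate point is the uniform $H^1(Q)$ estimate of Step~1: it relies on the degeneracy of $A^\ep_1$ being of order exactly $\ep^2$, matching the $\ep^{-2}$ weight, so that $\ep^{-2}A^\ep_1$ is uniformly coercive while $\ep^{-2}a^{(1)}$ is not — and then, in Step~3, on the fact that this genuinely singular term drops out \emph{exactly}, not merely asymptotically, thanks to the recovery sequence $\phi^\ep$ of Lemma~\ref{lem:spcom1} lying in $V(\theta^\ep)$ exactly. Everything else is routine weak--strong analysis.
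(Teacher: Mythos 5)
Your proposal is correct and follows essentially the same route as the paper's proof: the energy estimate via the exact $O(\ep^2)$ degeneracy of $A^\ep_1$ giving a uniform $H^1(Q)$ bound, weak $H^1$/strong $L^2$ compactness on the fixed cell to secure $u\neq 0$ and $u\in V(\theta)$, and passage to the limit with the recovery sequence $\phi^\ep\in V(\theta^\ep)$ from Lemma \ref{lem:spcom1}, for which the singular term vanishes identically by symmetry of $a^{(1)}$. Your Step 2 merely spells out (via the divergence identities and weak lower semicontinuity) what the paper states more tersely as $(a^{(1)})^{1/2}\nabla u^\ep\to 0$ implying $(a^{(1)})^{1/2}\nabla u=0$.
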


\begin{proof}
From the assumptions of the theorem, \eqref{emextraeq1}, \eqref{emtensorprop}, \eqref{emdegentensor1} and by choosing test functions $\phi = u^\ep$ in \eqref{eq:spcom3}, we have the following bounds: there exists $C$ independent of $\ep$ such that
\begin{align*}
\norm{u^\ep}_{[L^2(Q)]^2} & =1 & \norm{\nabla u^\ep}_{[L^2(Q)]^{2\times2}} &\le C & \norm{ \left( a^{(1)} \right)^{1/2} \nabla u^\ep}_{[L^2(Q)]^{2\times 2}} &\le \ep^2 C.
\end{align*}
This implies $u^\ep$ converges, up to a subsequence, to some $u$ weakly in $[H^1(Q)]^2$ and therefore strongly in $[L^2(Q)]^2$ as $\ep \rightarrow 0$. Furthermore, $\left( a^{(1)} \right)^{1/2} \nabla u^\ep \rightarrow 0$ in $[L^2(Q)]^{2\times2}$, which imples  $ \left( a^{(1)} \right)^{1/2} \nabla u^0 =0$; i.e. we have shown that  $u \in V(\theta)$, see \eqref{emspaceVn}. \\

It remains to show that $u$ satisfies \eqref{eq:spcom4}. For any fixed $\phi \in V(\theta)$, by Lemma \ref{lem:spcom1} there exists $\phi^\ep \in V(\theta^\ep)$ such that $\phi^\ep \rightarrow \phi$ strongly in $[H^1(Q)]^2$ as $\ep \rightarrow 0$. Then, by taking $\phi^\ep$ as a test function in \eqref{eq:spcom3}, recalling $a^{(1)}(y) \nabla u^\ep \cdot \overline{\nabla \phi^\ep} = 0$ and by using the appropriate analogue of \eqref{emextralemeq4}, gives
$$
\lim_{\ep \rightarrow 0}\int_{Q} \Big(  \tfrac{1}{\ep^2}\chi_1\left( y \right) A^\ep_{1} + \tfrac{1}{\epsilon_0 - \epsilon_1 + \ep^2} \chi_0\left(y \right)A^\ep_{0} \Big)\nabla u^\ep \cdot \overline{\nabla \phi^\ep} \ \mathrm{d}y  = \lim_{\ep \rightarrow 0} \int_{Q} a^{(0)} \nabla u^\ep \cdot \overline{\nabla \phi^\ep}.
$$
Therefore, since $\phi^\ep \rightarrow \phi$, $u^\ep \rightharpoonup u$ in $[H^1(Q)]^2$ as $\ep \rightarrow 0$, we find that
$$
\lim_{\ep \rightarrow 0}\int_{Q} \Big(  \tfrac{1}{\ep^2}\chi_1\left( y \right) A^\ep_{1} + \tfrac{1}{\epsilon_0 - \epsilon_1 + \ep^2} \chi_0\left(y \right)A^\ep_{0} \Big)\nabla u^\ep \cdot \overline{\nabla \phi^\ep} \ \mathrm{d}y =\int_{Q} a^{(0)} \nabla u \cdot \overline{\nabla \phi}.
$$
Additionaly, one can easily see that
$$
\lim_{\ep \rightarrow 0}\lambda_\ep \int_{Q} \rho u^\ep \cdot \overline{\phi^\ep} = \lambda \int_{Q} \rho u \cdot \overline{\phi}.
$$
This implies $u$ solves equation \eqref{eq:spcom4} and, since $u^\ep$converges to $u$ strongly in $[L^2(Q)]^2$ one has $\norm{u}_{[L^2(Q)]^2} = 1$, i.e. $u^0 \neq 0$. This completes the proof of the theorem.
\end{proof}

\section{Photonic crystals which exhibit gaps in the limit spectrum $\sigma_0$}
\label{examples}
In this section we will demonstrate gaps in the limit spectrum 
$$
\sigma_0 = \bigcup_{\theta \in [0,1)^2} \sigma(\theta)
$$
for specific photonic crystals. The limit spectrum $\sigma_0$ will have a gap if any two adjacent bands do not overlap,  that is, if for some $i \in \mathbb{N}$, 
$$\max_{\theta \in [0,1)^2} \lambda_i (\theta) < \min_{\theta \in [0,1)^2} \lambda_{i+1} (\theta).$$
Therefore, for each $\theta \in [0,1)^2$, we are interested in studying the following eigenvalue problem: Find $u \in V(\theta)$ such that
\begin{equation}
\label{homlimitcase1.1}
\int_{Q} \nabla u_1 \cdot \overline{\nabla \phi_1} + \gamma^{-1} \left( \div u \cdot \overline{\div \phi} + \div u^\perp \cdot \overline{\div \phi^\perp} \right) =  \lambda(\theta) \int_{Q} \epsilon_1 \left(  u \cdot \overline{\phi} +  \gamma \chi_0 u_1 \overline{\phi_1} \right),  \ \ \forall \phi \in V(\theta).
\end{equation} 
Here, we recall that $\gamma : = \tfrac{\epsilon_0}{\epsilon_1} - 1$ and
\begin{equation}
\label{limexV} 
\quad V(\theta) : = \left\{ u \in [H^1_{\theta}(Q)]^2 : \ \text{ $\div u = 0$ and  $\div u^{\perp} = 0$ in  $Q_1$}\right\}.
\end{equation}
Let us now consider some particular problems.
\subsection{Normal incident wave on a one-dimensional photonic crystal}
We consider a one-dimensional multilayer photonic crystal. Namely, consider a slab like inclusion $Q_1 = [a,b]\times[0, 1)$, $0 <a <b < 1$. Seek a solution of the form $u(x,y) = v(x,y_1)e^{2\pi{\rm i}ky_2}$. Then \eqref{limexV} implies that $u \in V(\theta)$ for $\theta=(\theta_1, \theta_2)$, if, and only if, $v$ satisfies 
\begin{flalign}
\pd{v_1}{y_1} + 2\pi{\rm i}k v_2 & = 0, \quad \text{in $Q_1$} \label{1dslabeq1}\\
2\pi{\rm i}kv_1 - \pd{v_2}{y_1} & = 0, \quad \text{in $Q_1$}. \label{1dslabeq2}
\end{flalign}
That yields $v(x,y_1) = a_1(x)\big( {-2\pi{\rm i}e^{ky_1}}, {e^{ky_1}} \big) + a_2(x)\big( 2\pi{\rm i}e^{-ky_1}, {e^{-ky_1}} \big)$ in $Q_1$. Heuristically, we expect that propagation is more likely to be forbidden in direction of greatest the dielectric discontinuity, therefore let us consider waves that propagate in the direction with the greatest variation in electric permittivity, i.e set $k=0$ and $\theta_2=0$. Equations \eqref{1dslabeq1}-\eqref{1dslabeq2} imply that $u \in V(\theta)$ if, and only if, $v$ is $\theta$-quasi periodic and is a constant vector in $Q_1$, i.e. $v \in [H^1_{\theta_1}[0,1)]^2,$ and $v$ is a constant vector in $[a,b]$.

We now wish to study problem \eqref{homlimitcase1.1}, which takes the following form: For fixed $\theta_1 \in [0,1)$, find  $\lambda(\theta_1)$ and $v=(v_1,v_2) \in [H^1_{\theta_1}[0,1)]^2,$ such that $v$ is a constant vector in $[a,b]$ and
\begin{multline}
\label{homlimitcase1.2}
\int_{[0,a) \cup (b,1]} \left(\begin{matrix} 2v_1^{'} \\ v_2^{'} \end{matrix} \right) \cdot \left( \begin{matrix} \overline{\phi_1^{'}} \\ \overline{\phi_2^{'}} \end{matrix} \right) =  \lambda(\theta_1) \int_{0}^{1}  \rho u \cdot \overline{\phi}, \quad \forall \phi \in [H^1_{\theta_1}[0,1)]^2, \text{ and $\phi$ is constant in $[a,b]$}.
\end{multline} 
Here
\begin{equation*}
\rho(y) = \chi_1(y)\left( \begin{array}{cc} 1 & 0 \\ 0 & 1\end{array} \right) + \chi_0(y) \left( \begin{array}{cc} 2 & 0 \\ 0 & 1\end{array} \right). 
\end{equation*}
(We have chosen $\varepsilon_0=2$, $\varepsilon_1=1$ for simplicity). Notice that, since $ \phi \in C^\infty_0([0,1]\backslash [a,b])$ is an admissible test function, $\eqref{homlimitcase1.2}$ implies that
\begin{equation}
\label{slab1}
- v^{''}(y)  = \lambda(\theta_1) v(y) \qquad y\in [0,a) \cup (b,1].
\end{equation}
Furthermore, since $v(y) =C \in \mathbb{C}^2$ in $[a,b]$, integrating by parts in \eqref{homlimitcase1.2}, and using \eqref{slab1}, gives for fixed $\phi$
\begin{equation}
\label{homlimitcase1.3}
 \lambda(\theta_1) \int_{a}^{b}  C \cdot \overline{\phi} \ \mathrm{d}y = 2 v_{1}^{'}(y) \overline{\phi_1}(y)\vert_{y=b}^{a} + v_{2}^{'}(y) \overline{\phi_2}(y) \vert_{y=b}^{a}.
\end{equation} 
The space of admissible test functions is two-dimensional and spanned by functions of the form $(1,0)^T$ and $(0,1)^T$ in $[a,b]$. Therefore \eqref{homlimitcase1.3} can be reduced to the following algebraic system:
\begin{align*}
\lambda(\theta_1)( b - a)C_1 & = 2v_{1}^{'}(a) - 2v_{1}^{'}(b), \\
\lambda(\theta_1)( b - a)C_2 & = v_{2}^{'}(a) - v_{2}^{'}(b).
\end{align*}  
Taking all of this into consideration, we see that solving \eqref{homlimitcase1.2} is equivalent to simultaneously finding $u$ and $C$ such that
\begin{gather}
\begin{aligned}
- v^{''}(y)  = \lambda(\theta_1) v(y) & & y\in [0,a) \cup (b,1], \label{pcslabeq1}\end{aligned} \\
v(1) = e^{ 2 \pi{\rm i} \theta_1} v(0), \qquad v'(1) = e^{2 \pi{\rm i} \theta_1} v'(0), \label{pcslabeq2}
\end{gather}
with the following interface conditions
\begin{align}
v(a)=v(b)&=C \label{pcslabeq3}\\
\lambda(\theta_1)( b - a)C_1 & = 2v_{1}^{'}(a) -2v_{1}^{'}(b), \label{pcslabeq4}\\
\lambda(\theta_1)( b - a)C_2 & = v_{2}^{'}(a) - v_{2}^{'}(b). \label{pcslabeq5}
\end{align} 
Seeking solutions to \eqref{pcslabeq1}-\eqref{pcslabeq5} of the form 
\begin{align*}
v(y) = A^1 e^{{\rm i} \sqrt{\lambda(\theta_1)}\theta_1 y} + A^2e^{ - {\rm i} \sqrt{\lambda(\theta_1)}\theta_1 y} & & y \in [0,a), \\
v(y) = B^1 e^{{\rm i} \sqrt{\lambda(\theta_1)}\theta_1 y} + B^2e^{ - {\rm i} \sqrt{\lambda(\theta_1)}\theta_1 y} & & y \in (b,1],
\end{align*}
for some $A^1=(A^1_1,A^1_2)^T$, $A^2=(A^2_1,A^2_2)^T$, $B^1=(B^1_1,B^1_2)^T$, $B^2=(B^2_1,B^2_2)^T$ to be determined, we arrive at a linear system that can be represented as follows: find $X = (C, A^1, A^2, B^1, B^2)^T$ such that
$$
M(\lambda(\theta_1), \theta_1) X = 0,
$$
where $M(\lambda(\theta_1),\theta_1)$ is the corresponding $10 \times 10$ matrix. To solve this we find the zeros of the function $F: \mathbb{R} \times [0, 1) \rightarrow \mathbb{C}$ defined by $F(x, y) : = \det{M(x,y)}$. The $\lambda = x$,  where $x$ belongs to the set of the zeros of $F$, make up the spectrum of the limit operator, see Figure \ref{fig:homlimitspec}. We find that the limit spectrum of the one-dimensional crystal does indeed have gaps.

\begin{center}
\begin{figure}
		\includegraphics[width=\linewidth]{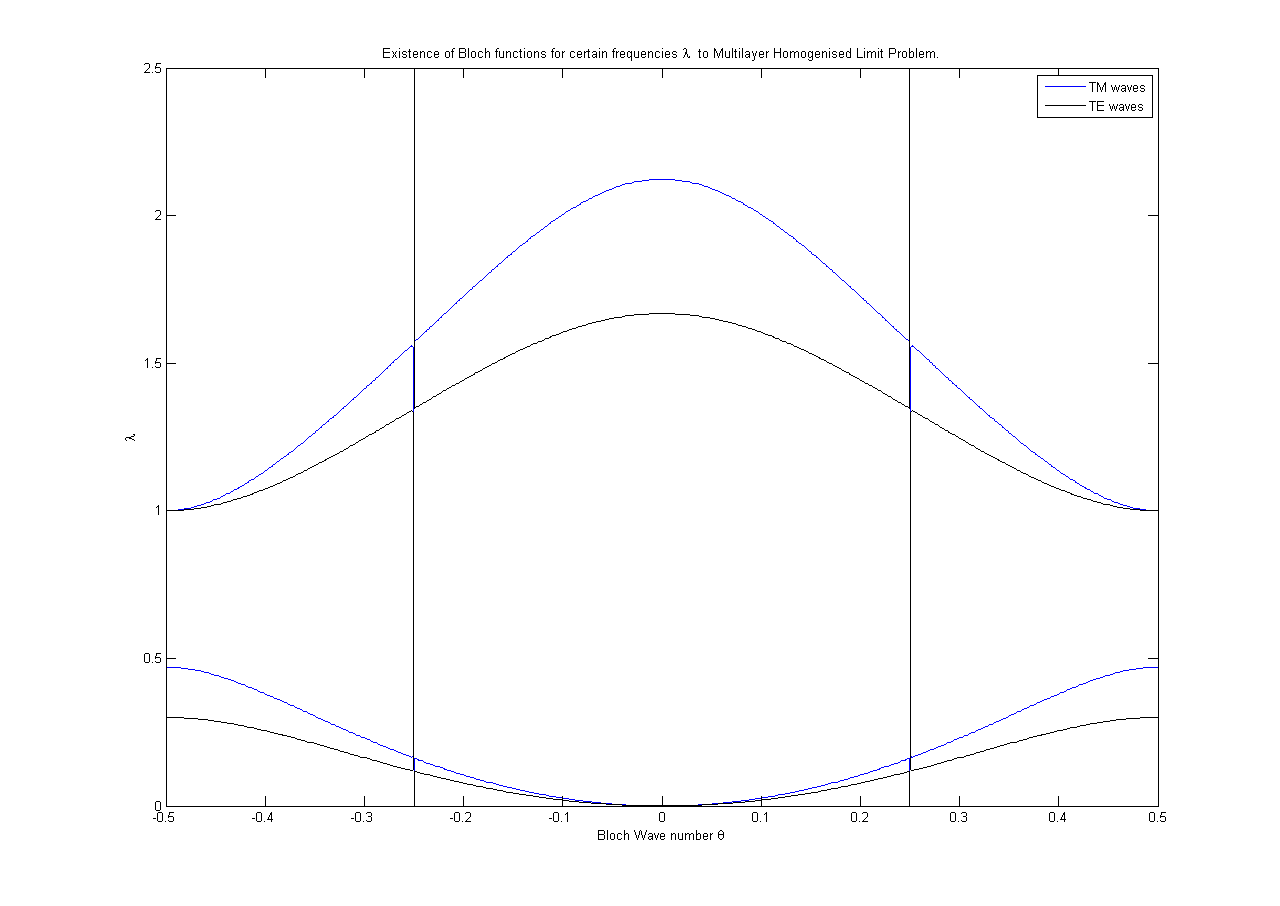}
	\caption{Band gap structure of limit spectrum. Plot was made by using Matlab to find the level curve $F(\lambda, \theta) = 0$.}
	\label{fig:homlimitspec}
\end{figure}
\end{center}
\begin{rem}
In Figure \ref{fig:homlimitspec} we see two curves whose image is a part of the spectrum of our homogenised limit operator. The reason for this lies the symmetry of the domain in $x_1$-direction, waves propagating with wave number $(k_{x_1},k_{x_2})=(0,k)$ are polarised. These are the so-called transverse magnetic (TM) and transverse Electric (TE) polarisations. The presence of such polarisations implies that one could have decomposed problem \eqref{effectivecontrast} into finding independently solutions of the form $u^\ep =( u^\ep_1,0)$ and $u^\ep=(0,u^\ep_2)$. These functions would two-scale converge to functions of the form $(u_1,0)$ and $(0,u_2)$ respectively, therefore leading to an effective TM or TE polarised limit problem. It is to be noted here that the general two-dimensional case has no such polarisations and the solution $u^\ep=(u^\ep_1,u^\ep_2)$ to problem \eqref{effectivecontrast} cannot be decoupled. This makes the two-dimensional case intrinsically more difficult to study. We consider a specific two-dimensional example below, with small inclusions, and prove there exist at least one spectral gap using variational arguments.
\end{rem}

\subsection{ARROW fibres: two-dimensional photonic crystals whose inclusions have vanishing volume faction}
We consider a two-dimensional photonic fibre with small circular inclusions. That is, let $Q_0 = B_\delta(0)$  the ball of radius  $\delta < <1$ centred at the origin. Such a photonic fibre could be created, for example, by drilling a periodic array of cylindrical holes of a very small cross-sectional radius in a dielectric material and then filling the holes with a material that is more optically dense than the background dielectric ( $\varepsilon_0 > \varepsilon_1$ ). (Note in passing that such models are known in physics as ARROW fibres, see e.g. \cite{ARROW}.)

We will show that for sufficiently small $\delta$ that the limit spectrum $\sigma_0$ has at least one gap. To this end, we will show that there exists constants $c_1,c_2 >0$ such that 
\begin{align}
\label{eq:PClimspec1}
\lambda_2 (\theta) &\le - \frac{c_2}{\delta^2 \ln{\delta}},  &  \lambda_3(\theta) &\ge c_1 \delta^{-2}.
\end{align}

\noindent Let us now show that \eqref{eq:PClimspec1} holds. For $\theta = 0$ we see that $\lambda(0) = 0$ is an eigenvalue, with multiplicity 2, for \eqref{homlimitcase1.1}: the orthogonal eigenfunctions, of $\lambda(0) = 0$, are $u_0(y) = (1,0)$ and $v_0(y)=(0,1)$. This implies 
$$
\lambda_1(0) = \lambda_2(0) = 0.
$$
Let us now consider the more interesting, non-trivial case $\theta \neq 0$. By classical variational arguments, it is known that
$$
\lambda_1 (\theta) = \min_{\substack{u \in V(\theta) \\  u \neq 0}} \frac{\int_{Q} a^{(0)} \nabla u \cdot \overline{\nabla u} }{ \int_{Q} \rho u \cdot u}.
$$
By the uniform ellipticity and boundedness of $a^{(0)}$ and $\rho$, it is clear that there exists a constant $C>0$ such that 
$$
\lambda_1 (\theta) \le C \min_{\substack{u \in V(\theta) \\  u \neq 0}} \frac{\int_{Q} \vert \nabla u \vert^2}{\int_{Q} \vert u \vert^2},
$$
We shall now show that the right hand side of the above inequality is bounded from above by $ - \tfrac{c_2}{\delta^2 \ln{\delta}}$. Denote $u = \nabla N$ where $N \in H^2_{\theta}(Q)$ is the unique solution to 
$$
- \Delta N = \chi_0,
$$
where $\chi_0$ is the characteristic function of $Q_0 = B_\delta(0)$. Then
\begin{gather}
\int_{Q} \vert \nabla u \vert^2   \le \int_{Q} \vert \nabla^2 N \vert = \int_Q \vert \Delta N \vert^2  = \delta^2 \vert B_1(0) \vert, \nonumber \\
\intertext{and}
\label{limeqadd3}
\begin{aligned}
\int_{Q} \vert  u \vert^2  & = \int_{Q} \vert \nabla N \vert^2  = - \int_{Q} \Delta N \cdot N =  \int_{ Q_0} N  \ge - c \delta^4 \ln{\delta}
\end{aligned}
\end{gather}
The last inequality is due to subtle technical arguments, see Proposition \ref{prop:pcspectlim1}, Proposition \ref{prop:pcspectlim2} and Corollary \ref{cor:pclimitspec} below. Hence,  $\lambda_1(\theta) \le - \tfrac{c_2}{\delta^2 \ln{\delta}}$. Furthermore, this result also follows for $v = \nabla^{\perp} N$ and, since $u,v$ are readily seen to be orthogonal, the min-max variational principle implies  $\lambda_2(\theta) \le - \tfrac{c_2}{\delta^2 \ln{\delta}}$. \\

\noindent We shall now show that the second inequality in \eqref{eq:PClimspec1} for $\lambda_3(\theta)$ for $\theta \neq 0$ which will imply that it  holds for $\theta = 0$ by continuity of $\lambda(\theta)$.  Suppose $\theta \in (0,1)^2$. By variational principle and ellipticity of $a^{(0)}$ and $\rho$,  
$$
\lambda_3(\theta) \ge C \inf_{\substack{u \perp v \\ u \perp w}} \frac{\int_{Q} \vert \nabla u \vert^2}{\int_{Q} \vert u \vert^2},
$$
for some constant $C>0$ and, for any $v,w \in V(\theta)$ such that $v \perp w$. Here  $\perp$ should be read as  orthogonality in $L^2$. We shall choose $v = \nabla N$, $w= \nabla^{\perp} N$ for $N$ constructed above; clearly $v \perp w$. For fixed $u \in V(\theta)$, by Lemma \ref{lem:spcom2}, $u = \nabla a + \nabla^{\perp} b$ for some $a,b  \in H^2_{\theta}(Q)$ that are harmonic in $Q_1$.  Note that, for $u \perp v$
\begin{align*}
0 & = \int_{Q} u \cdot \overline{v} = \int_{Q} \left( \begin{array}{c} a_{,1} - b_{,2} \\ a_{,2} + b_{,1} \end{array} \right) \cdot \left( \begin{array}{c} \overline{N_{,1}} \\ \overline{N_{,2}} \end{array} \right) =  - \int_{Q} a \overline{\Delta N} =  \int_{Q_0} a,
\end{align*} 
which implies $\int_{Q_0} a =0$. Similarly $u \perp w$ implies $\int_{Q_0} b= 0$. Furthermore, we observe that
\begin{equation}
\label{limeqadd1}
\frac{\int_{Q} \vert \Delta a \vert^2}{\int_{Q} \vert \nabla a \vert^2} = \frac{\int_{Q} \vert \Delta a \vert^2}{ - \int_{Q} \Delta a \cdot \overline{a} } \ge \mu_2 \delta^{-2}, \quad \mu_2>0.
\end{equation}
Indeed, by noticing that
\begin{flalign}
\left\vert \int_Q \Delta a \cdot \overline{a} \right\vert & \le \left( \int_{Q_0} \vert \Delta a \vert^2 \right)^{1/2}\left( \int_{Q_0} \vert a \vert^2 \right)^{1/2} \nonumber \\
 & \le  \left( \int_{Q_0} \vert \Delta a \vert^2 \right)^{1/2} \left( \delta^{-2} \mu_2 \right)^{-1/2} \left( \int_{Q_0} \vert \nabla a \vert^2 \right)^{1/2} \nonumber \\
 & \le \left( \delta^{-2} \mu_2 \right)^{-1/2} \left( \int_{Q_0} \vert \Delta a \vert^2 \right)^{1/2}\left( \int_{Q} \vert \nabla a \vert^2 \right)^{1/2} \nonumber \\ 
 & = \left( \delta^{-2} \mu_2 \right)^{-1/2} \left( \int_{Q_0} \vert \Delta a \vert^2 \right)^{1/2}\left( - \int_{Q}  \Delta a \cdot \overline{a}  \right)^{1/2}. \label{limeqadd2}
\end{flalign}
The second inequality is due to the following Poincar\'{e} type inequality: 
\begin{equation}
\label{limponineq}
\int_{Q_0} \vert a \vert^2  \le  \mu_2^\delta  \int_{Q_0} \vert \nabla a \vert^2,
\end{equation}
where $\mu^\delta_2$ is the first non-zero eigenvalue of the Neumann Laplacian on $Q_0 = B_\delta (0)$. The inequality \eqref{limponineq} can be shown by a simple application of the spectral theory for self-adjoint operators. Furthermore, by a simple rescaling argument, $\mu^\delta_2 = \delta^{-2} \mu_2$, where $\mu_2$ is the first non-zero eigenvalue of the Neumann Laplacian on the unit ball, $B_{1}(0)$. Then \eqref{limeqadd1} immediately follows from \eqref{limeqadd2}.

Similarly
$$
\frac{\int_{Q} \vert \Delta b \vert^2}{\int_{Q} \vert \nabla b \vert^2} \ge \mu_2 \delta^{-2}. 
$$
Furthermore, since $a,b \in H^2_{\theta}(Q)$, then by integration by parts
\begin{flalign*}
\int_Q \nabla a \cdot \overline{\nabla^\perp b}  & = \int_Q -a_{,1}\overline{b_{,2}} + a_{,2}\overline{b_{,1}} = \int_Q a\overline{b_{,21}} - a\overline{b_{,12}} = 0, \\
\int_Q \nabla b \cdot \overline{\nabla^\perp a}  & = \int_Q -b_{,1}\overline{a_{,2}} + b_{,2}\overline{a_{,1}} = \int_Q b\overline{a_{,21}} - b\overline{a_{,12}} = 0.
\end{flalign*}
This implies, for $u = \nabla a + \nabla^\perp b$,
\begin{equation*}
\int_{Q} \vert  u \vert^2 = \int_{Q} \vert \nabla a \vert^2 + \int_{Q} \vert \nabla^\perp b \vert^2 = \int_{Q} \vert \nabla a \vert^2 + \int_{Q} \vert \nabla b \vert^2 .
\end{equation*}
Since, for any $a,b$, by a similar direct inspection,
\begin{align*}
\int_Q \nabla^2 a \cdot \overline{\nabla \left( \nabla^\perp b \right)} = \int_Q \nabla^2 b \cdot \overline{\nabla \left( \nabla^\perp a \right)} = 0,
\end{align*}
we also obtain
\begin{equation*}
\int_{Q} \vert \nabla u \vert^2 = \int_{Q} \vert \Delta a \vert^2 + \int_{Q} \vert \Delta b \vert^2.
\end{equation*}
All of these considerations, and \eqref{limeqadd1}, imply that
$$
\frac{\int_{Q} \vert \nabla u \vert^2}{\int_{Q} \vert  u \vert^2} = \frac{\int_{Q} \vert \Delta a \vert^2}{\int_Q \vert \nabla a \vert^2 + \int_{Q} \vert \nabla b \vert^2} + \frac{\int_{Q} \vert \Delta b \vert^2}{\int_Q \vert \nabla a \vert^2 + \int_{Q} \vert \nabla b \vert^2} \ge \delta^{-2}\mu_2 \frac{\int_{Q} \vert \nabla a \vert^2 + \int_{Q} \vert \nabla b \vert^2}{\int_{Q} \vert \nabla a \vert^2 + \int_{Q} \vert \nabla b \vert^2}.
$$
Hence $\lambda_3(\theta) \ge \delta^{-2} \mu_2.$

Finally, we prove \eqref{limeqadd3}.
\begin{prop}
\label{prop:pcspectlim1}
Let $Q_0 = B_\delta(0) \subset \mathbb{R}^2$ be the open ball of radius $\delta < \tfrac{1}{2}$ centred at the origin. For fixed $\theta \in (0,1)^2$,  let $u \in H^1_\theta (Q)$ be the unique solution to
\begin{equation}
\label{eq:2dgapprop1}
- \Delta u = \chi_0.
\end{equation}
Then 
\begin{equation}
\label{eq:2dgapprop2}
\int_{B_\delta (0)} u =  \frac{\pi }{8}\delta^4 - \pi^2 \delta^4 \left( \frac{1}{2 \pi} \ln{\delta} - g_0(\theta) \right) ,
\end{equation}
for some constant $g_0(\theta)$ depending on $\theta$.
\end{prop}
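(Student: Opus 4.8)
The plan is to write $u$ in terms of the $\theta$-quasiperiodic Green's function and integrate twice. Since $\theta\in(0,1)^2$, the $\theta$-quasiperiodic Laplacian has trivial kernel (its eigenvalues are $4\pi^2|\theta+z|^2>0$, $z\in\ZZ^2$; cf. Lemma \ref{lem:spcom3.1}(i)), hence it is invertible, and — thanks to the Fourier representation
\[
\mathcal G_\theta(w)\;=\;\sum_{z\in\ZZ^2}\frac{e^{2\pi{\rm i}(\theta+z)\cdot w}}{4\pi^2|\theta+z|^2}
\]
— its inverse is the integral operator whose kernel depends on the two variables only through their difference. Because the nearest nonzero point of $\ZZ^2$ lies at distance $1$, the only singularity of $\mathcal G_\theta$ in the open disc $\{|w|<1\}$ sits at $w=0$ and is logarithmic, so on that disc one has the splitting $\mathcal G_\theta(w)=-\frac1{2\pi}\ln|w|+\gamma_\theta(w)$ with $\gamma_\theta$ harmonic on $\{|w|<1\}$ (both $\mathcal G_\theta$ and $-\frac1{2\pi}\ln|\cdot|$ solve $-\Delta(\cdot)=\delta_0$ there, so their difference is harmonic). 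I will set $g_0(\theta):=\gamma_\theta(0)$, which manifestly depends on $\theta$ alone.

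Then $u(x)=\int_{B_\delta(0)}\mathcal G_\theta(x-y)\,dy$ (this function is $\theta$-quasiperiodic and solves $-\Delta u=\chi_0$, and the solution is unique), so
\[
\int_{B_\delta(0)}u\;=\;-\frac1{2\pi}\int_{B_\delta(0)}\!\!\int_{B_\delta(0)}\ln|x-y|\,dx\,dy\;+\;\int_{B_\delta(0)}\!\!\int_{B_\delta(0)}\gamma_\theta(x-y)\,dx\,dy .
\]
For the regular term the mean value property does the work: since $\delta<\frac12$, for fixed $y\in B_\delta(0)$ the set $\{x-y:x\in B_\delta(0)\}$ is a disc of radius $\delta$ centred at $-y$ whose closure lies in $\{|w|<1\}$, whence $\int_{B_\delta(0)}\gamma_\theta(x-y)\,dx=\pi\delta^2\gamma_\theta(-y)$; integrating this over $y\in B_\delta(0)$ and applying the mean value property once more gives the regular term exactly equal to $\pi\delta^2\cdot\pi\delta^2\,\gamma_\theta(0)=\pi^2\delta^4 g_0(\theta)$, with no remainder.

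For the singular term, rescaling $x=\delta\xi$, $y=\delta\eta$ produces $\delta^4\big(\pi^2\ln\delta+J\big)$ with the universal constant $J:=\int_{B_1(0)}\!\int_{B_1(0)}\ln|\xi-\eta|\,d\xi\,d\eta$, using $|B_1(0)|=\pi$. To evaluate $J$, observe that $\xi\mapsto V(\xi):=-\frac1{2\pi}\int_{B_1(0)}\ln|\xi-\eta|\,d\eta$ is the Newtonian potential of the uniform unit disc, hence radial and solving $-\Delta V=1$ in $B_1(0)$, so $V(\xi)=\frac14-\frac14|\xi|^2$ there (the constant fixed by $V(0)=-\frac1{2\pi}\int_{B_1(0)}\ln|\eta|\,d\eta=\frac14$); therefore $J=-2\pi\int_{B_1(0)}V=-2\pi\cdot\frac\pi8=-\frac{\pi^2}4$. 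Consequently the singular term equals $-\frac1{2\pi}\delta^4\big(\pi^2\ln\delta-\frac{\pi^2}4\big)=-\frac\pi2\delta^4\ln\delta+\frac\pi8\delta^4$, and adding the regular term yields precisely $\frac\pi8\delta^4-\pi^2\delta^4\big(\frac1{2\pi}\ln\delta-g_0(\theta)\big)$, as claimed.

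The only genuinely delicate step is the first one: pinning down the existence, translation invariance, and logarithmic-singularity structure of $\mathcal G_\theta$, and identifying $\{|w|<1\}$ as a region on which $\gamma_\theta$ is harmonic so that the mean value property is legitimate for every $\delta<\frac12$. All of this is standard potential theory, but it should be stated with care; once it is in place the remainder is the short computation above, and $g_0(\theta)$ is nothing but the Robin constant of the flat torus with $\theta$-quasiperiodic boundary conditions at the puncture.
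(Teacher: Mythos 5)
Your proof is correct, and it reaches the identity by a route that differs in organisation from the paper's. The paper guesses an explicit piecewise formula for $u$ --- equal to $\pi\delta^2 G^\theta$ outside $B_\delta(0)$ and to $-\tfrac{r^2}{4}+B+\pi\delta^2\bigl(G^\theta+\tfrac{1}{2\pi}\ln r-g_0(\theta)\bigr)$ inside --- verifies that it solves $-\Delta u=\chi_0$ by checking the interior equations and the continuity of $u$ and of $\partial u/\partial n$ across $\partial B_\delta(0)$, and then integrates the interior formula directly, using that the higher harmonics $g_n^{\pm}r^ne^{{\rm i}n\varphi}$ integrate to zero over the disc. You instead start from the convolution representation $u=\mathcal G_\theta * \chi_0$ and evaluate $\int_{B_\delta}u$ as a double integral, handling the harmonic remainder $\gamma_\theta$ by two applications of the mean value property (which is the same cancellation of higher harmonics in different clothing) and the logarithmic part by rescaling to the Newtonian potential of the unit disc. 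Your version spares you the construction and interface-matching of the explicit solution, at the price of justifying the convolution representation and the splitting $\mathcal G_\theta=-\tfrac{1}{2\pi}\ln|w|+\gamma_\theta$ with $\gamma_\theta$ harmonic on $\{|w|<1\}$ --- which you rightly flag as the step needing care, and which the paper also takes essentially for granted via the expansion \eqref{eq:limprop1.1}. One point worth making explicit in your write-up: your $g_0(\theta)=\gamma_\theta(0)$ coincides with the paper's $g_0(\theta)$ (the constant term in \eqref{eq:limprop1.1}), which matters because Proposition \ref{prop:pcspectlim2} subsequently bounds precisely that Robin constant from below; since the constant term of the harmonic expansion at the origin is the value there, the two definitions agree and your argument plugs into Corollary \ref{cor:pclimitspec} without change.
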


\begin{prop}
\label{prop:pcspectlim2}
Let $g_0(\theta)$ be given by Proposition \ref{prop:pcspectlim1}. Then $g_0(\theta)$ is uniformly bounded from above with respect to $\theta$, i.e. there exists a constant $A \in \mathbb{R}$ independent of $\theta$ such that
$$
g_0(\theta) \ge A, \quad \forall \theta \in (0 , 1)^2.
$$ 
\end{prop}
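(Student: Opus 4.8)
The plan is to reduce the statement to a uniform lower bound, over $\theta\in(0,1)^2$, for the \emph{regular part at the origin} of the $\theta$-quasiperiodic fundamental solution of $-\Delta$ on $Q$. Recall from the proof of Proposition~\ref{prop:pcspectlim1} that the solution of $-\Delta u=\chi_0$ is $u(y)=\int_{Q_0}\mathcal G_\theta(y-y')\,\mathrm dy'$, where $\mathcal G_\theta$ is the $\theta$-quasiperiodic Green's function of $-\Delta$ on $Q$, i.e.\ $\mathcal G_\theta(z)=\sum_{n\in\ZZ^2}\tfrac{e^{2\pi\mathrm i(n+\theta)\cdot z}}{4\pi^2|n+\theta|^2}$ (distributionally), which for $\theta\in(0,1)^2$ has the local expansion $\mathcal G_\theta(z)=-\tfrac1{2\pi}\ln|z|+g_\theta(z)$ near $z=0$ with $g_\theta$ harmonic, hence smooth, near $0$. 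That proof presents $g_0(\theta)$, up to a $\theta$-independent additive constant, as $\pi^{-2}$ times a spatial average of $g_\theta$ over a neighbourhood of $0$ (shrinking with $\delta$); so it suffices to show that there is an $M$ independent of $\theta$ with $g_\theta(z)\ge-M$ for $|z|$ small and all $\theta\in(0,1)^2$. This in turn will follow, with the local-uniform version coming for free, once $g_\theta(0)$ is bounded below uniformly in $\theta$.

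First I would dispose of the $\theta$ bounded away from the lattice. Fix a small $\rho\in(0,\tfrac14)$ and let $c_1,\dots,c_4$ be the four corners of $[0,1]^2$, the only points of $\ZZ^2$ in $\overline{[0,1)^2}$. On the compact set $\Theta_\rho:=[0,1]^2\setminus\bigcup_i B_\rho(c_i)$ one has $\dist(\theta,\ZZ^2)=\min_n|n+\theta|\ge\rho$, so the $\theta$-quasiperiodic Laplacian is uniformly non-degenerate there; consequently $\mathcal G_\theta$, with it $g_\theta$ and the value $g_\theta(0)$, depends continuously on $\theta\in\Theta_\rho$ (including along the edges of the square), and $|g_\theta(0)|\le M_1(\rho)$ by compactness.

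Next, near each corner $c_i$ I would project out the single resonant mode. For $\theta\in B_\rho(c_i)$ the unique $n_*\in\ZZ^2$ with $n_*=-c_i$ satisfies $|n_*+\theta|=|\theta-c_i|<\rho<\tfrac12$, while every other $n$ has $|n+\theta|\ge1-\rho>\tfrac12$, since at most one point of $\ZZ^2+\theta$ can lie in $B_{1/2}(0)$ (two would differ by a nonzero lattice vector of length $<1$). Set $S^i_\theta:=\mathcal G_\theta-\tfrac{e^{2\pi\mathrm i(n_*+\theta)\cdot z}}{4\pi^2|\theta-c_i|^2}$; then $-\Delta S^i_\theta=\delta_0-e^{2\pi\mathrm i(n_*+\theta)\cdot z}$ on $Q$, all remaining Fourier modes of $S^i_\theta$ carry Laplace-eigenvalue $\ge\pi^2$, and $S^i_\theta$ still has exactly the $-\tfrac1{2\pi}\ln|z|$ singularity. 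Hence $g_\theta(0)=\tfrac1{4\pi^2|\theta-c_i|^2}+s^i_\theta(0)$ with $s^i_\theta(0):=\lim_{z\to0}\bigl(S^i_\theta(z)+\tfrac1{2\pi}\ln|z|\bigr)$, and the family $\{S^i_\theta\}$ depends continuously on $\theta$ up to and including $\theta=c_i$ — as $\theta\to c_i$ it converges to the mean-zero $Q$-periodic Green's function — so $|s^i_\theta(0)|\le M_2(\rho)$ uniformly on $B_\rho(c_i)$. Since $\tfrac1{4\pi^2|\theta-c_i|^2}\ge0$ this gives $g_\theta(0)\ge-M_2(\rho)$ on $B_\rho(c_i)\cap(0,1)^2$. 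Combining the two regimes, $g_\theta(0)\ge-\max\{M_1(\rho),M_2(\rho)\}$ for every $\theta\in(0,1)^2$; feeding the local-uniform version of this into the representation of $g_0$ from Proposition~\ref{prop:pcspectlim1} yields $g_0(\theta)\ge A$ with $A$ independent of $\theta$ (and of $\delta$).

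The main obstacle is the uniform control of the reduced Green's function near the resonant corners: one must verify that after subtracting the single near-constant mode the family $S^i_\theta$ is genuinely regular — converging, as $\theta\to c_i$, in $C^0_{\mathrm{loc}}(Q\setminus\{0\})$ and in its local expansion at $0$ to the corresponding periodic object — so that $s^i_\theta(0)$ stays bounded; this is the ``regular perturbation once the resonance is projected out'' step and is where care is required (the nonperiodicity of $\ln|z|$ on $Q$ and the only conditional convergence of the mode sum must be handled via a cutoff and interior elliptic estimates), whereas away from $\ZZ^2$ it is the routine smooth dependence of resolvents on the quasimomentum. A secondary point is to confirm that the constant relating $g_0(\theta)$ to $g_\theta$ coming out of the proof of Proposition~\ref{prop:pcspectlim1} is genuinely $\theta$-independent — which it is, since it arises only from the scaling evaluation of $\int_{B_\delta}\!\int_{B_\delta}\bigl(-\tfrac1{2\pi}\ln|x-y|\bigr)\,\mathrm dx\,\mathrm dy$ — so that a lower bound on $g_\theta$ transfers to one on $g_0(\theta)$.
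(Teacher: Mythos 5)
Your proposal is correct, but it takes a genuinely different route from the paper. The paper introduces a spectral parameter $k$ and works with the regular part $g_0(\theta;k)$ of the Green's function of $-\Delta-k$: a Green's identity on $Q\setminus B_\delta(0)$, applied to $G$ and $\partial_k G$, yields $\partial_k g_0(\theta;k)=\int_Q|G|^2>0$, so $g_0(\theta)=g_0(\theta;0)\ge g_0(\theta;-1)$, and the latter is uniformly bounded below because $-\Delta+1$ is coercive for every $\theta$ — there is no resonance anywhere, so continuity in $\theta$ and compactness (of the torus of quasimomenta) suffice. You instead stay at $k=0$ and, near the lattice points where the quasiperiodic Laplacian degenerates, project out the single near-resonant Fourier mode; your key observation is that this mode contributes the nonnegative quantity $\bigl(4\pi^2\dist(\theta,\ZZ^2)^2\bigr)^{-1}$ to the regular part at the origin, while the reduced Green's function $S^i_\theta$ has a uniform spectral gap and hence a uniformly bounded regular part. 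Both proofs ultimately exploit the same fact — the dangerous contribution has a favourable sign — but the paper encodes it in the monotonicity of $g_0$ in $k$, whereas you isolate it explicitly in Fourier space. Your version is more quantitative: it shows $g_0(\theta)=\tfrac{1}{4\pi^2\dist(\theta,\ZZ^2)^2}+O(1)$, hence that $g_0(\theta)\to+\infty$ as $\theta\to 0$, which the paper's argument does not reveal. The price is the technical step you correctly flag, namely establishing that $S^i_\theta$, after the logarithm is removed by a cutoff, depends continuously (in $C^0$ near the origin) on $\theta$ up to $\theta=c_i$, so that $s^i_\theta(0)$ is uniformly bounded on $\overline{B_\rho(c_i)}$; this follows from the uniform lower bound $4\pi^2(1-\rho)^2$ on the remaining eigenvalues together with interior elliptic estimates, as you indicate. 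One small simplification available to you: the $g_0(\theta)$ of Proposition \ref{prop:pcspectlim1} is by definition the constant term in the local expansion \eqref{eq:limprop1.1}, i.e.\ exactly your $g_\theta(0)$, so no averaging or transfer step is needed to convert your bound into the statement of the proposition.
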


\begin{cor}
\label{cor:pclimitspec}
There exists a $\delta_0$ and a constant $C > 0$ such that for all $\delta < \delta_0$, and all $\theta \in (0, 1)^2$
$$
\int_{B_\delta(0)} u \ge - C \delta^4 \ln{\delta}.
$$
\end{cor}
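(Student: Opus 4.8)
The plan is to read off the claim directly from the two preceding propositions, since between them they determine $\int_{B_\delta(0)}u$ up to a quantity that is bounded uniformly in $\theta$. By Proposition~\ref{prop:pcspectlim1}, for every $\theta\in(0,1)^2$ and every $\delta<\tfrac12$ we have the exact identity \eqref{eq:2dgapprop2}, which I would rewrite as
$$
\int_{B_\delta(0)} u \;=\; \delta^4\left(\frac{\pi}{8} \;+\; \pi^2 g_0(\theta) \;-\; \frac{\pi}{2}\ln\delta\right).
$$
Since $\delta<\tfrac12<1$ we have $\ln\delta<0$, so $-\tfrac{\pi}{2}\delta^4\ln\delta$ is positive and is the leading-order term as $\delta\to0$; the target bound $-C\delta^4\ln\delta$ is also positive and of order $\delta^4\lvert\ln\delta\rvert$.

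First I would invoke Proposition~\ref{prop:pcspectlim2} to replace the $\theta$-dependent term by a fixed lower bound: with $A$ as in that proposition, set $A':=\tfrac{\pi}{8}+\pi^2 A\in\RR$, so that for all $\theta\in(0,1)^2$ and all $0<\delta<\tfrac12$,
$$
\int_{B_\delta(0)} u \;\ge\; \delta^4\left(A' - \frac{\pi}{2}\ln\delta\right).
$$
Next I would fix any constant $C$ with $0<C<\tfrac{\pi}{2}$ (for definiteness $C=\tfrac{\pi}{4}$). Dividing the desired inequality $\int_{B_\delta(0)}u\ge -C\delta^4\ln\delta$ by $\delta^4>0$ and using the previous display, it suffices to have $A' - \tfrac{\pi}{2}\ln\delta \ge -C\ln\delta$, i.e.\ $\bigl(\tfrac{\pi}{2}-C\bigr)\ln\delta \le A'$. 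Since $\tfrac{\pi}{2}-C>0$ and $\ln\delta\to-\infty$ as $\delta\to0^+$, the left-hand side tends to $-\infty$, so there is $\delta_0\in(0,\tfrac12]$ — explicitly one may take $\delta_0=\min\{\tfrac12,\exp(\tfrac{2A'}{\pi-2C})\}$ — such that the inequality holds for all $0<\delta<\delta_0$, uniformly in $\theta$. This gives exactly the statement of the corollary.

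The only thing to watch is the choice of $C$: because the right-hand side $-C\delta^4\ln\delta$ is itself of order $\delta^4\lvert\ln\delta\rvert$, taking $C\ge\tfrac{\pi}{2}$ would let it catch up with or overtake the leading term $-\tfrac{\pi}{2}\delta^4\ln\delta$ of $\int_{B_\delta(0)}u$ and the inequality would fail for small $\delta$; keeping $C$ strictly below the coefficient $\tfrac{\pi}{2}$ of the logarithmic term is precisely what makes the bounded contribution $A'\delta^4$ negligible. Beyond this bookkeeping there is no genuine obstacle here — the substantive work is contained in Propositions~\ref{prop:pcspectlim1} and \ref{prop:pcspectlim2}.
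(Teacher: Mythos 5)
Your proof is correct and is exactly the argument the paper leaves implicit: the corollary is stated without proof as an immediate consequence of Propositions~\ref{prop:pcspectlim1} and~\ref{prop:pcspectlim2}, and your reconstruction—substitute the uniform lower bound $g_0(\theta)\ge A$ into the exact formula \eqref{eq:2dgapprop2}, then absorb the $O(\delta^4)$ term into the $\delta^4\lvert\ln\delta\rvert$ term by taking $C$ strictly below $\pi/2$ and $\delta_0$ small—is the natural and intended one. One small remark worth keeping in mind: the statement of Proposition~\ref{prop:pcspectlim2} says ``bounded from above'' but the displayed inequality $g_0(\theta)\ge A$ is a lower bound; you correctly used the displayed inequality, which is what the corollary needs.
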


\begin{proof}[Proof of Proposition  \eqref{prop:pcspectlim1}]
Denote by $G^\theta(x)$ the $\theta$-quasi periodic Green's function for $- \Delta$ on $Q$ with its singularity at the origin. It is known that, by isolating the singularity and expanding in Fourier series in $\theta$,
\begin{equation}
\label{eq:limprop1.1}
G^\theta(y) = -\frac{1}{2\pi} \ln{r} + g_0(\theta) + \sum^\infty_{n=1} g_n^{\pm} r^n e^{ {\rm i} n \varphi}, \quad \text{for $y \in B_\delta(0)$}.
\end{equation}
Here $ g_0(\theta)$ and $g^\pm_n$ are known constants, $r,\varphi$ are the polar coordinates. We shall now construct an explicit solution to \eqref{eq:2dgapprop1}. Define $u$ as follows:
\begin{equation}
\label{eq:limprop1.2}
u = \left\{ \begin{array}{lr} \pi \delta^2 G^\theta(x), & \text{in $Q \backslash B_\delta(0)$} \\ - \frac{r^2}{4} + B + \pi \delta^2 \left(  G^\theta(x) + \frac{1}{2 \pi} \ln{r} - g_0(\theta) \right), & \text{in  $B_\delta(0)$.} \end{array} \right.
\end{equation}
The constant $B$ is chosen such that $u \in H^1_\theta(Q)$, i.e.
$$
B = \frac{\delta^2}{4} - \pi \delta^2 \left( \frac{1}{2 \pi} \ln{\delta} - g_0(\theta) \right).
$$
We will now show that $u$ solves \eqref{eq:2dgapprop1}. For fixed $\varphi \in H^1_{\#}(Q)$
\begin{equation}
\label{eq:limprop1.3}
\int_{Q} \nabla u \cdot \nabla \varphi  = - \int_{Q \backslash B_\delta(0)}  \Delta u \varphi -\int_{ B_\delta(0)}  \Delta u \varphi  + \int_{\partial{B_\delta(0)}} \left[ \pd{u}{n}  \right] \varphi,
\end{equation}
where $n$ is unit outward normal to $B_\delta(0)$, $\pd{u}{n} := \nabla u \cdot n$ and $\left[ \pd{u}{n}  \right]$ is the jump across the interface $\partial{B_\delta(0)}$. From \eqref{eq:limprop1.1} and \eqref{eq:limprop1.2} we notice
\begin{gather}
\label{eq:limprop1.4}
- \int_{Q \backslash B_\delta(0)}  \Delta u \ \varphi  = 0,\\ \nonumber \\
\label{eq:limprop1.5}
-\int_{ B_\delta(0)}  \Delta u \ \varphi  = - \int_{ B_\delta(0)} \Delta( - \frac{r^2}{4}) \varphi = \int_{ B_\delta(0)}\varphi = \int_{ Q} \chi_0 \ \varphi, \\ \nonumber \\
 \left[ \pd{u}{n}  \right]  = \left. \pd{}{r}\left( \pi \delta^2 G^\theta -\left[ - \frac{r^2}{4} + B + \pi \delta^2 \left( G^\theta + \frac{1}{2 \pi} \ln{r} -g_0(\theta) \right) \right] \right) \right\vert_{r=\delta} \nonumber \\  = \left. \pd{}{r} \left[ \frac{r^2}{4} -  \frac{\delta^2}{2} \ln{r} \right]\right\vert_{r=\delta} = 0. \label{eq:limprop1.6}
\end{gather}
Equations \eqref{eq:limprop1.4}-\eqref{eq:limprop1.6} and \eqref{eq:limprop1.3} imply that $u$  solves \eqref{eq:2dgapprop1}. It remains to show \eqref{eq:2dgapprop2}:
\begin{align*}
\int_{B_\delta(0)} u &= \int_{B_\delta(0)} \left[ - \frac{r^2}{4} + B + \pi \delta^2 \left(  G^\theta(x) + \frac{1}{2 \pi} \ln{r} - g_0(\theta) \right) \right] \\ 
& = \int_{B_\delta(0)} \left[ -\frac{r^2}{4} + B \right] + \pi \delta^2 \underbrace{\int_{B_\delta (0)} \sum_{n=1}^\infty g_n^\pm r^n e^{ {\rm i} n \varphi}}_{ = 0} \\
\\ & = - \frac{\pi}{8}\delta^4 + \pi \delta^2 B  = \frac{\pi \delta^4}{8} - \pi^2 \delta^4 \left( \frac{1}{2 \pi} \ln{\delta} - g_0(\theta) \right).
\end{align*} 
\end{proof}

\begin{proof}[Proof of Proposition \eqref{prop:pcspectlim2}]
Consider $G(\theta;k;x)) \in H^1_{\theta}(Q \backslash \left\{ 0 \right\})$ such that 
$$
G(\theta;k;x) = - \frac{1}{2\pi} \ln{r} + g_0(\theta,k) + o(r\ln{r}) \quad \text{as $r \rightarrow 0$},
$$
and
\begin{equation}
\label{eq:limprop2.1}
\left( - \Delta -k \right)G = 0, \quad \text{in $Q \backslash \left\{ 0 \right\} $}.
\end{equation}
This function is well defined, for example, if $k < \vert \theta \vert^2$, which could be seen from explicit analysis of the eigenvalues of $- \Delta$. We aim to show that there exists a constant $A$ such that
\begin{equation}
\label{exlimjunk}
g_0(\theta,0) \ge A, \quad \forall \theta \in [0,1)^2,
\end{equation}
since $G(\theta ; 0 ; y)$ is nothing more than $G^\theta(y)$, the Green's function used in the proof of Proposition \ref{prop:pcspectlim1}.  Note that, for fixed negative $k$, e.g. $k= -1$, $g_0(\theta ; -1)$ is continuous with respect to $\theta \in [ 0 , 1)^2$ and therefore is uniformly bounded from below by some constant $A$, i.e.
$$
g_0(\theta,-1) \ge A, \quad \forall \theta \in [0, 1)^2,
$$
Moreover, $g_0(\theta ; k)$ is continuous with respect to $k \in [-1,0]$. Therefore, to prove \eqref{exlimjunk} it is sufficient to prove
\begin{equation}
\label{eq:limprop2.4}
g'_0(\theta;k) > 0, \quad k \in [-1,0),
\end{equation}
where the prime, $'$, denotes differentiation with respect to $k$. 

We shall now prove \eqref{eq:limprop2.4}. Differentiating \eqref{eq:limprop2.1} with respect to $k$ gives
\begin{gather}
\left( - \Delta -k \right)G^{'} - G = 0, \quad \text{in $Q \backslash\{ 0 \}$} \label{eq:limprop2.2}\\
\intertext{and} 
G^{'}(\theta;k;x) =  g'_0(\theta,k) + \ldots \quad \text{as $r \rightarrow 0$}. \label{eq:limprop2.3}
\end{gather} 
Multiplying \eqref{eq:limprop2.2} by $\overline{G}$ and integrating over $Q \backslash B_\delta(0)$, for sufficiently small $\delta$, gives
\begin{align*}
\int_{Q \backslash B_\delta(0)} G \overline{G} & = \int_{Q \backslash B_\delta(0)} \left( - \Delta -k \right)G^{'} \overline{G} 
=  \int_{\partial{B_\delta(0)}} \left( \pd{}{n} G^{'} \overline{G} - G^{'} \pd{}{n}\overline{G} \right) \\
 & =  \int_{\partial{B_\delta(0)}} \left[ \left( O(1) + \ldots \right) \left( - \frac{1}{2\pi} \ln{r} + \ldots \right) - \left( g'_0(\theta,k) + \ldots \right)\left( - \frac{1}{2\pi r} + \ldots \right) \right] \\
 & =  \int_{\partial{B_\delta(0)}} g'_0 (\theta,k) \frac{1}{2 \pi r} + o(\delta \ln{\delta}),
\end{align*}
passing to the limit $\delta \rightarrow 0$ gives
$$
\int_{Q} \vert G \vert^2 =  g'_0(\theta,k).
$$
This proves \eqref{eq:limprop2.4}.
\end{proof}


\section{Appendix}
Here we formulate a generalisation of the Weyls decomposition,  first introduced in \cite{VPSIVK}, and prove the tensor  $a^{(1)}$ given in matrix representation form as
\begin{align*}
a^{(1)}(y) & = \chi_1(y) \left( \begin{matrix}
1 & 0 & 0 & 1 \\ 0 & 1 & -1 & 0 \\ 0 & -1 & 1 & 0 \\ 1 & 0 & 0 & 1
\end{matrix}
\right),
\end{align*}
satisfies the associated assumptions of the decomposition theorem.
Recalling the spaces
\begin{gather*}
V = \left\{ v \in [H^1_{\#}(Q)]^2 : \text{ $\div{v} = 0$ and  $\div{v^{\perp}} = 0$ in  $Q_1$}\right\}, \\[5pt]
W\,:=\,\left\{\, \psi\,\in\,\left[L^2(Q)\right]^{2\times
2}\,\left\vert \, \mbox{
div}\left(\,\left(a^{(1)}\right)^{1/2}\psi\,\right)\,=\,0
\ \mbox{ in  }
\left[H_\#^{-1}(Q)\right]^2\,\right.\right\},
\end{gather*}
the Weyl's decomposition, associated with $a^{(1)}$, (see \cite{VPSIVK,mythesis} for the proof) is as follows
\begin{lem}[Generalised Weyl's decomposition.]
\label{lem:pdhom3}
Assume that there exists a constant $C>0$ such that for any given $u\in \left[H_\#^1(Q)\right]^2$
there exists $v\in V$ with
\begin{equation}
\norm{u - v}_{[H^1_\#(Q)]^2}\,\leq\, C\,\norm{a^{(1)} \nabla u}_{[L^2(Q)]^{2\times2}}.
\label{eq:pdhom14}
\end{equation} 
\vspace{.1in}
Suppose $\eta  \in \left[L^2(Q)\right]^{2\times 2}$ is orthogonal in $\left[L^2(Q)\right]^{2\times 2}$ to $W$,  i.e.
\begin{equation}
\sum_{i,j=1}^2\int_Q
\eta_{ij}\psi_{ij} =\,0, \ \ \ \forall \,
\psi\,\in\,W. \label{eq:pdhom16}
\end{equation}
Then, there exists $u_1\in\,\left[H_\#^{1}(Q)\right]^2$ such that
\begin{equation}
\eta\,=\,\left(a^{(1)}\right)^{1/2}\nabla u_1.
\label{eq:pdhom17}
\end{equation}
Such a $u_1$ is determined  uniquely up to any function from $V$, in particular $u_1$ is unique in $V^\bot$ the orthogonal complement of $V$ in $[H^1(Q)]^2$.
\end{lem}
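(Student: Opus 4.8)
The plan is to recognise Lemma~\ref{lem:pdhom3} as the assertion that a certain ``gradient'' subspace of $[L^2(Q)]^{2\times 2}$ is closed and coincides with the orthogonal complement of $W$. Put
\[
G:=\bigl\{\,(a^{(1)})^{1/2}\nabla u\ :\ u\in[H^1_\#(Q)]^2\,\bigr\}\subset[L^2(Q)]^{2\times 2}.
\]
The first step, which needs no hypotheses, is to identify $W=G^{\perp}$. For $u\in[H^1_\#(Q)]^2$ and $\psi\in[L^2(Q)]^{2\times 2}$ one has, by the major symmetry of the tensor $(a^{(1)})^{1/2}$ and $Q$-periodicity, $\langle (a^{(1)})^{1/2}\nabla u,\psi\rangle=\langle \nabla u,(a^{(1)})^{1/2}\psi\rangle=-\langle u,\,\div((a^{(1)})^{1/2}\psi)\rangle_{[H^1_\#],[H^{-1}_\#]}$, and by \eqref{eq:pdhom10} this vanishes for every such $u$ exactly when $\psi\in W$. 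Hence $G^{\perp}=W$, so $\overline G=G^{\perp\perp}=W^{\perp}$; in particular any $\eta$ orthogonal to $W$, as in \eqref{eq:pdhom16}, lies in $\overline G$. Thus the whole claim \eqref{eq:pdhom17} reduces to showing that $G$ is already closed in $[L^2(Q)]^{2\times 2}$, and it is here that the approximation hypothesis \eqref{eq:pdhom14} enters.

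To prove the closedness of $G$, take $u_n\in[H^1_\#(Q)]^2$ with $(a^{(1)})^{1/2}\nabla u_n\to\eta$ in $[L^2(Q)]^{2\times2}$; I want $\eta\in G$. By \eqref{eq:pdhom14} choose $v_n\in V$ with $\norm{u_n-v_n}_{[H^1_\#(Q)]^2}\le C\,\norm{a^{(1)}\nabla u_n}_{[L^2(Q)]^{2\times2}}$ and set $w_n:=u_n-v_n$. Two elementary facts about the specific $a^{(1)}$ of \eqref{emdegentensor1}, both read off its matrix form, are used: (i) on $Q_1$ both $a^{(1)}$ and $(a^{(1)})^{1/2}$ are nonzero scalar multiples of one and the same symmetric matrix, and both vanish on $Q_0$, so they have the same kernel and the two quantities $\norm{a^{(1)}\nabla u_n}$, $\norm{(a^{(1)})^{1/2}\nabla u_n}$ are comparable; the latter being convergent, $w_n$ is therefore bounded in $[H^1_\#(Q)]^2$; (ii) $a^{(1)}\nabla v=0$ is equivalent to $\div v=\div v^{\perp}=0$ in $Q_1$, i.e.\ to $v\in V$ (see \eqref{emspaceV}), so $(a^{(1)})^{1/2}\nabla v_n=0$ and hence $(a^{(1)})^{1/2}\nabla u_n=(a^{(1)})^{1/2}\nabla w_n$. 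Extracting a subsequence, $w_n\rightharpoonup u_1$ weakly in $[H^1_\#(Q)]^2$, whence $(a^{(1)})^{1/2}\nabla w_n\rightharpoonup(a^{(1)})^{1/2}\nabla u_1$ weakly in $[L^2(Q)]^{2\times2}$; comparing this with the strong convergence $(a^{(1)})^{1/2}\nabla w_n=(a^{(1)})^{1/2}\nabla u_n\to\eta$ and using uniqueness of limits yields $\eta=(a^{(1)})^{1/2}\nabla u_1\in G$. Hence $G=\overline G=W^{\perp}$, which is \eqref{eq:pdhom17}.

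Uniqueness follows from fact (ii): if $(a^{(1)})^{1/2}\nabla u_1=(a^{(1)})^{1/2}\nabla\tilde u_1$ then $u_1-\tilde u_1$ lies in the kernel of $u\mapsto(a^{(1)})^{1/2}\nabla u$ on $[H^1_\#(Q)]^2$, which is precisely $V$; since $V$ is a closed subspace of the Hilbert space $[H^1_\#(Q)]^2$, there is a unique representative of $u_1+V$ in $V^{\perp}$. I expect the only genuine difficulty to be the closedness of $G$: because $a^{(1)}$ degenerates (it vanishes identically on $Q_0$), the bound on $\norm{(a^{(1)})^{1/2}\nabla u_n}_{[L^2]^{2\times2}}$ alone gives no compactness for $u_n$, and the whole role of \eqref{eq:pdhom14} — which for the present tensor is exactly what the rest of the Appendix verifies — is to let one peel off the ``bad'' part $v_n\in V$ and keep a bounded remainder. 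Minor technical points to be handled are the justification of the integration by parts in step one for $\psi$ merely in $[L^2(Q)]^{2\times2}$ (interpreting $\div((a^{(1)})^{1/2}\psi)$ in $[H^{-1}_\#(Q)]^2$ and approximating by smooth periodic matrix fields) and the bookkeeping of real versus complex inner products, neither of which affects the structure of the argument.
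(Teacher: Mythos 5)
Your proposal is correct. Note that the paper itself does not prove Lemma \ref{lem:pdhom3} — it defers to \cite{VPSIVK,mythesis} — and your argument is essentially the standard one given there: identify $W$ as the $L^2$-orthogonal complement of the range $G=\{(a^{(1)})^{1/2}\nabla u : u\in[H^1_\#(Q)]^2\}$ (so that $\eta\perp W$ forces $\eta\in\overline{G}$), and then use hypothesis \eqref{eq:pdhom14}, together with the facts that $a^{(1)}$ and $(a^{(1)})^{1/2}$ are proportional on $Q_1$ and that their common kernel on gradients is exactly $V$, to extract a bounded sequence $w_n=u_n-v_n$ and conclude by weak compactness that $G$ is closed; the uniqueness statement follows from the same kernel identification.
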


We shall now demonstrate that $a^{(1)}$ does indeed satisfy the assumptions of Lemma \ref{lem:pdhom3}. As
$$
\norm{a^{(1)} \nabla v}_{[L^2(Q)]^{2\times2}} = \norm{\div{u}}_{L^2(Q_1)}^2 + \norm{\div{u^{\perp}}}_{L^2(Q_1)}^2,
$$
 we prove the following result.
\begin{lem}
\label{emcoercivelem}
There exists a constant $C>0$ such that for any given $u \in [H^1_{\#}(Q)]^2$ there exists $v\in V$ with
\begin{equation}
\label{emcoercive}
\norm{u - v}_{[H^1_\#(Q)]^2}^2 \le C \left( \norm{\div{u}}_{L^2(Q_1)}^2 + \norm{\div{u^{\perp}}}_{L^2(Q_1)}^2 \right).
\end{equation}
\end{lem}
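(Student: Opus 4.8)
The plan is to prove \eqref{emcoercive} by an explicit construction: given $u\in[H^1_\#(Q)]^2$, subtract a field $w$ that corrects precisely the two divergence constraints on $Q_1$ and is controlled in $[H^1_\#(Q)]^2$ by $\norm{\div u}_{L^2(Q_1)}$ and $\norm{\div u^\perp}_{L^2(Q_1)}$ alone. Following the structure of Lemma \ref{lem:spcom2}, I would seek $w$ in the form $w=\nabla a+\nabla^{\perp} b$ with $a,b\in H^2_{\#}(Q)$, because for such fields the two constraints decouple into independent Poisson equations: using $\div(\nabla^{\perp}a)=0$ and $(\nabla^{\perp}b)^{\perp}=-\nabla b$, one has $\div w=\Delta a$ and $\div w^{\perp}=-\Delta b$.

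First I would set $g_1:=\div u$, $g_2:=\div u^\perp$, both in $L^2(Q)$, and choose $f_1,f_2\in L^2(Q)$ that agree with $g_1,g_2$ a.e.\ on $Q_1$, are constant on $Q_0$, and have zero mean over $Q$, namely $f_i:=\chi_1 g_i-|Q_0|^{-1}\big(\int_{Q_1}g_i\big)\chi_0$; by the Cauchy--Schwarz inequality these satisfy $\norm{f_i}_{L^2(Q)}\le C\,\norm{g_i}_{L^2(Q_1)}$ with $C$ depending only on $|Q_0|$ and $|Q_1|$. Next I would solve the periodic problems $\Delta a=f_1$, $-\Delta b=f_2$ with $\mv{a}=\mv{b}=0$ (uniquely solvable since $\mv{f_1}=\mv{f_2}=0$); elliptic regularity, e.g.\ Lemma \ref{lem:spcom3.1}(ii) specialised to $\theta=0$, gives $\norm{a}_{H^2(Q)}+\norm{b}_{H^2(Q)}\le C\big(\norm{f_1}_{L^2(Q)}+\norm{f_2}_{L^2(Q)}\big)\le C\big(\norm{\div u}_{L^2(Q_1)}+\norm{\div u^\perp}_{L^2(Q_1)}\big)$.

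Then I would put $w:=\nabla a+\nabla^{\perp}b\in[H^1_\#(Q)]^2$ and $v:=u-w$. On $Q_1$ one has $\div v=g_1-\Delta a=g_1-f_1=0$ and $\div v^\perp=g_2+\Delta b=g_2-f_2=0$, so $v\in V$; and $\norm{u-v}_{[H^1_\#(Q)]^2}=\norm{w}_{[H^1_\#(Q)]^2}\le\norm{a}_{H^2(Q)}+\norm{b}_{H^2(Q)}$. Squaring and using $(s+t)^2\le 2(s^2+t^2)$ yields \eqref{emcoercive}.

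I do not expect a substantial obstacle here; the two points needing care are the mean-value compatibility condition for the periodic Laplacian (the reason the corrections $f_i$ are adjusted by a constant supported in $Q_0$) and the observation that the ansatz $w=\nabla a+\nabla^{\perp}b$ decouples the constraints, so that the $H^2$-gain of elliptic regularity bounds $w$ in $H^1$ by the $L^2(Q_1)$-data rather than by the full $H^1$-norm of $u$ — this is exactly what makes \eqref{emcoercive} a genuine ``coercivity modulo $V$'' statement and, since $\norm{a^{(1)}\nabla u}^2_{[L^2(Q)]^{2\times 2}}$ is comparable to $\norm{\div u}^2_{L^2(Q_1)}+\norm{\div u^\perp}^2_{L^2(Q_1)}$, verifies hypothesis \eqref{eq:pdhom14} of Lemma \ref{lem:pdhom3} for $a^{(1)}$.
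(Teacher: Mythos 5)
Your construction is correct and is essentially identical to the paper's: the paper also corrects $u$ by a field of the form $\nabla w_1+\nabla^{\perp}w_2$ (written there as $z=(w_{1,1}-w_{2,2},\,w_{1,2}+w_{2,1})$), where $\Delta w_1$ and $-\Delta w_2$ equal $\chi_1\div u$ and $\chi_1\div u^{\perp}$ adjusted by constants supported in $Q_0$ to satisfy the mean-value compatibility condition, and then invokes the same periodic $H^2$ regularity estimate (Lemma \ref{lem:spcom3.1}(ii)). No gaps.
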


\begin{proof}Suppose $u \in [H^1_{\#}(Q)]^2$. Let us construct a sufficient function $v \in V$ such that the difference $\norm{u - v}_{[H^1(Q)]^2}^2$ is bounded by the right hand side of \eqref{emcoercive}.

Let $z : =(w_{1,1} - w_{2,2}, w_{1,2} + w_{2,1})$ for $w_1, w_2 \in H^1_{\#}(Q)$ are the solutions to
\begin{align}
\label{coercivelemproof1}
 \Delta w_1 &=  \chi_1 \nabla \cdot u - \chi_0 c_1, & - \Delta w_2 & = \chi_1 \nabla^\perp \cdot u - \chi_0 c_2,
\end{align}
such that $\mv{w_1} = \mv{w_2}=0$. Here $\chi_1$ is the characteristic function of $Q_1$ and $c_1,c_2$ are chosen such that the existence of $w_1$ and $w_2$ is guaranteed, i.e.
\begin{align}
\label{coercivelemproof2}
c_1 = \frac{1}{\vert Q_0 \vert} \int_{Q_1} \nabla \cdot u \ \mathrm{d}y, && c_2 = \frac{1}{\vert Q_0 \vert} \int_{Q_1} \nabla^\perp \cdot u \ \mathrm{d}y.
\end{align}
Then, by Lemma \ref{lem:spcom3.1}(ii), $w_1,w_2$ belong to $H^2_{\#}(Q)$ which implies $z \in [H^1_{\#}(Q)]^2$. Furthermore,
\begin{gather*}
\nabla \cdot z = w_{1,11} - w_{2,21} + w_{1,22} + w_{2,12} = \Delta w_1 =  \chi_1 \nabla \cdot u - \chi_0 c_1 , \\
\intertext{and}
\nabla^{\perp} \cdot z =  - w_{1,21} - w_{2,11} + w_{1,12} - w_{2,22} = - \Delta w_1 =  \chi_1 \nabla^{\perp} \cdot u - \chi_0 c_2.
\end{gather*}
Therefore $v := u - z$ belongs to $V$. Using Lemma \ref{lem:spcom3.1}(ii), \eqref{coercivelemproof1},  \eqref{coercivelemproof2} and cauchy-schwarz inequality we find that
\begin{flalign*}
\norm{u - v}_{[H^1(Q)]^2}^2 & = \norm{z}_{[H^1(Q)]^2}^2 \le  \norm{w_1}_{H^2(Q)}^2 + \norm{w_2}_{H^2(Q)}^2  \\
& \le C \left( \norm{\chi_1 \nabla \cdot u - c_1}_{L^2(Q)}^2 + \norm{\chi_1 \nabla^{\perp} \cdot u - c_2}_{L^2(Q)}^2 \right) \\
& = C \left( \int_{Q_1} \vert \nabla \cdot u \vert^2 + c_1^2 \vert Q_0 \vert +  \int_{Q_1} \vert \nabla^{\perp} \cdot u \vert^2 + c_2^2 \vert Q_0 \vert \right) \\
& \le C \left(  \norm{\nabla \cdot u}_{L^2(Q_1)}^2 + \norm{\nabla^{\perp} \cdot u}_{L^2(Q_1)}^2 \right).
\end{flalign*}
\end{proof}

\section{References}

\end{document}